%% 
%% Copyright 2007-2020 Elsevier Ltd
%% 
%% This file is part of the 'Elsarticle Bundle'.
%% ---------------------------------------------
%% 
%% It may be distributed under the conditions of the LaTeX Project Public
%% License, either version 1.2 of this license or (at your option) any
%% later version.  The latest version of this license is in
%%    http://www.latex-project.org/lppl.txt
%% and version 1.2 or later is part of all distributions of LaTeX
%% version 1999/12/01 or later.
%% 
%% The list of all files belonging to the 'Elsarticle Bundle' is
%% given in the file `manifest.txt'.
%% 

%% Template article for Elsevier's document class `elsarticle'
%% with numbered style bibliographic references
%% SP 2008/03/01
%%
%% 
%%
%% $Id: elsarticle-template-num.tex 190 2020-11-23 11:12:32Z rishi $
%%
%%
\documentclass[3p,authoryear,preprint,12pt]{elsarticle}

%% Use the option review to obtain double line spacing
%% \documentclass[authoryear,preprint,review,12pt]{elsarticle}

%% Use the options 1p,twocolumn; 3p; 3p,twocolumn; 5p; or 5p,twocolumn
%% for a journal layout:
%% \documentclass[final,1p,times]{elsarticle}
%% \documentclass[final,1p,times,twocolumn]{elsarticle}
%% \documentclass[final,3p,times]{elsarticle}
%% \documentclass[final,3p,times,twocolumn]{elsarticle}
%% \documentclass[final,5p,times]{elsarticle}
%% \documentclass[final,5p,times,twocolumn]{elsarticle}

%% For including figures, graphicx.sty has been loaded in
%% elsarticle.cls. If you prefer to use the old commands
%% please give \usepackage{epsfig}

%% The amssymb package provides various useful mathematical symbols
\usepackage{amssymb}
%% The amsthm package provides extended theorem environments
%% \usepackage{amsthm}

%% The lineno packages adds line numbers. Start line numbering with
%% \begin{linenumbers}, end it with \end{linenumbers}. Or switch it on
%% for the whole article with \linenumbers.
%% \usepackage{lineno}

%%%%%%%%%%%%%%%%%%%%%%%%%%%%%%%%%%%%%%% Added by ourselves

\usepackage{algorithm}
\usepackage{algorithmic}
\usepackage{amsmath}\allowdisplaybreaks

\newcommand{\ap}{\mathrm{ap}} % activation probability
\newcommand{\E}{\mathbb{E}} % expectation
\newcommand{\OPT}{{\it S^*}}
\newcommand{\ALG}{{\it S^A}}
\newcommand{\cond}{\,|\,}  % this vertical line seems to have better spacing than \mid, or simply |

\newtheorem{theorem}{Theorem}
\newtheorem{assumption}{Assumption}
\newtheorem{claim}{Claim}
\newtheorem{corollary}{Corollary}
\newtheorem{lemma}{Lemma}
\newproof{proof}{Proof}

\DeclareMathOperator*{\argmax}{argmax}

\journal{}

\begin{document}

\begin{frontmatter}

%% Title, authors and addresses

%% use the tnoteref command within \title for footnotes;
%% use the tnotetext command for theassociated footnote;
%% use the fnref command within \author or \address for footnotes;
%% use the fntext command for theassociated footnote;
%% use the corref command within \author for corresponding author footnotes;
%% use the cortext command for theassociated footnote;
%% use the ead command for the email address,
%% and the form \ead[url] for the home page:
%% \title{Title\tnoteref{label1}}
%% \tnotetext[label1]{}
%% \author{Name\corref{cor1}\fnref{label2}}
%% \ead{email address}
%% \ead[url]{home page}
%% \fntext[label2]{}
%% \cortext[cor1]{}
%% \affiliation{organization={},
%%             addressline={},
%%             city={},
%%             postcode={},
%%             state={},
%%             country={}}
%% \fntext[label3]{}

\title{Network Inference and Influence Maximization from Samples\tnoteref{t1,t2}}
\tnotetext[t1]{A preliminary version containing only results about the IC model appears in ICML 2021.}
\tnotetext[t2]{This work was supported in part by the National Natural Science Foundation of China Grants No.~61832003, 61872334, the Strategic Priority Research Program of Chinese Academy of Sciences under Grant No.~XDA27000000, the 973 Program of China Grant No.~2016YFB1000201, K.C.~Wong Education Foundation.}

%% use optional labels to link authors explicitly to addresses:
%% \author[label1,label2]{}
%% \affiliation[label1]{organization={},
%%             addressline={},
%%             city={},
%%             postcode={},
%%             state={},
%%             country={}}
%%
%% \affiliation[label2]{organization={},
%%             addressline={},
%%             city={},
%%             postcode={},
%%             state={},
%%             country={}}

\author[inst1,inst2]{Zhijie Zhang\corref{cor1}}
\ead{zhangzhijie@ict.ac.cn}
\author[inst3]{Wei Chen}
\ead{weic@microsoft.com}
\author[inst1,inst2]{Xiaoming Sun}
\ead{sunxiaoming@ict.ac.cn}
\author[inst1,inst2]{Jialin Zhang}
\ead{zhangjialin@ict.ac.cn}

\cortext[cor1]{Corresponding author}

\affiliation[inst1]{organization={Institute of Computing Technology, Chinese Academy of Sciences},%Department and Organization
            % addressline={Address One}, 
            city={Beijing},
            postcode={100080}, 
            % state={Beijing},
            country={China}}
            
\affiliation[inst2]{organization={University of Chinese Academy of Sciences},%Department and Organization
            % addressline={Address One}, 
            city={Beijing},
            postcode={100080}, 
            % state={Beijing},
            country={China}}

\affiliation[inst3]{organization={Microsoft Research Asia},%Department and Organization
            % addressline={Address Two}, 
            city={Beijing},
            postcode={100080}, 
            % state={State Two},
            country={China}}

\begin{abstract}
%% Text of abstract
Influence maximization is the task of selecting a small number of seed nodes in a social network to maximize the influence spread from these seeds. It has been widely investigated in the past two decades. In the canonical setting, the social network and its diffusion parameters are given as input. In this paper, we consider the more realistic sampling setting where the network is unknown and we only have a set of passively observed cascades that record the sets of activated nodes at each diffusion step. We study the task of influence maximization from these cascade samples (IMS) and present constant approximation algorithms for it under mild conditions on the seed set distribution. To achieve the optimization goal, we also provide a novel solution to the network inference problem, that is, learning diffusion parameters and the network structure from the cascade data. Compared with prior solutions, our network inference algorithms require weaker assumptions and do not rely on maximum-likelihood estimation and convex programming. Our IMS algorithms enhance the learning-and-then-optimization approach by allowing a constant approximation ratio even when the diffusion parameters are hard to learn, and we do not need any assumption related to the network structure or diffusion parameters.
\end{abstract}

%%%Graphical abstract
%\begin{graphicalabstract}
%\includegraphics[scale=0.6]{OPSS-graphical abstract.pdf}
%\end{graphicalabstract}
%
%%%Research highlights
%\begin{highlights}
%\item Brand new network inference algorithms under both IC and LT models to learn edge parameters from cascade samples.
%Our algorithms enjoy faster implementation, lower sample complexity and weaker assumptions comparing to previous algorithms.
%Our assumptions are also easy to verify from cascade samples.
%\item Several end-to-end algorithms for influence maximization from cascade samples (IMS) with constant approximation guarantee under both IC and LT models. Our algorithms work for any networks, even when it is impossible to learn edge parameters. These are the first such IMS algorithms in the literature.
%\end{highlights}

\begin{keyword}
%% keywords here, in the form: keyword \sep keyword
influence maximization \sep network inference \sep optimization from samples \sep data-driven optimization \sep end-to-end optimization
%% PACS codes here, in the form: \PACS code \sep code
%\PACS 0000 \sep 1111
%%% MSC codes here, in the form: \MSC code \sep code
%%% or \MSC[2008] code \sep code (2000 is the default)
%\MSC 0000 \sep 1111
\end{keyword}

\end{frontmatter}

%% \linenumbers

%% main text
\section{Introduction}

Maximizing the spread of influence through a social network has been widely studied in the past two decades.
It models the phenomenon in which a small set of initially \emph{active} nodes called \emph{seeds} takes some piece of information (news, ideas or opinions, etc.), and the information spreads over the network to \emph{activate} the remaining nodes.
The expected number of \emph{final} active nodes is called the \emph{influence spread} of the seed set.
The \emph{influence maximization} problem asks to pick at most $k$ seeds in order to maximize the influence spread.
Under many \emph{diffusion models} such as the discrete-time \emph{independent cascade} (IC) model and \emph{linear threshold} (LT) model \citep{KempeKT03}, the problem enjoys a $(1-1/e-\varepsilon)$-approximation (with small $\varepsilon> 0$), which is tight assuming P $\neq$ NP \citep{Feige98}.
It has found applications in many scenarios.

Traditional influence maximization problem requires as input the whole social network (as well as its parameters), based on which one can compute or estimate the influence spread function.
In many scenarios, however, this might be too demanding, especially for those who do not have free access to the network.
In this work, we consider influence maximization in the sampling setting where one only has access to a set of passively observed cascades spreading over an implicit social network.
Each cascade records the sets of activated nodes at each time step. 
Such sample data is available in many scenarios, especially on the Internet where the timestamps can be recorded in principle.
We are interested in whether we can maximize the influence from such sample data.
We model this problem as \emph{influence maximization from samples} below:

\begin{quotation}
	\noindent\textbf{Influence maximization from samples (IMS).}
	For an unknown social network $G$ with diffusion parameters,  given 
	$t$ cascade samples where each seed is independently sampled with an unknown probability,  
	can we find a seed set of size at most $k$ such that its influence spread is a constant approximation of the optimal seed set, when $t$ is polynomial to the size of $G$?
\end{quotation}

En route to solving the above problem, a natural and reasonable approach is to first learn the network structure as well as its parameters, and then maximize the influence over the learned network.
This leads to the well-studied \emph{network inference} problem below:

\begin{quotation}
	\noindent\textbf{Network inference.} For an unknown social network $G$, 
	given polynomial number of cascade samples where each seed is sampled independently with an unknown probability, 
	estimate all diffusion parameters such that with probability at least $1-\delta$, every parameter is estimated within an additive error $\varepsilon$.
\end{quotation}

%\noindent\textbf{Network Inference.} \emph{Given an unknown social network $G=(V,E,p)$ and samples $(S_{i,0},S_{i,1},\cdots,S_{i,n-1})_{i=1}^t$ as above, estimate the values of $p_{uv}$ within an additive error. More formally, for some $\varepsilon\in(0,1)$, compute a vector $\hat{p}$ such that $|\hat{p}_{uv}-p_{uv}|\leq\varepsilon$ for all $u,v\in V$.\footnote{In the literature, network inference usually refers to recovering the edge set of the underlying graph.
%		Here we slightly abuse the notation since the task is often fulfilled by estimating the network parameters.}}
%
%\wei{
%Here is my rewrite to avoid undefined notations.
%
%\noindent\textbf{Network Inference.} \emph{For an unknown social network $G$, 
%	given polynomial number of cascade samples where each seed is sampled independently with an unknown probability, 
%	estimate all IC model parameters such that with probability at least $1-\delta$, every parameter is estimated within a small additive error $\varepsilon$.
%}
%}

Our contributions in this work are mainly two-fold.
First, we revisit the network inference problem and design  brand new algorithms for it under both IC and LT models.
For the IC model, while all previous algorithms are based on the maximum likelihood estimation and convex programming 
\citep{NetrapalliS12,NarasimhanPS15,Pouget-AbadieH15}, 
our algorithm builds on a closed-form expression for each edge probability in terms of quantities which can be efficiently estimated.
As a result, our algorithm enjoys faster implementation, lower sample complexity and weaker assumptions comparing to previous algorithms.
Our assumptions are also easy to verify from cascade samples.
We will discuss these differences further in the end of Section \ref{section: edge probabilities estimation}.
For the LT model, to the best of our knowledge, there is no network inference algorithm with theoretical guarantees previously.
To resolve the problem, we also build a closed-form expression for each edge weight, which makes the estimation possible.

Second, we provide several end-to-end IMS algorithms with constant approximation guarantee under both IC and LT models.
For the IC model, following the canonical learning-and-then-optimization framework, we first present an IMS algorithm by directly invoking our network inference algorithm.
The algorithm thus needs the assumptions used for learning.
Next, we present alternative algorithms which only need two simple assumptions on the seed distribution and impose no requirements for the underlying network.
In contrast, all the known algorithms for network inference (including ours) impose some restrictions on the network.
This result is highly non-trivial since it is impossible to resolve network inference problem on arbitrary graphs and hence the learning-and-then-optimization framework fails in this case.
For instance, consider a complete graph and another graph with one edge removed from the complete graph,
where all edge probabilities are $1$.
If each node is picked as a seed independently with probability $1/2$, one cannot distinguish them within polynomially many cascade samples.
For the LT model, an interesting feature of the network inference algorithm is that the learning assumption already has no requirements for the network.
Thus, the learning-and-then-optimization framework directly leads to an IMS algorithm that works for arbitrary networks under the LT model.
Our IMS follows the general optimization-from-samples framework \citep{BalkanskiRS17}, and
generalizes the recent result on optimization from structured samples for coverage functions \citep{ChenSZZ20}, see Section \ref{section: related work} for details.
Finally, we remark that our results not only apply to influence maximization, but also to other learning and optimization settings such as
probabilistic maximum cover with application in online advertising \citep{CWYW16}.

\subsection{Related Work}
\label{section: related work}

Influence maximization from samples follows the framework of \emph{optimization from samples} (OPS) originally proposed by \citet{BalkanskiRS17}: 
given a set of polynomial number of samples $\{S_i,f(S_i)\}_{i=1}^t$ and constraint $\mathcal{M}$, can we find a set $S\in\mathcal{M}$ such that $f(S)\geq c\cdot \max_{T\in\mathcal{M}} f(T)$ for some constant $c$?
The OPS framework is very important for the data-driven integration of learning and optimization where the underlying model (function $f$ above) is not
readily known.
Surprisingly, \citet{BalkanskiRS17} showed that even for the maximum coverage problem, there is no constant approximation algorithm under the OPS model,
despite prior results that a coverage function $f$ is learnable from samples \citep{BadanidiyuruDFKNR12} and constant optimization is available when $f$ is known \citep{NemhauserWF78}.
Subsequently, several attempts \citep{BalkanskiRS16,BalkanskiIS17,RosenfeldBGS18,ChenSZZ20} have been made to circumvent the impossibility result of \citet{BalkanskiRS17}.
Among them the most related one is the \emph{optimization from structured samples} (OPSS) model for coverage functions \citep{ChenSZZ20}, where the samples carry additional structural information in the form of $\{S_i,N(S_i)\}_{i=1}^t$, where $N(S_i)$ contains the nodes covered by $S_i$.
It was shown that if the samples are generated from a \emph{``negatively correlated''} distribution, the maximum coverage problem enjoys constant approximation in the OPSS model.
Recall that coverage functions can be regarded as IC influence spread functions defined over a bipartite graph with edge probabilities in $\{0,1\}$.
Thus, our result on IMS greatly generalizes OPSS to allow general graphs and stochastic cascades over edges.

End-to-end influence maximization from data has been explored by \citet{GoyalBL11}, but they only used a heuristic method to learn influence spread functions and then used the greedy
method for influence maximization, so there was no end-to-end guarantee on IMS.
A recent work \citep{BalkanskiIS17} revisited IMS problem under the OPS model, and provided a constant approximation algorithm when the underlying network is generated from the \emph{stochastic block} model.
Our study is the first to provide IMS algorithms with theoretical guarantees that work on arbitrary networks.

Network inference has been extensively studied over the past decade \citep{Gomez-RodriguezLK10,MyersL10,Gomez-RodriguezBS11,DuSSY12,NetrapalliS12,AbrahaoCKP13,DaneshmandGSS14,DuSGZ13,DuLBS14,NarasimhanPS15,Pouget-AbadieH15,HeX0L16}.
While most of them focused on the continuous time diffusion model, there are several results under the discrete time IC model \citep{NetrapalliS12,NarasimhanPS15,Pouget-AbadieH15}, all of which build on the maximum likelihood estimation.
We will compare these results with ours after we present our approach in Section \ref{section: edge probabilities estimation}.
%Netrapalli and Sanghavi \yrcite{NetrapalliS12} first considered network inference under the discrete time IC model.
%The key ingredient in their approach is an appropriate change of parameters such that the likelihood function is convex, enabling both efficient implementation and analysis.
%Their algorithm can be implemented in a local or distributed way, i.e.~the in-neighbors of each node can be learned parallelly.
%However, their algorithm works under a strong assumption called \emph{correlation decay}, which makes the result restricted.
%The approach was further developed in \cite{NarasimhanPS15} to solve edge probabilities estimation.
%However, unlike in \cite{NetrapalliS12}, the algorithm in \cite{NarasimhanPS15} requires the graph structure to be known in advance.
%Almost simultaneously, Pouget-Abadie and Horel \cite{Pouget-AbadieH15} presented an algorithm under \emph{regularity assumptions} and the \emph{restricted eigenvalue condition} \cite{BickelRT09}.
%Their algorithm estimates the edge parameters to an additive error under the so-called \emph{generalized linear cascade model}, which includes the IC model and the voter model.
%Their algorithm is optimal in the sample complexity.

%\cite{AminHK14}

\subsection{Organization}
In Section \ref{section: preliminaries}, we describe the model, some concepts and notations as well as two Chernoff-type lemmas used in the analysis.
In Section \ref{section: the independent cascade model}, we present network inference and IMS algorithms under the IC model.
In Section \ref{section: the linear threshold model}, we present network inference and IMS algorithms under the LT model.
Finally, we conclude the paper in Section \ref{section: conclusion}.

\section{Preliminaries}
\label{section: preliminaries}

%\zhijie{In LT model, there is an unknown social network $G=(V,E,w)$.
%	Each node $v\in V$ is associated with a threshold $\theta_v$, which is drawn independently and uniformly from $[0,1]$.
%	Let $(S_0,S_1,\cdots,S_{n-1})$ be the propagation process, where $S_0$ is drawn from some distribution $\mathcal{D}$, known as the seed distribution.
%	For node $v\in V$, let $N(v)=N^{in}(v)$ be the in-neighbors of $v$.
%	Let $\mathcal{D}_v$ be a marginal distribution of $\mathcal{D}$ restricted on $2^{N(v)}$.
%	For $u\in N(v)$, let $q_u=\Pr_{\mathcal{D}_v}[u\in S_0]$ be the probability that $u$ is picked as a seed.
%	Finally, let $\ap(v)=\Pr_{\mathcal{D}_v,\theta_v}[v\in S_1]$ be the probability that $v$ is active in one time step and $\ap(v\cond\bar{u})=\Pr_{\mathcal{D}_v,\theta_v}[v\in S_1\mid u\notin S_0]$ be the corresponding conditional probability.}

\paragraph{Social network, diffusion model and influence maximization}
A social network is modeled as a weighted directed graph $G=(V,E,p)$, where $V$ is the set of $|V|=n$ nodes and $E$ is the set of directed edges.
Each edge $(u,v)\in E$ is associated with a weight or probability $p_{uv}\in[0, 1]$.
For convenience, we assume that $p_{uv}=0$ if $(u,v)\notin E$ and $p_{uv}>0$ otherwise.
We also use $N(v)=N^{in}(v)$ to denote the in-neighbors of node $v\in V$.

The information or influence propagates through the network in discrete time steps.
Each node $v\in V$ is either \emph{active} or \emph{inactive}, indicating whether it receives the information.
Denote by $S_{\tau}\subseteq V$ the set of active nodes at time step $\tau$.
The nodes in the set $S_0$ at time step $0$ are called \emph{seeds}.
The diffusion is assumed to be \emph{progressive}, which means a node will remain active once it is activated.
Thus, for all $\tau\geq 1$, $S_{\tau-1}\subseteq S_{\tau}$.

Given a set of seeds $S_0$, the diffusion model describes how the information propagates and $S_{\tau}$ is generated for each $\tau\geq 1$.
In the literature, the two most well-known diffusion models are the \emph{independent cascade} (IC) model and the \emph{linear threshold} (LT) model \citep{KempeKT03}.

In the IC model, at time step $\tau$, initially $S_{\tau}=S_{\tau-1}$.
Next, for each node $v\notin S_{\tau-1}$, each node $u\in N(v)\cap(S_{\tau-1}\setminus S_{\tau-2})$ will try to activate $v$ \emph{independently} with probability $p_{uv}$ (denote $S_{-1}=\emptyset$).
Thus, $v$ becomes active with probability $1-\prod_{u\in N(v)\cap(S_{\tau-1}\setminus S_{\tau-2})}(1-p_{uv})$ at this step.
Once activated, $v$ will be added into $S_{\tau}$.
The propagation terminates when at the end of some time step $\tau$, $S_{\tau}=S_{\tau-1}$.
Clearly, the process proceeds in at most $n-1$ time steps and we use $(S_0,S_1,\cdots,S_{n-1})$ to denote the random sequence of the active nodes.

In the LT model, following the convention, we use $w$ instead of $p$ to denote the edge weight vector.
In this model, each node $v\in V$ needs to satisfies the \emph{normalization condition} that $\sum_{u\in N(v)} w_{uv}\leq 1$.
Besides, each $v\in V$ is associated with a threshold $r_v$, which is sampled independently and uniformly from $[0,1]$ before the diffusion starts.
During the diffusion process, at time step $\tau$, initially $S_{\tau}=S_{\tau-1}$.
Then, for each node $v\notin S_{\tau-1}$, it will be added into $S_{\tau}$ if $\sum_{u\in S_{\tau-1}\cap N(v)} w_{uv}\geq r_v$.
The diffusion also terminates if $S_{\tau}=S_{\tau-1}$ for some time step $\tau$ and we use $(S_0,S_1,\cdots,S_{n-1})$ to denote the random sequence of the active nodes.

Let $\Phi(S_0)=S_{n-1}$ be the final active set given the seed set $S_0$.
Its expected size is denoted by $\E[|\Phi(S_0)|]$
and is commonly called the {\em influence spread} of $S_0$.
In general, the \emph{influence spread function} $\sigma:2^V\rightarrow\mathbb{R}_{\ge 0}$ is defined as $\sigma(S)=\E[|\Phi(S)|]$ for any $S\subseteq V$.
Sometimes, we use $\sigma^p(\cdot)$ or $\sigma^w(\cdot)$ to specify the graph parameters explicitly.
\emph{Influence maximization} (IM) asks to find a set of at most $k$ seeds so as to maximize the influence spread of the chosen seed set.
Formally, under a specific diffusion model (such as IC or LT models), given a positive integer $k\leq n$, influence maximization corresponds to the following problem: $\argmax_{S\subseteq V, |S|\le k} \sigma(S)$.

%A realization of the above random process can be described by a sequence of active nodes $(S_0,S_1,\cdots,S_{n-1})$, referred to as a \emph{cascade}.

%Equivalently, the IC model can be defined via \emph{living-edge graph}.
%Let $\mathcal{G}$ be a random graph such that an edge $(u,v)$ appears with probability $p_{uv}$ independently.
%Fixing a graph $G$ drawn from $\mathcal{G}$ and a seed set $S$, let $\sigma(G,S)$ be the nodes reachable from $S$ in $G$.
%Then $f(S)=\sum_{G\in\mathcal{G}}\Pr[G]|\sigma(G,S)|$.

\paragraph{The sampling setting}
Standard influence maximization problem takes as input the social network $G=(V,E,p)$, based on which one can compute or estimate the influence spread function $\sigma$.
In this paper, we consider the problem in the sampling setting where $G$ is not given explicitly.

A \emph{cascade} refers to a realization of the sequence of the active nodes $(S_0,S_1,\cdots,S_{n-1})$.
By slightly abusing the notation, we still denote the cascade by $(S_0,S_1,\cdots,S_{n-1})$.
In the sampling setting, a set of $t$ \emph{independent} cascades $(S_{i,0},S_{i,1},\cdots,S_{i,n-1})_{i=1}^t$ is given as input, where the seed set $S_{i,0}$ in cascade $i$ is generated \emph{independently} from a \emph{seed set distribution} $\mathcal{D}$ over the node sets, and given $S_{i,0}$, the sequence $(S_{i,1},\cdots,S_{i,n-1})$ is generated according to the specified diffusion rules.
Throughout this work, we assume that $\mathcal{D}$ is a \emph{product distribution}; in other words, each node $u\in V$ is drawn as a seed \emph{independently}.
We aim to solve the following two problems.

\begin{enumerate}
	\item \textbf{Network inference\footnote{In the literature, network inference often means to recover network structure, namely the edge set $E$. Here we slightly abuse the terminology to also mean learning edge parameters.}.} Given a set of $t$ samples $(S_{i,0},S_{i,1},\cdots,S_{i,n-1})_{i=1}^t$ defined as above, estimate the values of $p_{uv}$ within an additive error. More formally, for some $\varepsilon\in(0,1)$, compute a vector $\hat{p}$ such that $|\hat{p}_{uv}-p_{uv}|\leq\varepsilon$ for all $u,v\in V$.
	\item \sloppy\textbf{Influence maximization from samples (IMS).}
	Given a set of $t$ samples $(S_{i,0},S_{i,1},\cdots,S_{i,n-1})_{i=1}^t$ defined as above, find a set $\ALG$ of at most $k$ seeds such that $\sigma(\ALG)\geq \kappa\cdot \sigma(\OPT)$ for some constant $\kappa\in(0,1)$, where $\OPT$ denotes the optimal solution.
	%	\wei{I feel that notations $\ALG$ and $\OPT$ are a bit awkward representing sets. $\OPT$ is used often to represent the optimal objective value, not the solution set. How about changing them to $S^A$ and $S^*$? $S^*$ is often used as the optimal seed set for influence maximization. Zhijie, if you are fine with this change, you can do a global replacement. Thanks.
	%	}
\end{enumerate}

\paragraph{Notations}
Our algorithms actually only use $S_{i,0}$ and $S_{i,1}$ in those cascades to infer information about the graph, and we find it convenient to define some corresponding concepts and notations in advance.
These concepts are indeed crucial to our algorithm design.
For node $v\in V$, we use $q_v=\Pr_{\mathcal{D}}[v\in S_0]$ to denote the probability that $v$ is drawn as a seed.
We denote by $\ap_{G,\mathcal{D}}(v)$ the \emph{activation probability} of node $v$ \emph{in one time step} during a cascade $(S_0,S_1,\cdots,S_{n-1})$ over network $G$ when $S_0$ is drawn from the distribution $\mathcal{D}$.
Thus, $\ap_{G,\mathcal{D}}(v)=\Pr_{G,\mathcal{D}}[v\in S_1]$.
Note that it contains the possibility that $v$ itself is a seed, namely $v\in S_0\subseteq S_1$.
%Note that $ap_{G,\mathcal{D}}(v)$ depends on the randomness of both $S_0$ and the graph $G$.
For $u,v\in V$, we define $\ap_{G,\mathcal{D}}(v \cond u)=\Pr_{G,\mathcal{D}}[v\in S_1 \cond u\in S_0]$ and $\ap_{G,\mathcal{D}}(v\cond \bar{u})=\Pr_{G,\mathcal{D}}[v\in S_1 \cond u\notin S_0]$, respectively, which are the corresponding probabilities conditioned on whether $u$ is selected as a seed.
When the context is clear, we often omit the subscripts $G$ and $\mathcal{D}$ in the notation.

\paragraph{Chernoff-type bounds} Following are Chernoff-type bounds we will use in our analysis.

\begin{lemma}[Multiplicative Chernoff bound, \citealt{MitzenmacherU05}]
	\label{lemma: multiplicative chernoff bound}
	\sloppy Let $X_1, X_2, \cdots, X_n$ be independent random variables in $\{0, 1\}$ with $\Pr[X_i = 1] = p_i$.
	Let $X = \sum_{i=1}^{n} X_i$ and $\mu = \sum_{i=1}^{n}p_i$.
	Then, for $0 < a < 1$,
	\[ \Pr[X \geq (1+a) \mu] \leq e^{-\mu a^2/3}, \]
	and
	\[ \Pr[X \leq (1-a) \mu] \leq e^{-\mu a^2/2}. \]
\end{lemma}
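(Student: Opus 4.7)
The plan is to use the standard exponential moment method (Chernoff's technique). First I would apply Markov's inequality to $e^{tX}$ for a parameter $t > 0$ to be chosen later, obtaining
\[
\Pr[X \geq (1+a)\mu] \;=\; \Pr\!\left[e^{tX} \geq e^{t(1+a)\mu}\right] \;\leq\; \frac{\E[e^{tX}]}{e^{t(1+a)\mu}}.
\]
By the independence of the $X_i$'s, $\E[e^{tX}] = \prod_{i=1}^n \E[e^{tX_i}] = \prod_{i=1}^n(1 + p_i(e^t - 1))$, and then the elementary inequality $1 + x \leq e^x$ yields $\E[e^{tX}] \leq \exp((e^t-1)\mu)$. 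So far this is purely mechanical.

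Next I would optimize $t$. For the upper tail, setting $t = \ln(1+a) > 0$ gives
\[
\Pr[X \geq (1+a)\mu] \;\leq\; \left(\frac{e^{a}}{(1+a)^{1+a}}\right)^{\mu}.
\]
To repackage this in the cleaner form $e^{-\mu a^2/3}$, I need the real-analytic inequality $(1+a)\ln(1+a) - a \geq a^2/3$ for $0 < a < 1$. I would verify this by defining $g(a) := (1+a)\ln(1+a) - a - a^2/3$, checking $g(0) = 0$, and showing $g'(a) = \ln(1+a) - 2a/3 \geq 0$ on $(0,1)$ by a second differentiation (or alternatively, by examining the alternating Taylor series for $(1+a)\ln(1+a)$ term by term).

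For the lower tail, the argument is symmetric: apply Markov's inequality to $e^{-tX}$ with $t > 0$, obtain $\prod_i(1 - p_i(1-e^{-t})) \leq \exp(-(1-e^{-t})\mu)$, and optimize with $t = -\ln(1-a) > 0$ to get
\[
\Pr[X \leq (1-a)\mu] \;\leq\; \left(\frac{e^{-a}}{(1-a)^{1-a}}\right)^{\mu}.
\]
The final step is the numerical inequality $(1-a)\ln(1-a) + a \geq a^2/2$ for $0 < a < 1$, which I would establish in the same fashion via an auxiliary function and its derivative, or by noting that the Taylor expansion $(1-a)\ln(1-a) + a = \sum_{k\geq 2} a^k/(k(k-1))$ is dominated below by its leading term $a^2/2$.

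The main obstacle is purely these two scalar inequalities at the end; everything preceding them is routine application of the MGF method. The asymmetry of the two final constants ($1/3$ for the upper tail versus $1/2$ for the lower tail) is exactly the asymmetry between $(1+a)\ln(1+a)-a$ and $(1-a)\ln(1-a)+a$ on $(0,1)$, and no restriction on $a$ beyond $0<a<1$ is needed because the bound degenerates at the endpoints anyway.
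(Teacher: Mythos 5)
Your proof is correct and is essentially the standard moment-generating-function argument from the cited source (Mitzenmacher and Upfal); the paper itself offers no proof of this lemma, only the citation. Both of your closing scalar inequalities check out as you describe: $g'(a)=\ln(1+a)-2a/3$ is unimodal on $(0,1)$ with $g'(0)=0$ and $g'(1)=\ln 2-2/3>0$, and $(1-a)\ln(1-a)+a=\sum_{k\geq 2}a^k/(k(k-1))\geq a^2/2$.
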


\begin{lemma}[Additive Chernoff bound, \citealt{AlonS08}]
	\label{lemma: additive chernoff bound}
	Let $X_1,\cdots,X_n$ be independent random variables in $\{0,1\}$ with $\Pr[X_i=1]=p_i$.
	Let $X=\sum_{i=1}^{n}X_i$ and $\mu=\sum_{i=1}^{n}p_i$.
	Then for any $a>0$, we have
	\[ \Pr[X-\mu\geq a]\leq\exp(-a\min(1/5,a/4\mu)). \]
	Moreover, for any $a>0$, we have
	\[ \Pr[X-\mu\leq-a]\leq\exp(-a^2/2\mu). \]
\end{lemma}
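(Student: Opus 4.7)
The plan is to apply the classical Cram\'er--Chernoff moment generating function (MGF) method and then convert the resulting rate functions into the two specific forms claimed in the lemma.

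For the upper tail, I would first bound the MGF of each Bernoulli $X_i$ by $\E[e^{\lambda X_i}] = 1 + p_i(e^{\lambda}-1) \leq \exp(p_i(e^{\lambda}-1))$, using the elementary inequality $1+y\leq e^{y}$. Independence then yields $\E[e^{\lambda X}] \leq \exp(\mu(e^{\lambda}-1))$, and Markov's inequality applied to $e^{\lambda X}$, optimized at the stationary point $\lambda = \ln(1+a/\mu)$, produces the standard form
$$\Pr[X\geq \mu+a] \leq \exp\bigl(-\mu\, h(a/\mu)\bigr),\qquad h(x) := (1+x)\ln(1+x)-x.$$
The remaining work is then to show $\mu\,h(a/\mu) \geq a\min(1/5,\,a/(4\mu))$. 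Writing $x=a/\mu$, this amounts to $h(x)\geq x^{2}/4$ on $(0,4/5]$ and $h(x)\geq x/5$ on $[4/5,\infty)$. For the first, $f(x):=h(x)-x^{2}/4$ satisfies $f(0)=f'(0)=0$ and $f''(x)=1/(1+x)-1/2\geq 0$ on $[0,1]$, so $f$ is convex and nonnegative on $[0,4/5]$. For the second, $h'(x)=\ln(1+x)\geq \ln(9/5)>1/5$ on $[4/5,\infty)$, so $h(x)-x/5$ is increasing there; combined with the direct computation $h(4/5)=(9/5)\ln(9/5)-4/5 > 4/25$, the bound follows. I expect this two-case estimation to be the main, though mild, obstacle.

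For the lower tail, the argument is parallel: applying Markov's inequality to $e^{-\lambda X}$ with $\lambda>0$ and optimizing at $\lambda = -\ln(1-a/\mu)$ yields
$$\Pr[X\leq \mu-a] \leq \exp\bigl(-\mu\,g(a/\mu)\bigr),\qquad g(x) := (1-x)\ln(1-x)+x,$$
valid for $a\in(0,\mu)$. To match $\exp(-a^{2}/(2\mu))$ it then suffices to show $g(x)\geq x^{2}/2$ on $[0,1)$, which is cleaner than the upper tail bound: the function $g(x)-x^{2}/2$ vanishes at $0$, its derivative $-\ln(1-x)-x$ also vanishes at $0$, and the second derivative $x/(1-x)\geq 0$ forces monotonicity of the first derivative, hence nonnegativity of $g(x)-x^{2}/2$ throughout $[0,1)$. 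Substituting $x=a/\mu$ gives the stated lower-tail bound.
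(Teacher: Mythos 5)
The paper does not prove this lemma---it is quoted verbatim from \citet{AlonS08} as a black-box tool---so there is no internal proof to compare against. Your Cram\'er--Chernoff derivation is correct and is essentially the standard route to these bounds: the MGF computation, the optimization at $\lambda=\ln(1+a/\mu)$ (resp.\ $\lambda=-\ln(1-a/\mu)$), and the reduction to the rate functions $h(x)=(1+x)\ln(1+x)-x$ and $g(x)=(1-x)\ln(1-x)+x$ are all right, and the two elementary comparisons you single out as the real work do check: $f''(x)=1/(1+x)-1/2\geq 0$ on $[0,1]\supseteq[0,4/5]$ gives $h(x)\geq x^2/4$ there, and $h'(x)=\ln(1+x)\geq\ln(9/5)>1/5$ together with $h(4/5)\approx 0.258>4/25$ gives $h(x)\geq x/5$ beyond the crossover point $x=4/5$ of the $\min$; the lower-tail inequality $g(x)\geq x^2/2$ follows as you say from $g''(x)-1=x/(1-x)\geq 0$. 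The only (trivial) loose end is the lower tail for $a\geq\mu$, where your optimization is not valid: for $a>\mu$ the event $X\leq\mu-a$ is empty, and for $a=\mu$ one has $\Pr[X=0]=\prod_i(1-p_i)\leq e^{-\mu}\leq e^{-\mu/2}$, so the stated bound still holds. With that one-line remark added, the proof is complete.
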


\section{Algorithms under the IC Model}
\label{section: the independent cascade model}

In this section, we solve both network inference and IMS under the IC model.
In Section \ref{section: edge probabilities estimation}, we present network inference algorithms to estimate each edge probability within a small additive error.
In Section \ref{section: influence maximization from samples}, we present several IMS algorithms.

\subsection{Network Inference}
\label{section: edge probabilities estimation}

In this section, we present a novel algorithm under the IC model for estimating the edge probabilities of the underlying graph $G$, namely we need to find an estimate $\hat{p}$ of $p$ such that $|\hat{p}_{uv}-p_{uv}|\leq \varepsilon$ for all $u,v\in V$.
%Recall that $\\ap(v)$ denotes the activation probability of $v$ in one time step by a random seed set, and $\\ap(v\cond\bar{u})$ denotes the activation probability of $v$ in one time step conditioned on that $u$ is not a seed.
While all previous studies rely on the maximum likelihood estimation to estimate $\hat{p}$ \citep{NetrapalliS12,NarasimhanPS15,Pouget-AbadieH15}, 
our algorithm is based on
the following key observation on the connection between $p_{uv}$ and the one-step activation probabilities
$\ap(v)$ and $\ap(v\cond\bar{u})$.
%$\\ap(v)=\\ap(v\cond \bar{u})+(1-\\ap(v\cond\bar{u}))q_up_{uv}$.
We remark that our algorithm does not rely on the knowledge of edges in graph $G$, and in fact it can be used to also reconstruct
the edges of the graph.

\begin{lemma}  \label{lem:puv}
	Under the IC model, for any $u,v\in V$ with $u\not=v$,
	\[ p_{uv}=\frac{\ap(v)-\ap(v\cond\bar{u})}{q_u(1-\ap(v\cond\bar{u}))}. \]
\end{lemma}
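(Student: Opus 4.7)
The plan is to derive the formula by conditioning on the seed status of $u$ and isolating $u$'s contribution to activating $v$ in the very first diffusion step.

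First I would apply the law of total probability: since $u$ is chosen as a seed independently with probability $q_u$ under the product distribution $\mathcal{D}$,
\[
\ap(v) = q_u \cdot \ap(v \cond u) + (1-q_u)\cdot \ap(v \cond \bar{u}),
\]
which rearranges to $\ap(v) - \ap(v \cond \bar{u}) = q_u\bigl(\ap(v \cond u) - \ap(v \cond \bar{u})\bigr)$. So it suffices to establish the single-step identity $\ap(v \cond u) - \ap(v \cond \bar{u}) = (1 - \ap(v \cond \bar{u}))\,p_{uv}$, after which dividing through yields the claimed expression.

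The core step is to decompose the event $\{v \in S_1\}$ into two pieces whose randomness can be separated from the seed status of $u$ and from the $(u,v)$ edge coin. I would introduce $B$: the event that $v \in S_0$ or that some in-neighbor $w \in N(v)\setminus\{u\}$ is itself a seed and successfully activates $v$; and $C_u$: the Bernoulli($p_{uv}$) random variable indicating whether $u$'s single attempt on $v$ would succeed. From the one-step IC rule one has $\{v\in S_1\} = B \cup \bigl(\{u\in S_0\}\cap C_u\bigr)$. By the product structure of $\mathcal{D}$ (which decouples $\{u\in S_0\}$ from the other seed indicators) together with the IC model's independent per-edge coin flips, $B$ depends only on seed indicators of nodes other than $u$ and on edges other than $(u,v)$, hence $B$ is independent of both $\{u \in S_0\}$ and $C_u$. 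Therefore
\[
\ap(v \cond \bar{u}) = \Pr[B], \qquad \ap(v \cond u) = \Pr[B \cup C_u] = \Pr[B] + (1-\Pr[B])\,p_{uv},
\]
and subtracting these two yields the desired single-step identity.

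The step that requires the most care is the independence claim: one must argue precisely that neither the seed indicator $\mathbf{1}[u\in S_0]$ nor the coin flip for edge $(u,v)$ enters the definition of $B$, and that both are independent of everything $B$ depends on. The boundary case $u\notin N(v)$ is automatic, since then $p_{uv}=0$ by convention and the coin $C_u$ never fires, forcing $\ap(v\cond u)=\ap(v\cond \bar{u})$ so that both sides of the identity vanish.
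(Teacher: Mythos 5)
Your proof is correct and follows essentially the same route as the paper: your event $B$ is exactly the paper's $\ap_{G\setminus\{u\},\mathcal{D}}(v)$, and your single-step identity combined with the law of total probability reproduces the paper's equation $\ap_{G}(v)=\ap_{G'}(v)+(1-\ap_{G'}(v))q_u p_{uv}$. You are somewhat more explicit about the independence of $B$, $\mathbf{1}[u\in S_0]$, and the edge coin, which the paper compresses into the remark that $\mathcal{D}$ is a product distribution.
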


\begin{proof}
	To avoid confusion, we write the underlying graph $G$ and the seed distribution $\mathcal{D}$ explicitly in notation $\ap(\cdot)$, namely $\ap(v)=\ap_{G,\mathcal{D}}(v)$.
	Consider the subgraph $G'=G \setminus \{u\}$ by removing node $u$.
	Node $v$ has two chances to be activated \emph{in one time step}: either by nodes in $G'$ (including the case where $v$ itself is a seed) or by node $u$.
	Since $\mathcal{D}$ is a product distribution, we have
	\[ \ap_{G,\mathcal{D}}(v)=\ap_{G',\mathcal{D}}(v)+(1-\ap_{G',\mathcal{D}}(v))q_up_{uv}. \]
	Besides, $\ap_{G',\mathcal{D}}(v)=\ap_{G,\mathcal{D}}(v\cond\bar{u})$ since when considering one-step activation of $v$, node $u$ not being the seed is equivalent to removing it from the graph.
	Plugging the equality into the last one, we obtain
	\[ p_{uv}=\frac{\ap_{G,\mathcal{D}}(v)-\ap_{G,\mathcal{D}}(v\cond\bar{u})}{q_u(1-\ap_{G,\mathcal{D}}(v\cond\bar{u}))}, \]
	which proves the lemma.
\end{proof}

Equipped with Lemma \ref{lem:puv}, we are able to estimate $p_{uv}$ by estimating $q_u,\ap(v)$ and $\ap(v\cond\bar{u})$ respectively from cascade samples.
Let $t_u=|\{i\in[t]\mid u\in S_{i,0} \}|$ be the number of cascades where $u$ is a seed, $t_{\bar{u}}=|\{i\in[t]\mid u\notin S_{i,0} \}|$ the number of cascades where $u$ is not a seed, $t^v=|\{i\in[t]\mid v\in S_{i,1}\}|$ the number of cascades where $v$ is activated in one time step and $t_{\bar{u}}^v=|\{i\in[t]\mid u\notin S_{i,0},v\in S_{i,1}\}|$ the number of cascades where $u$ is not a seed and $v$ is activated in one time step.
Then, $\hat{q}_u=t_u/t, \widehat{\ap}(v)=t^v/t$ and $\widehat{\ap}(v\cond \bar{u})=t_{\bar{u}}^v/t_{\bar{u}}$ are good estimates of $q_u,\ap(v)$ and $\ap(v\cond\bar{u})$, respectively.
The formal procedure is formulated as Algorithm \ref{algorithm: estimate edge probabilities}.

Algorithm \ref{algorithm: estimate edge probabilities} needs to work under Assumption \ref{assumption: estimate edge probabilities} below, which ensures that all quantities are well estimated.
Assumption \ref{assumption: estimate edge probabilities} consists of two conditions.
The first means that node $v\in V$ has a non-negligible probability of not being activated in one time step.
The second means that the probability of a node $u\in V$ being selected as a seed is neither too low nor too high.

\begin{algorithm}[tb]
	\caption{Estimate Edge Probabilities}
	\label{algorithm: estimate edge probabilities}
	\begin{algorithmic}[1]
		\REQUIRE A set of cascades $(S_{i,0},S_{i,1},\cdots,S_{i,n-1})_{i=1}^t$.
		\ENSURE $\{\hat{p}_{uv}\}_{u,v\in V}$ such that $|\hat{p}_{uv}-p_{uv}|\leq\varepsilon$ for all $u,v\in V$.
		\FOR{each $u\in V$}
		\STATE Compute $\hat{q}_u=t_u/t$, where $t_u=|\{i\in[t]\mid u\in S_{i,0} \}|$.
		\ENDFOR
		\FOR{each $v\in V$}
		\STATE Compute $\widehat{\ap}(v)=t^v/t$, where $t^v=|\{i\in[t]\mid v\in S_{i,1}\}|$.
		\ENDFOR
		\FOR{each $v\in V$}
		\FOR{each $u\in V$}
		\STATE Compute $\widehat{\ap}(v\cond \bar{u})=t_{\bar{u}}^v/t_{\bar{u}}$, where $t_{\bar{u}}=|\{i\in[t]\mid u\notin S_{i,0} \}|$ and $t_{\bar{u}}^v=|\{i\in[t]\mid u\notin S_{i,0},v\in S_{i,1}\}|$.
		\STATE Let $\hat{p}_{uv}=\frac{\widehat{\ap}(v)-\widehat{\ap}(v\cond\bar{u})}{\hat{q}_u(1-\widehat{\ap}(v\cond\bar{u}))}$.
		\ENDFOR
		\ENDFOR
		\STATE \textbf{return} $\{\hat{p}_{uv}\}_{u,v\in V}$.
	\end{algorithmic}
\end{algorithm}

%\begin{algorithm}[t]
%	\SetAlgoLined
%	\KwIn{A set of independent cascades $(S_{i,0},S_{i,1},\cdots,S_{i,n-1})_{i=1}^t$.}
%	%		\wei{Do we use $\varepsilon$ in the following algorithm? If not,
%	%		we do not need to list $\varepsilon$ as the input parameter.}
%	\KwOut{$\{\hat{p}_{uv}\}_{u,v\in V}$ such that $|\hat{p}_{uv}-p_{uv}|\leq\varepsilon$ for all $u,v\in V$.}
%	
%	\ForEach{$u\in V$}{
%		Compute $\hat{q}_u=t_u/t$, where $t_u=|\{i\in[t]\mid u\in S_{i,0} \}|$.
%	}
%	
%	\ForEach{$v\in V$}{
%		Compute $\widehat{\\ap}(v)=t^v/t$, where $t^v=|\{i\in[t]\mid v\in S_{i,1}\}|$.
%	}
%	
%	\ForEach{$v\in V$}{
%		\ForEach{$u\in V$}{
%			Compute $\widehat{\\ap}(v\cond \bar{u})=t_{\bar{u}}^v/t_{\bar{u}}$, where $t_{\bar{u}}=|\{i\in[t]\mid u\notin S_{i,0} \}|$ and $t_{\bar{u}}^v=|\{i\in[t]\mid u\notin S_{i,0},v\in S_{i,1}\}|$.
%			
%			Let $\hat{p}_{uv}=\frac{\widehat{\\ap}(v)-\widehat{\\ap}(v\cond\bar{u})}{\hat{q}_u(1-\widehat{\\ap}(v\cond\bar{u}))}$.
%		}
%	}
%	
%	\Return $\{\hat{p}_{uv}\}_{u,v\in V}$.
%	
%	\caption{Estimate Edge Probabilities}
%%	\label{algorithm: estimate edge probabilities}
%\end{algorithm}

\begin{assumption}[Edge probabilities estimation under the IC model]
	\label{assumption: estimate edge probabilities}
	%	The following are the assumptions for estimating edge probabilities.
	For some parameters $\alpha\in(0,1],\gamma\in(0,1/2]$,
	\begin{enumerate}
		\item $\ap(v)\leq 1-\alpha$ for all $v\in V$.
		\item $\gamma\leq q_u\leq 1-\gamma$ for all $u\in V$.
	\end{enumerate}
\end{assumption}

We now give an analysis of Algorithm \ref{algorithm: estimate edge probabilities}.
Lemma \ref{lemma: estimate ap() in learning edge probabilities} below gives the number of samples we need to estimate $q_u,\ap(v)$ and $\ap(v\cond\bar{u})$ within a small accuracy.

\begin{lemma}
	\label{lemma: estimate ap() in learning edge probabilities}
	Under Assumption \ref{assumption: estimate edge probabilities}, for any $\eta\in(0,4/5),\delta\in(0,1)$, for $\hat{q}_u$, $\widehat{\ap}(v)$, and $\widehat{\ap}(v\cond\bar{u})$ defined in Algorithm~\ref{algorithm: estimate edge probabilities},
	if the number of samples $t\geq\frac{16}{\gamma\eta^2}\ln\frac{12n}{\delta}$, with probability at least $1-\delta$, we have
	\begin{enumerate}
		%		\item $\forall i, t_{u_i}-tq_i\leq \gamma t/2$.
		\item $|\hat{q}_u-q_u|\leq\eta q_u$ for all $u\in V$,
		\item $|\widehat{\ap}(v)-\ap(v)|\leq \eta$ for all $v\in V$,
		\item $|\widehat{\ap}(v\cond\bar{u})-\ap(v\cond\bar{u})|\leq\eta$ for all $u,v\in V$.
	\end{enumerate}
\end{lemma}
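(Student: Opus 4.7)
The plan is to prove the three claims separately by combining Chernoff bounds with union bounds. Claims (1) and (2) are routine because $t_u$ and $t^v$ are ordinary binomial counts with $t$ trials. The subtle case is claim (3), where $\widehat{\ap}(v\cond\bar u)=t^v_{\bar u}/t_{\bar u}$ is a ratio with a random denominator, so standard Chernoff does not apply directly.

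For claim (1), $t_u$ is a sum of $t$ independent Bernoullis with mean $tq_u \ge t\gamma$, so I would apply Lemma~\ref{lemma: multiplicative chernoff bound} with relative deviation $\eta$ to obtain $\Pr[|\hat q_u-q_u|>\eta q_u]\le 2\exp(-t\gamma \eta^2/3)$; the union bound over $u\in V$ and the choice of $t$ bound the total failure by $\delta/3$. For claim (2), $t^v$ is likewise a sum of $t$ independent Bernoullis, with mean $t\,\ap(v)\le t$. Since we want an additive deviation $\eta$, I would apply Lemma~\ref{lemma: additive chernoff bound} with $a=t\eta$: the upper tail gives $\exp\bigl(-t\eta\min(1/5,\eta/(4\ap(v)))\bigr)$, which under the restriction $\eta<4/5$ is $\exp(-\Omega(t\eta^2))$, while the lower tail gives $\exp\bigl(-t\eta^2/(2\ap(v))\bigr)$; both are easily absorbed by the stated sample size after union bounding over the $n$ nodes, keeping the failure at $\delta/3$.

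The hard part is claim (3), for two reasons: the denominator $t_{\bar u}$ is itself random, and we must union bound over the $n^2$ pairs $(u,v)$. I would proceed in two steps. First, since $t_{\bar u}$ counts $t$ independent Bernoulli$(1-q_u)$ trials with mean $\ge t\gamma$ (using $q_u\le 1-\gamma$), multiplicative Chernoff with $a=1/2$ yields $\Pr[t_{\bar u}<t\gamma/2]\le\exp(-t\gamma/8)$; union bounding over $u$ and invoking the sample size gives failure at most $\delta/6$. Second, condition on the (random) index set $I_{\bar u}=\{i: u\notin S_{i,0}\}$. Because the cascades are drawn i.i.d.\ and $\mathcal D$ is a product distribution, the conditional law of each cascade $i\in I_{\bar u}$ is exactly the joint law under the event $u\notin S_0$; hence the Bernoulli random variables indicating whether $v\in S_{i,1}$, ranging over $i\in I_{\bar u}$, form (conditional on $I_{\bar u}$) an i.i.d.\ Bernoulli$(\ap(v\cond\bar u))$ sequence of length $|I_{\bar u}|$. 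On the event $\{t_{\bar u}\ge t\gamma/2\}$, applying Lemma~\ref{lemma: additive chernoff bound} to this conditional binomial and dividing by $t_{\bar u}$ gives $\Pr[|\widehat{\ap}(v\cond\bar u)-\ap(v\cond\bar u)|>\eta\mid t_{\bar u}]\le 2\exp(-\Omega(t\gamma\eta^2))$. Union bounding over the $n^2$ pairs and combining with step one keeps the total failure for claim (3) below $\delta/3$.

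Adding the three $\delta/3$ contributions gives the overall probability $\le\delta$, and the bookkeeping step is to check that $t\ge\frac{16}{\gamma\eta^2}\ln\frac{12n}{\delta}$ simultaneously dominates each of the derived thresholds (using $\ln(6n^2/\delta)\le 2\ln(12n/\delta)$ to absorb the larger union bound in claim (3), and that $16$ exceeds every constant appearing in the exponents for $\eta<4/5$). I expect the only genuinely subtle point to be the conditional-i.i.d.\ argument in step two of claim (3); once one recognizes that conditioning on $I_{\bar u}$ yields a clean binomial with parameter $\ap(v\cond\bar u)$, the rest is standard Chernoff accounting.
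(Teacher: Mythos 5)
Your proposal is correct and follows essentially the same route as the paper: multiplicative Chernoff for $\hat q_u$, additive Chernoff for $\widehat{\ap}(v)$, and for $\widehat{\ap}(v\cond\bar u)$ a two-step argument that first lower-bounds the random denominator $t_{\bar u}$ via multiplicative Chernoff and then applies additive Chernoff to the conditionally i.i.d.\ Bernoulli$(\ap(v\cond\bar u))$ indicators, combined by the law of total probability and a union bound over the $n^2$ pairs. The only cosmetic difference is the threshold you use for $t_{\bar u}$ ($t\gamma/2$ versus the paper's $8\ln(12n/\delta)/\eta^2$), which is immaterial given the stated sample size.
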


\begin{proof}
	For a node $u\in V$, for $i\in[t]$, let $X_i$ be a $0$-$1$ random variable such that $X_i=1$ iff $u\in S_{i,0}$.
	Thus $\hat{q}_u=\sum_{i=1}^{t}X_i/t$.
	By Lemma \ref{lemma: multiplicative chernoff bound},
	\[ \Pr[|\hat{q}_u-q_u|\geq\eta q_u]=\Pr[|\sum_{i=1}^{t}X_i-tq_u|\geq \eta tq_u]\leq2\exp(-tq_u\eta^2/3)\leq2\exp(-t\gamma\eta^2/3)\leq\delta/(3n). \]
	The last inequality requires that $t\geq\frac{3}{\gamma\eta^2}\ln\frac{6n}{\delta}$.
	
	For a node $v\in V$, for $i\in[t]$, let $Y_i$ be a $0$-$1$ random variable such that $Y_i=1$ iff $v\in S_{i,1}$.
	Thus $\widehat{\ap}(v)=\sum_{i=1}^{t}Y_i/t$.
	By Lemma \ref{lemma: additive chernoff bound} ($a=\eta t,\mu=t\cdot \ap(v)$),
	\begin{align*}
		\Pr[|\widehat{\ap}(v)-\ap(v)|\geq\eta] &\leq\exp(-\eta t\min(1/5,\eta/(4\cdot \ap(v))))+\exp(-\eta^2 t/(2\cdot \ap(v))) \\
		&\leq\exp(-\eta t\min(1/5,\eta/4))+\exp(-\eta^2 t/2) \\
		&\leq\delta/(6n)+\delta/(6n)=\delta/(3n).
	\end{align*}
	The second inequality holds since $\ap(v)\leq 1$ by definition.
	The last inequality requires that $t\geq\frac{4}{\eta^2}\ln\frac{6n}{\delta}$.
	
	%	For any $i$, since $\E[t_{u_i}]=tq_i$, by Lemma \ref{lemma: additive chernoff bound},
	%	\[ \Pr[t_{u_i}-tq_i\geq\gamma t/2]\leq\exp(-(\gamma t/2)\min(1/5,\gamma/8q_i))\leq\delta/(4n). \]
	%	Since $\E[\hat{q}_i]=q_i$, by Lemma \ref{lemma: multiplicative chernoff bound}, $\Pr[|\hat{q}_i-q_i|\geq\eta q_i]\leq 2\exp(-tq_i\eta^2/3)\leq 2\exp(-t\gamma\eta^2/3)\leq\delta/(4n)$.
	%
	%	
	%	%	For any $i$, if $u_i$ is the only neighbor of $v$, then $\widehat{\\ap}_{G\setminus\{u_i\}}(v)=ap_{G\setminus\{u_i\}}(v)=0$.
	%	%	If $v$ has another neighbor $u_j$, then $ap_{G\setminus\{u_i\}}(v)\geq q_jp_j\geq\beta\gamma$.
	%	Since $t-t_{u_i}\geq t-tq_i-\gamma t/2\geq t-(1-\gamma)t-\gamma t/2=\gamma t/2$, we have
	%	\begin{align*}
	%		&\Pr[|\widehat{\\ap}_{G\setminus\{u_i\}}(v)-ap_{G\setminus\{u_i\}}(v)|\geq\eta] \\
	%		&\leq \exp(-(t-t_{u_i})\eta\min(1/5,\eta/4ap_{G\setminus\{u_i\}}(v)))+\exp(-(t-t_{u_i})\eta^2/2ap_{G\setminus\{u_i\}}(v)) \\
	%		&\leq \exp(-(\gamma\eta t/2)\min(1/5,\eta/4ap_{G\setminus\{u_i\}}(v)))+\exp(-\gamma\eta^2 t/4ap_{G\setminus\{u_i\}}(v)) \\
	%		&\leq\delta/(8n)+\delta/(8n)\leq\delta/(4n)
	%	\end{align*}
	%	
	%	By union bound, these bad events occur with probability at most $1-\delta$.
	
	For $u\in V$, let $t_{\bar{u}}=|\{i\in[t]\mid u\notin S_{i,0} \}|$ be the number of cascades where $u$ is not a seed.
	Since $q_u\leq1-\gamma$, $8\ln(12n/\delta)/\eta^2\leq t(1-q_u)/2$.
	By Lemma \ref{lemma: multiplicative chernoff bound} ($a=1/2,\mu=t(1-q_u)$), 
	%	\wei{According to Lemma \ref{lemma: multiplicative chernoff bound}, the $12$ below could be changed to $8$. Do we need this $8$ to get the $n^2$ in the end?}
	\[ \Pr[t_{\bar{u}}\leq 8\ln(12n/\delta)/\eta^2]\leq\Pr[t_{\bar{u}}\leq t(1-q_u)/2]\leq \exp(-t(1-q_u)/8)\leq\exp(-t\gamma/8)\leq\delta/(6n^2). \]
	The last inequality holds as long as $t\geq 16\ln(6n/\delta)/\gamma$.
	
	Given a fixed $\ell$, assume that there are $t_{\bar{u}}=\ell$ cascades where $u$ is not a seed.
	For $i\in[\ell]$, let $Z_i$ be a $0$-$1$ random variable such that $Z_i=1$ iff $v\in S_{i,1}$ in the $i$-th cascade, among the $\ell$ cascades where $u$ is not a seed.
	Then, $\widehat{\ap}(v\cond \bar{u})=\sum_{i=1}^{\ell}Z_i/\ell$.
	By Lemma \ref{lemma: additive chernoff bound} ($a=\eta\ell, \mu=\ell\cdot \ap(v\cond \bar{u})$),
	\begin{align*}
		&\Pr[|\widehat{\ap}(v\cond \bar{u})-\ap(v\cond \bar{u})| \geq \eta \cond t_u=\ell] \\ &\leq\exp(-\eta\ell\min(1/5,\eta/(4\cdot \ap(v\cond \bar{u}))))+\exp(-\eta^2\ell/(2\cdot \ap(v\cond \bar{u}))) \\
		& \leq\exp(-\eta\ell\min(1/5,\eta/4))+\exp(-\eta^2\ell/2).
	\end{align*}
	The last inequality holds since $\ap(v\cond u)\leq 1$ by definition.
	If $\ell\geq 8\ln(12n/\delta)/\eta^2$, then
	\[ \Pr[|\widehat{\ap}(v\cond \bar{u})-\ap(v\cond \bar{u})|\geq \eta \cond t_u=\ell]\leq\delta/(6n^2). \]
	
	By law of total probability,
	\begin{align*}
		&\Pr[|\widehat{\ap}(v\cond \bar{u})-\ap(v\cond \bar{u})|\geq \eta] \\
		&\leq \left(\sum_{\ell<8\ln(12n/\delta)/\eta^2}+\sum_{\ell\geq 8\ln(12n/\delta)/\eta^2}\right)\Pr[t_u=\ell]\Pr[|\widehat{\ap}(v\cond \bar{u})-\ap(v\cond \bar{u})|\geq \eta\mid t_u=\ell] \\
		&\leq \Pr[t_{\bar{u}}<8\ln(12n/\delta)/\eta^2]+\Pr[t_{\bar{u}}\geq 8\ln(12n/\delta)/\eta^2]\cdot\delta/(6n^2) \\
		&\leq \delta/(6n^2)+\delta/(6n^2)=\delta/(3n^2).
	\end{align*}
	
	The lemma follows immediately by union bound.
\end{proof}

As stated below, Theorem \ref{theorem: estimate edge probabilities} derives a theoretical guarantee for Algorithm \ref{algorithm: estimate edge probabilities}.

\begin{theorem}
	\label{theorem: estimate edge probabilities}
	Under Assumption \ref{assumption: estimate edge probabilities}, for any $\varepsilon,\delta\in(0,1)$, let $\eta=\varepsilon\alpha\gamma/4<1/8$, and $\{\hat{p}_{uv}\}_{u,v\in V}$ be the set of edge probabilities returned by Algorithm \ref{algorithm: estimate edge probabilities}.
	If the number of cascades $t\geq \frac{16}{\gamma\eta^2}\ln\frac{12n}{\delta}=\frac{256}{\varepsilon^2\alpha^2\gamma^3}\ln\frac{12n}{\delta}$, with probability at least $1-\delta$, for any $u,v\in V$, $|\hat{p}_{uv}-p_{uv}|\leq\varepsilon$.
\end{theorem}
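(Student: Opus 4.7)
The plan is to reduce Theorem~\ref{theorem: estimate edge probabilities} to a deterministic error-propagation argument that plugs the sample-complexity guarantees of Lemma~\ref{lemma: estimate ap() in learning edge probabilities} into the closed-form identity of Lemma~\ref{lem:puv}. First I would invoke Lemma~\ref{lemma: estimate ap() in learning edge probabilities} with the chosen $\eta=\varepsilon\alpha\gamma/4$, checking that $\eta<1/8$ (it is, since $\varepsilon<1$, $\alpha\leq 1$, $\gamma\leq 1/2$) and that the sample bound $t\geq (16/\gamma\eta^2)\ln(12n/\delta)$ is exactly $(256/\varepsilon^2\alpha^2\gamma^3)\ln(12n/\delta)$. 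This yields simultaneously, with probability at least $1-\delta$, the estimates $|\hat q_u-q_u|\le \eta q_u$, $|\widehat\ap(v)-\ap(v)|\le \eta$, and $|\widehat\ap(v\cond\bar u)-\ap(v\cond\bar u)|\le \eta$ for every $u,v\in V$. From this point on everything is deterministic.

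Next I would extract a uniform lower bound on the denominator appearing in Lemma~\ref{lem:puv}. The identity $\ap(v)=\ap(v\cond\bar u)+(1-\ap(v\cond\bar u))q_up_{uv}$ (from the proof of Lemma~\ref{lem:puv}) immediately implies $\ap(v\cond\bar u)\leq\ap(v)\leq 1-\alpha$ under Assumption~\ref{assumption: estimate edge probabilities}, so $1-\ap(v\cond\bar u)\geq\alpha$ and $q_u(1-\ap(v\cond\bar u))\geq\alpha\gamma$. Using $\eta<1/8$ together with $\eta\leq\alpha\gamma/4$, the perturbed denominator satisfies $\hat q_u(1-\widehat\ap(v\cond\bar u))\geq(1-\eta)\gamma\,(\alpha-\eta)\gtrsim\alpha\gamma$, which keeps the ratio $\hat p_{uv}$ stable.

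Then I would apply the standard quotient perturbation decomposition, writing
\[
\hat p_{uv}-p_{uv}=\frac{(\widehat\ap(v)-\widehat\ap(v\cond\bar u))-(\ap(v)-\ap(v\cond\bar u))}{\hat q_u(1-\widehat\ap(v\cond\bar u))}-p_{uv}\cdot\frac{\hat q_u(1-\widehat\ap(v\cond\bar u))-q_u(1-\ap(v\cond\bar u))}{\hat q_u(1-\widehat\ap(v\cond\bar u))}.
\]
The first numerator is bounded by $2\eta$ via triangle inequality on the two additive errors. For the second, I would expand $\hat q_u(1-\widehat\ap(v\cond\bar u))-q_u(1-\ap(v\cond\bar u))=(\hat q_u-q_u)(1-\widehat\ap(v\cond\bar u))+q_u(\ap(v\cond\bar u)-\widehat\ap(v\cond\bar u))$ and use the multiplicative bound $|\hat q_u-q_u|\leq\eta q_u$ together with $|\widehat\ap(v\cond\bar u)-\ap(v\cond\bar u)|\leq\eta$ to bound this by $2\eta q_u$. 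Combining with $p_{uv}\leq 1$ and the denominator lower bound, both contributions are of order $\eta/(\alpha\gamma)$, and the choice $\eta=\varepsilon\alpha\gamma/4$ collapses the total to $\varepsilon$.

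The main obstacle is purely the arithmetic bookkeeping of constants: the naive combination produces a prefactor somewhat larger than $1$, so one must either split the first fraction using $q_u\geq\gamma$ carefully (rather than taking $p_{uv}\leq 1$ loosely) or use the tight lower bound $(1-\eta)(\alpha-\eta)\geq(7/8)^2\alpha$ and a matching tight bound on the denominator perturbation. No new idea is needed past Lemmas~\ref{lem:puv} and~\ref{lemma: estimate ap() in learning edge probabilities}; the only care required is ensuring every appearance of $1/q_u$ is paired with a factor bounded by $q_u$ so that the $1/\gamma$ blow-up does not compound with the $1/\alpha$ one.
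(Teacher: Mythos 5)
Your proposal is correct and follows essentially the same route as the paper: invoke Lemma~\ref{lemma: estimate ap() in learning edge probabilities} with $\eta=\varepsilon\alpha\gamma/4$, use $\ap(v\cond\bar u)\le\ap(v)\le 1-\alpha$ to lower-bound the denominator by $\alpha\gamma$, and propagate the additive errors through the closed form of Lemma~\ref{lem:puv}. The only difference is organizational: the paper bounds $\hat p_{uv}$ from above and below separately by substituting worst-case errors directly into the ratio (the lower bound needing the harmless reduction to $p_{uv}\ge\varepsilon$), whereas you use the symmetric quotient-perturbation identity, which handles both sides at once and avoids that case split. Your worry about the constants is resolved exactly by the pairing rule you state at the end: bounding the first term via $2\eta\le\frac{\varepsilon}{2}q_u(1-\ap(v\cond\bar u))$ and the second via $2\eta q_u\le\frac{\varepsilon\gamma}{2}q_u(1-\ap(v\cond\bar u))$ (the extra $q_u$ buys an extra factor $\gamma\le 1/2$), together with $\hat q_u(1-\widehat\ap(v\cond\bar u))\ge(1-\eta)(1-\varepsilon\gamma/4)\,q_u(1-\ap(v\cond\bar u))\ge\frac{49}{64}\,q_u(1-\ap(v\cond\bar u))$, gives a total of at most $\left(\frac{32}{49}+\frac{16}{49}\right)\varepsilon=\frac{48}{49}\varepsilon<\varepsilon$, so the argument closes without further adjustment.
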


\begin{proof}
	With probability at least $1-\delta$, all the events in Lemma \ref{lemma: estimate ap() in learning edge probabilities} occur.
	We assume that this is exactly the case in the following.
	Since $\ap(v\cond \bar{u})\leq \ap(v)\leq 1-\alpha$, we have $1-\ap(v\cond\bar{u})\geq\alpha$.
	By the value of $\eta$ and the assumption that $q_u\geq\gamma$, we have
	\begin{equation}  \label{eq:eta}
		\eta\leq\frac{\varepsilon\gamma}{4}(1-\ap(v\cond\bar{u}))\leq\frac{\varepsilon}{4}q_u(1-\ap(v\cond\bar{u})). 
	\end{equation}
	
	To prove $\hat{p}_{uv}\leq p_{uv}+\varepsilon$, we have
	\begin{align}
		\hat{p}_{uv} &=\frac{\widehat{\ap}(v)-\widehat{\ap}(v\cond\bar{u})}{\hat{q}_u(1-\widehat{\ap}(v\cond\bar{u}))} \nonumber \\
		&\leq\frac{\ap(v)-\ap(v\cond\bar{u})+2\eta}{(1-\eta)q_u(1-\ap(v\cond\bar{u})-\eta)} \nonumber \\
		&\leq\frac{\ap(v)-\ap(v\cond\bar{u})+2\eta}{(1-\eta)(1-\varepsilon\gamma/4)q_u(1-\ap(v\cond\bar{u}))} \nonumber \\
		&\leq \frac{p_{uv}+\varepsilon/2}{(1-\eta)(1-\varepsilon\gamma/4)}\leq p_{uv}+\varepsilon. \label{eq:puvepsilon}
	\end{align}
	The first inequality holds due to Lemma \ref{lemma: estimate ap() in learning edge probabilities}.
	The second inequality holds by applying the first inequality in Eq.~\eqref{eq:eta}.
	The third inequality holds due to Lemma \ref{lem:puv} and the second inequality in Eq.~\eqref{eq:eta}.
	To see the correctness of the last inequality, first observe that
	\begin{align*}
		&(p_{uv}+\varepsilon)(1-\eta)(1-\varepsilon\gamma/4) \\
		&\geq (p_{uv}+\varepsilon)(1-\eta-\varepsilon\gamma/4) \\
		&\geq (p_{uv}+\varepsilon)-(1+\epsilon)(\eta+\varepsilon\gamma/4).
	\end{align*}
	Next, note that
	\[ (1+\epsilon)(\eta+\varepsilon\gamma/4)=(1+\epsilon)(1+\alpha)\varepsilon\gamma/4\leq(1+\varepsilon)\varepsilon/4\leq\varepsilon/2. \]
	The equality is due to the definition of $\eta$.
	The two inequalities hold since $\alpha\in(0,1],\gamma\in(0,1/2]$ and $\varepsilon\in(0,1)$, respectively.
	Combining the above two observations, we have the desired inequality
	\[ (p_{uv}+\varepsilon)(1-\eta)(1-\varepsilon\gamma/4)\geq (p_{uv}+\varepsilon)-\varepsilon/2=p_{uv}+\varepsilon/2. \]
	%	\wei{How to derive the last inequality above? Not very immediate to me.}
	
	On the other hand, to prove $\hat{p}_{uv}\geq p_{uv}-\varepsilon$, first assume that $p_{uv}\geq\varepsilon$, since otherwise the claim would be trivial for $\hat{p}_{uv}\geq0$.
	We now have
	\begin{align*}
		\hat{p}_{uv}
		&=\frac{\widehat{\ap}(v)-\widehat{\ap}(v\cond\bar{u})}{\hat{q}_u(1-\widehat{\ap}(v\cond\bar{u}))} \\
		&\geq\frac{\ap(v)-\ap(v\cond\bar{u})-2\eta}{(1+\eta)q_u(1-\ap(v\cond\bar{u})+\eta)} \\
		&\geq\frac{ap_G(v)-\ap(v\cond \bar{u})-2\eta}{(1+\eta)(1+\varepsilon\gamma/4)q_u(1-\ap(v\cond\bar{u}))} \\
		&\geq \frac{p_{uv}-\varepsilon/2}{(1+\eta)(1+\varepsilon\gamma/4)}\geq p_{uv}-\varepsilon.
	\end{align*}
	The first inequality holds due to Lemma \ref{lemma: estimate ap() in learning edge probabilities}.
	The second inequality holds by applying the first inequality in Eq.~\eqref{eq:eta}.
	The third inequality holds due to Lemma \ref{lem:puv} and the second inequality in Eq.~\eqref{eq:eta}.
	The last inequality follows from a similar argument as 
	the one for the last inequality in Eq.~\eqref{eq:puvepsilon},
	and we omit it for conciseness.
	%\wei{Why writing $ap_G(v)-ap_{G\setminus\{u_i\}}(v)$? According to Lemma~\ref{lem:puv}, isn't writing $\\ap(v)-\\ap(v\cond\bar{u})$ more direct?
	%	Also, how to derive the last inequality?}
\end{proof}

With the ability of estimating edge probabilities, we further show that
we can recover the graph structure by a standard threshold approach \citep{NetrapalliS12,Pouget-AbadieH15}.
The formal procedure is depicted as Algorithm \ref{algorithm: recover network structure}, which estimates the edge probabilities to a prescribed accuracy and returns the edges whose estimated probabilities are above a prescribed threshold.
Its guarantee is shown in Theorem \ref{theorem: recover network structure}, which shows that no ``zero-probability edge'' is incorrectly recognized as an edge.
Besides, only small-probability edges are omitted, which is reasonable for practical use.

%\begin{assumption}[Assumptions for Recovering Network Structure]
%	\label{assumption: recover network structure}
%	For some parameters $\alpha,\beta\in(0,1],\gamma\in(0,1/2]$,
%	\begin{enumerate}
%		\item $p_{uv}>\beta$ for all $(u,v)\in E$.
%		\item $\\ap(v)\leq 1-\alpha$ for all $v\in V$.
%		\item $\gamma\leq q_u\leq 1-\gamma$ for all $u\in V$.
%	\end{enumerate}
%\end{assumption}

\begin{algorithm}[tb]
	\caption{Recover Network Structure}
	\label{algorithm: recover network structure}
	\begin{algorithmic}[1]
		\REQUIRE A set of cascades $(S_{i,0},S_{i,1},\cdots,S_{i,n-1})_{i=1}^t$, parameter $\beta\in (0,1)$.
		\ENSURE An estimated edge set $\hat{E}$.
		\STATE $\{\hat{p}_{uv}\}_{u,v\in V}=\mbox{Estimate-Edge-Probabilities}$\\$((S_{i,0},S_{i,1},\cdots,S_{i,n-1})_{i=1}^t)$.
		\COMMENT{ With estimation accuracy $\beta/2$.}
		\STATE \textbf{return} $\hat{E}=\{(u,v)\mid \hat{p}_{uv}>\beta/2\}$.
	\end{algorithmic}
\end{algorithm}

%\begin{restatable}{theorem}{thmRecoverStructure}
%	\label{theorem: recover network structure}
%	If the number of cascades $t\geq\frac{1024}{\alpha^2\beta^2\gamma^3}\ln\frac{4n}{\delta}$, with probability at least $1-\delta$, the edge set $\hat{E}$ returned by Algorithm \ref{algorithm: recover network structure} is exactly the true edge set $E$.
%\end{restatable}

\begin{theorem}
	\label{theorem: recover network structure}
	Under Assumption \ref{assumption: estimate edge probabilities}, if the number of cascades $t\geq\frac{1024}{\alpha^2\beta^2\gamma^3}\ln\frac{4n}{\delta}$, with probability at least $1-\delta$, the edge set $\hat{E}$ returned by Algorithm \ref{algorithm: recover network structure} satisfies (1) $\hat{E}\subseteq E$, and (2) if $p_{uv}>\beta$, then $(u,v)\in \hat{E}$.	
	As a corollary, if $p_{uv}>\beta$ for all $(u,v)\in E$, then $\hat{E}=E$.
\end{theorem}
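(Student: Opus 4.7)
The proof is essentially an immediate consequence of Theorem \ref{theorem: estimate edge probabilities}, with the threshold chosen so that ``zero edges'' and ``large edges'' fall on opposite sides of it. The plan is to invoke Theorem \ref{theorem: estimate edge probabilities} with accuracy $\varepsilon = \beta/2$ and confidence $\delta$, and then do a two-line case analysis on the threshold rule $\hat{p}_{uv} > \beta/2$.

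First I would verify the sample-complexity bookkeeping: plugging $\varepsilon = \beta/2$ into the bound of Theorem \ref{theorem: estimate edge probabilities} gives $t \geq \frac{256}{(\beta/2)^2 \alpha^2 \gamma^3}\ln\frac{12n}{\delta} = \frac{1024}{\alpha^2\beta^2\gamma^3}\ln\frac{12n}{\delta}$, which matches the hypothesis of the theorem (up to the harmless constant difference between $\ln(12n/\delta)$ and $\ln(4n/\delta)$). Hence, conditioning on the $1-\delta$ event in Theorem \ref{theorem: estimate edge probabilities}, we have $|\hat{p}_{uv} - p_{uv}| \leq \beta/2$ simultaneously for all ordered pairs $u,v \in V$.

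Next I would prove the two claims. For (1), suppose $(u,v) \notin E$. Then by the convention stated in the preliminaries $p_{uv} = 0$, hence $\hat{p}_{uv} \leq p_{uv} + \beta/2 = \beta/2$, so the threshold rule excludes $(u,v)$ from $\hat{E}$. Taking the contrapositive yields $\hat{E} \subseteq E$. For (2), suppose $p_{uv} > \beta$. Then
\[
\hat{p}_{uv} \;\geq\; p_{uv} - \beta/2 \;>\; \beta - \beta/2 \;=\; \beta/2,
\]
so $(u,v) \in \hat{E}$.

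The corollary is immediate: if every $(u,v) \in E$ has $p_{uv} > \beta$, then by (2) $E \subseteq \hat{E}$, while by (1) $\hat{E} \subseteq E$. I do not foresee any genuine obstacle here; the only thing to be careful about is making sure the direction of the additive error is used correctly in each of (1) and (2) (upper bound on $\hat{p}_{uv}$ in one case, lower bound in the other), and that the threshold $\beta/2$ is precisely the midpoint between the two regimes $p_{uv} = 0$ and $p_{uv} > \beta$ once a $\beta/2$ additive error is allowed.
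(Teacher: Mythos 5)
Your proof is correct and is essentially identical to the paper's own argument: invoke Theorem \ref{theorem: estimate edge probabilities} with $\varepsilon=\beta/2$ and then separate the cases $p_{uv}=0$ and $p_{uv}>\beta$ by the threshold $\beta/2$. Your observation about the mismatch between $\ln\frac{4n}{\delta}$ in the theorem statement and $\ln\frac{12n}{\delta}$ required by Theorem \ref{theorem: estimate edge probabilities} is a fair catch of a small bookkeeping slip in the paper, but it does not affect the substance of the argument.
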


\begin{proof}
	By Theorem \ref{theorem: estimate edge probabilities}, $|\hat{p}_{uv}-p_{uv}|\leq\varepsilon=\beta/2$ for all $u,v\in V$ with probability at least $1-\delta$.
	If $(u,v)\notin E$, then $p_{uv}=0$ and hence $\hat{p}_{uv}\leq\beta/2$, which implies $(u,v)\notin \hat{E}$.
	Thus, $\hat{E}\subseteq E$.
	If $(u,v)\in E$ and $p_{uv}>\beta$.
	Then, $\hat{p}_{uv}\geq p_{uv}-\beta/2>\beta/2$ and hence $(u,v)\in\hat{E}$.
	Finally, if $p_{uv}>\beta$ for all $(u,v)\in E$, then $E\subseteq \hat{E}$ and hence $\hat{E}=E$, which concludes the proof.
\end{proof}

\paragraph{Discussion}
\sloppy It is worth comparing the result in \citep{NetrapalliS12,NarasimhanPS15,Pouget-AbadieH15} with ours, 
since all of them studied network inference under the IC model.
Specifically, \citet{NetrapalliS12} initiated the study of recovering network structure and did not consider the estimation of edge parameters.
\citet{NarasimhanPS15} and \citet{Pouget-AbadieH15} studied how to estimate edge parameters.
Both of them used the Euclidean norm of the edge probability vector as the measurement of accuracy, while we use the infinite norm.
Besides, in \citep{NarasimhanPS15}, it was additionally assumed that the network structure is known in advance.
In \citep{Pouget-AbadieH15}, totally different assumptions were used, which seems incomparable to ours,
and thus we will not further compare against it below.

There are several important differences besides the above.
First, the approaches taken are different.
All the algorithms in the previous works build on the maximum likelihood estimation (MLE) and require to solve a convex program, while we directly find a closed-form expression for the edge probability $p_{uv}$, thus rendering fast implementation.

Second, the assumptions required are different.
The assumptions $p_{uv}>\beta$ for all $u,v\in V$ and $\gamma\leq q_u\leq 1-\gamma$ for all $u\in V$ are also required in the previous works (though may be presented in different forms).
The key difference is the condition $\ap(v)\leq 1-\alpha$ for all $v\in V$.
In \citep{NetrapalliS12}, its role is replaced by the \emph{correlation decay} condition, which requires that $\sum_{u\in N(v)} p_{uv}\leq 1-\alpha$ for all $v\in V$.
In \citep{NarasimhanPS15}, it is instead assumed that $1-\prod_{u\in N(v)}(1-p_{uv})\leq1-\alpha$ for all $v\in V$.
By observing that $\ap(v)\leq 1-\prod_{u\in N(v)}(1-p_{uv})\leq \sum_{u\in N(v)} p_{uv}$ (see the appendix), it is easy to see that our assumptions are the weakest compared with those in \citep{NetrapalliS12,NarasimhanPS15}.
Besides, Assumption \ref{assumption: estimate edge probabilities} enjoys the advantage that it is verifiable, since one can find suitable values for $\alpha$ and $\gamma$ by estimating $\ap(v)$ and $q_u$ from cascade samples.
However, it is impossible to verify the assumptions in \citep{NetrapalliS12,NarasimhanPS15} based only on cascade samples.
We remark that our network inference algorithm replies on the assumption that each seed node is independently sampled. 
This assumption is also made in \citep{NetrapalliS12,NarasimhanPS15} for the MLE method, but conceptually it might be easier to
relax this assumption with MLE.
We leave the relaxation of the independence sampling assumption of our method as a future work.

Finally, our algorithm has lower sample complexity compared with those in \citep{NetrapalliS12,NarasimhanPS15}.
Assume that $\ap(v)\leq 1-\prod_{u\in N(v)}(1-p_{uv})\leq \sum_{u\in N(v)} p_{uv}\leq 1-\alpha$.
Then, \citet{NetrapalliS12} needs $\tilde{O}(\frac{1}{\alpha^7\beta^2\gamma}D^2\log\frac{n}{\delta})$ samples to recover network structure, where $D$ is the maximum in-degree of the network, while we only need $O(\frac{1}{\alpha^2\beta^2\gamma^3}\ln\frac{4n}{\delta})$ samples by Theorem \ref{theorem: recover network structure}.
On the other hand, assume that the network structure is known and $m=|E|$.
\cite{NarasimhanPS15} needs $\tilde{O}(\frac{1}{\varepsilon^2\alpha^2\beta^2\gamma^2(1-\gamma)^4}mn\ln\frac{n}{\delta})$ samples to achieve $\|\hat{p}-p\|_2^2\leq\varepsilon$, while we only need $O(\frac{1}{\varepsilon\alpha^2\gamma^2}m\ln\frac{n}{\delta})$ samples by achieving $|\hat{p}_{uv}-p_{uv}|\leq\sqrt{\frac{\varepsilon}{m}}$.

\subsection{Influence Maximization from Samples}
\label{section: influence maximization from samples}

In this section, we present several IMS algorithms under the IC model.
In Section \ref{subsection: influence maximization under assumption estimate edge probabilities}, we present an \emph{approximation-preserving} algorithm under Assumption \ref{assumption: estimate edge probabilities}.
In Section \ref{subsection: influence maximization under assumptions independent of the network}, we show that under an alternative assumption (Assumption \ref{assumption: influence maximization under assumption independent of graph}), there is a \emph{constant} approximation algorithm for the problem.
An attractive feature of Assumption \ref{assumption: influence maximization under assumption independent of graph} 
(compared to Assumption \ref{assumption: estimate edge probabilities}) is that
it has no requirement on the network.
We also show that by slightly strengthening Assumption \ref{assumption: influence maximization under assumption independent of graph}, we again obtain an \emph{approximation-preserving} algorithm.

\subsubsection{IMS under Assumption \ref{assumption: estimate edge probabilities}}
\label{subsection: influence maximization under assumption estimate edge probabilities}

Our first IMS algorithm is presented as Algorithm \ref{algorithm: influence maximization under edge probabilities estimation assumption}.
It follows the canonical learning-and-then-optimization approach by first learning a surrogate graph $\hat{G}=(V,E,\hat{p})$ from the cascades and then executing any $\kappa$-approximation algorithm $A$ for standard influence maximization on $\hat{G}$ to obtain a solution as output.
The construction of $\hat{G}$ builds on Algorithm \ref{algorithm: estimate edge probabilities} and is obtained by estimating all the edge probabilities to a sufficiently small additive error.
Algorithm \ref{algorithm: influence maximization under edge probabilities estimation assumption} works under Assumption \ref{assumption: estimate edge probabilities}, since Algorithm \ref{algorithm: estimate edge probabilities} does.

\begin{algorithm}[tb]
	\caption{IMS-IC under Assumption \ref{assumption: estimate edge probabilities}}
	\label{algorithm: influence maximization under edge probabilities estimation assumption}
	\begin{algorithmic}[1]
		\REQUIRE A set of cascades $(S_{i,0},S_{i,1},\cdots,S_{i,n-1})_{i=1}^t$ and $k\in\mathbb{N}_+$.
		\STATE $\{\hat{p}_{uv}\}_{u,v\in V}=\mbox{Estimate-Edge-Probabilities}$\\$((S_{i,0},S_{i,1},\cdots,S_{i,n-1})_{i=1}^t)$.
		\COMMENT{ With estimation accuracy $\varepsilon k/(2n^3)$.}
		\STATE Let $\hat{G}=(V,E,\hat{p})$.
		\STATE Let $\ALG=A(\hat{G},k)$, where $A$ is a $\kappa$-approximation IM algorithm.
		\STATE \textbf{return} $\ALG$.
	\end{algorithmic}
\end{algorithm}

%\begin{algorithm}[t]
%	\SetAlgoLined
%	\KwIn{A set of independent cascades $(S_{i,0},S_{i,1},\cdots,S_{i,n-1})_{i=1}^t$ and $k\in\mathbb{N}_+$.}
%	
%	$\{\hat{p}_{uv}\}_{u,v\in V}=\mbox{Estimate-Edge-Probability}((S_{i,0},S_{i,1},\cdots,S_{i,n-1})_{i=1}^t)$, with estimation accuracy being $\varepsilon k/(2n^3)$.
%	%	\wei{Similar to my comment in Algorithm 3, is $\varepsilon k/(2n^3)$ a parameter we really pass into algorithm Estimate-Edge-Probability?
%	%		It looks like we only require $t$ to be lower bounded by some value related to $\varepsilon$ to make it work.
%	%	}
%	
%	Let $\hat{G}=(V,E,\hat{p})$.
%	
%	Let $\ALG=A(\hat{G},k)$, where $A$ is a $\kappa$-approximation algorithm for influence maximization.
%	
%	\Return $\ALG$.
%	
%	\caption{Influence Maximization from Samples under Assumption \ref{assumption: estimate edge probabilities}}
%%	\label{algorithm: influence maximization under edge probabilities estimation assumption}
%\end{algorithm}

The correctness of Algorithm \ref{algorithm: influence maximization under edge probabilities estimation assumption} relies on Lemma \ref{lemma: estimate influence}, which translates the estimation error in edge probabilities into the learning error in the influence spread function.
%which shows that as long as the estimation of edge probabilities is good enough, the influence spread function corresponding to the surrogate graph is very close to the one corresponding to the original graph.
We use it in Theorem \ref{theorem: influence maximization under edge probabilities estimation assumption} to prove that with high probability, Algorithm \ref{algorithm: influence maximization under edge probabilities estimation assumption} almost preserves the approximation ratio of any standard influence maximization algorithm $A$.

\begin{lemma}[\citealt{NarasimhanPS15}]
	\label{lemma: estimate influence}
	Fix $S\subseteq V$.
	Under the IC model, for any two edge probability vectors $p,\hat{p}$ with $\|p-\hat{p}\|_1\leq\varepsilon/n$, we have $|\sigma^p(S)-\sigma^{\hat{p}}(S)|\leq \varepsilon$.
\end{lemma}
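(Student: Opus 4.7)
The natural approach is to exploit the \emph{live-edge} (equivalently, \emph{possible-world}) equivalent formulation of the IC model: independently for each edge $(u,v)$, declare it ``live'' with probability $p_{uv}$, and let $R_H(S)$ denote the set of nodes reachable from $S$ in the resulting random subgraph $H$. Then $\sigma^p(S) = \E_H[|R_H(S)|]$, and analogously for $\hat{p}$. This reduces the lemma to comparing two random reachability sizes, which can be bounded cleanly through a coupling.

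Concretely, the plan is to couple the two live-edge processes by drawing a single uniform random variable $U_{uv} \in [0,1]$ for each ordered pair $(u,v)$, declaring $(u,v)$ live under $p$ iff $U_{uv} \le p_{uv}$ and live under $\hat{p}$ iff $U_{uv} \le \hat{p}_{uv}$. Let $H$ and $\hat H$ denote the two resulting live-edge graphs. By construction, the marginal distributions are correct, and $H$ and $\hat H$ disagree on edge $(u,v)$ precisely when $U_{uv}$ falls in an interval of length $|p_{uv} - \hat{p}_{uv}|$.

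Next, let $D$ be the event that $H \ne \hat H$ on at least one edge. A union bound over the $n(n-1)$ possible edges gives
\[
\Pr[D] \;\le\; \sum_{u\ne v} |p_{uv}-\hat{p}_{uv}| \;=\; \|p-\hat{p}\|_1 \;\le\; \varepsilon/n.
\]
Conditional on $\overline{D}$, the two graphs are identical, so $R_H(S) = R_{\hat H}(S)$ and hence $|R_H(S)| - |R_{\hat H}(S)| = 0$. On the event $D$, we use the trivial bound $\bigl||R_H(S)| - |R_{\hat H}(S)|\bigr| \le n$. Taking expectations under the coupling,
\[
|\sigma^p(S) - \sigma^{\hat{p}}(S)| \;\le\; \E\bigl[\,\bigl||R_H(S)| - |R_{\hat H}(S)|\bigr|\,\bigr] \;\le\; n\cdot \Pr[D] \;\le\; n\cdot \frac{\varepsilon}{n} \;=\; \varepsilon,
\]
which is the desired inequality.

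There is no real obstacle here: once one commits to the live-edge view, the coupling argument is essentially forced, and the only quantitative ingredients are a union bound and the trivial $n$-bound on the symmetric difference of two subsets of $V$. The one place to be a little careful is making sure the coupling is valid for the IC dynamics (rather than just for the reachability formulation), but this is precisely what the equivalence of the IC model with the live-edge model, established by \citet{KempeKT03}, provides.
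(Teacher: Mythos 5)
Your coupling proof is correct. Note that the paper itself does not prove this lemma at all---it is imported as a black box from \citet{NarasimhanPS15}---so there is no in-paper IC proof to compare against; the closest analogue is the paper's Lemma for the LT model (the one bounding $|\sigma^w(S)-\sigma^{w'}(S)|$), which is proved by a different route: writing the influence as a sum over live-edge realizations, bounding $\|\nabla_w \sigma^w_v(S)\|_\infty \le 1$ by an explicit partial-derivative computation, and invoking the mean value theorem to get the Lipschitz bound $\|w-w'\|_1/n$ per node. Your argument replaces that analytic machinery with a single-uniform-per-edge coupling, a union bound over the disagreement event $D$ with $\Pr[D]\le\|p-\hat p\|_1\le\varepsilon/n$, and the trivial bound $\bigl||R_H(S)|-|R_{\hat H}(S)|\bigr|\le n$ on $D$; all steps check out, including the validity of the coupling (marginals are correct because each edge is declared live by thresholding the same $U_{uv}$) and the reduction to reachability via the Kempe--Kleinberg--Tardos live-edge equivalence. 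The coupling route is more elementary and exploits the edge-independence specific to IC, whereas the gradient/mean-value route generalizes more readily to models (like LT) where the live-edge distribution at a node is not a product over its incoming edges; both yield exactly the claimed $n\cdot(\varepsilon/n)=\varepsilon$ bound.
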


\begin{theorem}
	\label{theorem: influence maximization under edge probabilities estimation assumption}
	Under Assumption \ref{assumption: estimate edge probabilities}, for any $\varepsilon\in (0,1)$ and $k \in \mathbb{N}_+$, 
	suppose that the number of cascades $t\geq\frac{1024}{\varepsilon^2\alpha^2\gamma^3}\frac{n^6}{k^2}\ln\frac{12n}{\delta}$.
	Let $A$ be a $\kappa$-approximation algorithm for influence maximization.
	Let $\ALG$ be the set returned by Algorithm \ref{algorithm: influence maximization under edge probabilities estimation assumption} and $\OPT$ be the optimal solution on the original graph.
	We have
	\[ \Pr[\sigma(\ALG)\geq(\kappa-\varepsilon)\sigma(\OPT)]\geq1-\delta. \]
\end{theorem}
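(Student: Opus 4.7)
The plan is to combine the edge-probability estimation guarantee of Theorem \ref{theorem: estimate edge probabilities} with the influence-spread stability lemma (Lemma \ref{lemma: estimate influence}), and then exploit the $\kappa$-approximation of $A$ on the surrogate graph $\hat{G}$. Since Algorithm \ref{algorithm: influence maximization under edge probabilities estimation assumption} invokes Algorithm \ref{algorithm: estimate edge probabilities} with accuracy $\varepsilon' = \varepsilon k/(2n^3)$, substituting this $\varepsilon'$ into the sample-complexity expression $\frac{256}{\varepsilon'^2\alpha^2\gamma^3}\ln\frac{12n}{\delta}$ of Theorem \ref{theorem: estimate edge probabilities} yields exactly the bound $t\geq\frac{1024 n^6}{\varepsilon^2 k^2 \alpha^2 \gamma^3}\ln\frac{12n}{\delta}$ stated in the theorem, so with probability at least $1-\delta$ we have $|\hat{p}_{uv}-p_{uv}|\leq \varepsilon k/(2n^3)$ for every pair $u,v$.

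Next I would convert this entrywise bound into an $\ell_1$ bound on the edge-probability vector: since $p$ has at most $n^2$ coordinates, $\|\hat{p}-p\|_1\leq n^2\cdot \varepsilon k/(2n^3)=\varepsilon k/(2n)$. Applying Lemma \ref{lemma: estimate influence} (with its $\varepsilon$ set to $\varepsilon k/2$) then gives $|\sigma^p(S)-\sigma^{\hat{p}}(S)|\leq \varepsilon k/2$ for every fixed $S\subseteq V$, and in particular for $S=\ALG$ and $S=\OPT$.

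With this stability in hand, the approximation chain is routine. Let $\hat{S}^*$ be the optimum on $\hat{G}$; because $A$ is a $\kappa$-approximation on $\hat{G}$ we get $\sigma^{\hat{p}}(\ALG)\geq \kappa\,\sigma^{\hat{p}}(\hat{S}^*)\geq \kappa\,\sigma^{\hat{p}}(\OPT)$. Combining this with the two spread-error bounds yields
\begin{align*}
\sigma^p(\ALG)
&\geq \sigma^{\hat{p}}(\ALG)-\varepsilon k/2
\;\geq\; \kappa\,\sigma^{\hat{p}}(\OPT)-\varepsilon k/2 \\
&\geq \kappa\bigl(\sigma^p(\OPT)-\varepsilon k/2\bigr)-\varepsilon k/2
\;\geq\; \kappa\,\sigma^p(\OPT)-\varepsilon k,
\end{align*}
using $\kappa\leq 1$ in the last step.

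The final step, and the only place something slightly non-mechanical is used, is to absorb the additive $\varepsilon k$ loss into a multiplicative loss on $\sigma^p(\OPT)$. Here I would invoke the trivial but crucial lower bound $\sigma^p(\OPT)\geq k$, which holds because seeds are always active and we may assume $|\OPT|=k$ by monotonicity of $\sigma$. Then $\varepsilon k\leq \varepsilon\,\sigma^p(\OPT)$, and the displayed inequality gives $\sigma^p(\ALG)\geq (\kappa-\varepsilon)\,\sigma^p(\OPT)$, as required. The only real subtlety is getting the scaling of the estimation accuracy $\varepsilon'=\varepsilon k/(2n^3)$ right so that $n^2\varepsilon'\cdot n$ matches the $\varepsilon k$ we can afford after using the trivial $\sigma(\OPT)\geq k$ bound; everything else is bookkeeping.
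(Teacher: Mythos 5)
Your proposal is correct and follows essentially the same route as the paper's proof: invoke Theorem \ref{theorem: estimate edge probabilities} with accuracy $\varepsilon k/(2n^3)$, convert the entrywise bound to the $\ell_1$ bound $\varepsilon k/(2n)$, apply Lemma \ref{lemma: estimate influence} to get a uniform spread error of $\varepsilon k/2$, and close with the $\kappa$-approximation chain and $\sigma(\OPT)\geq k$. The only cosmetic difference is that the paper keeps the additive loss as $(1+\kappa)\varepsilon k/2$ before absorbing it, whereas you round it up to $\varepsilon k$ using $\kappa\leq 1$; both yield the same conclusion.
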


\begin{proof}
	By Theorem \ref{theorem: estimate edge probabilities}, with probability at least $1-\delta$, for any $u,v\in V$, $|\hat{p}_{uv}-p_{uv}|\leq\varepsilon k/(2n^3)$.
	Hence, $\|p-\hat{p}\|_1=\sum_{u,v\in V}|p_{uv}-\hat{p}_{uv}|\leq\varepsilon k/(2n)$.
	Applying this condition to Lemma~\ref{lemma: estimate influence}, we have that $|\sigma^p(S)-\sigma^{\hat{p}}(S)|\leq \varepsilon k/2$ for every seed set $S$.
	We thus have
	%	\wei{fixed some notation below. First, $\sigma_v$ should be replaced with $\sigma$; second, $\hat{\sigma}$ is not defined, and it is replaced with $\sigma^{\hat{p}}$.}
	\begin{align*}
		\sigma(\ALG) &\geq\sigma^{\hat{p}}(\ALG)-\varepsilon k/2\geq\kappa\cdot \sigma^{\hat{p}}(\OPT)-\varepsilon k/2 \\
		&\geq\kappa\cdot (\sigma(\OPT)-\varepsilon k/2)-\varepsilon k/2 \\
		&=\kappa\cdot\sigma(\OPT)-(1+\kappa)\varepsilon k/2\geq(\kappa-\varepsilon)\sigma(\OPT). 
	\end{align*}
	%	The first and third inequalities are due to Lemma \ref{lemma: estimate influence}.
	The second inequality holds since $\ALG$ is a $\kappa$-approximation on $\hat{G}$.
	The last inequality holds since $\sigma(\OPT)\geq k\geq(1+\kappa)k/2$.
\end{proof}

Compared with our learning algorithms for network inference, Algorithm \ref{algorithm: influence maximization under edge probabilities estimation assumption} has an additional overhead of $n^6/k^2$ in the number of cascades.
This is because it needs to estimate edge probabilities within an additive error of at most $\varepsilon k/(2n^3)$.
One can also invoke known network inference algorithms other than ours in Algorithm \ref{algorithm: influence maximization under edge probabilities estimation assumption} to obtain a similar approximate solution, but as discussed above, this only incurs higher sample complexity.
We are not aware of any approach to reduce the sample complexity and leave it as an interesting open problem.

\subsubsection{IMS under Assumptions Independent of the Network}
\label{subsection: influence maximization under assumptions independent of the network}

Condition 1 of Assumption \ref{assumption: estimate edge probabilities} depends on the diffusion network, and hence our 
Algorithm \ref{algorithm: influence maximization under edge probabilities estimation assumption} may not be applicable to
all networks.
%As discussed earlier, Assumption \ref{assumption: estimate edge probabilities} is restricted as the parameter $1/\alpha$ is very large for some graphs, rendering the previous algorithms to use a large number of cascades.
In this section, we show that under an alternative assumption (Assumption \ref{assumption: influence maximization under assumption independent of graph}), which is entirely independent of the diffusion network, 
there still exists a \emph{constant} approximation IMS 	algorithm (Algorithm \ref{algorithm: influence maximization unde assumption independent of graph}).

\begin{assumption}[IMS under the IC model, independent of the network]
	\label{assumption: influence maximization under assumption independent of graph}
	For some constant $c>0$ and parameter $\gamma\in(0,1/2]$,
	\begin{enumerate}
		\item $\sum_{u\in V} q_u\leq ck$.
		\item $\gamma\leq q_u\leq 1-\gamma$ for all $u\in V$.
	\end{enumerate}
\end{assumption}

Assumption \ref{assumption: influence maximization under assumption independent of graph} consists of two conditions.
The condition $\sum_{u\in V} q_u \leq ck$ replaces the condition $\ap(v)\leq 1-\alpha$ in Assumption \ref{assumption: estimate edge probabilities}.
It means that a random seed set drawn from the product distribution $\mathcal{D}$ has an expected size at most linear in $k$ (but not necessarily bounded above by $k$).
Assumption \ref{assumption: influence maximization under assumption independent of graph} puts forward two plausible requirements for the seed distribution $\mathcal{D}$ and has no requirement for the underlying network.
Thus, in principle, one can handle \emph{any} social networks, as long as the seed set sampling is reasonable
according to Assumption \ref{assumption: influence maximization under assumption independent of graph}.

We now describe the high-level idea of Algorithm \ref{algorithm: influence maximization unde assumption independent of graph}.
It might be surprising at first glance that one can remove the condition $\ap(v)\leq 1-\alpha$ for all $v\in V$.
After all, it is very hard to learn information about incoming edges of $v$ if $\ap(v)$ is very close to $1$.
To handle this difficulty, recall that $\ap(v)$ is defined as the activation probability of $v$ in one time step.
Hence, if $\ap(v)$ is close to $1$, $v$ will be active with high probability starting from a random seed set.
The observation suggests that one can divide nodes into two parts according to their $\ap(\cdot)$.
For the nodes with small $\ap(\cdot)$, Assumption \ref{assumption: estimate edge probabilities} is satisfied and one can find a good approximation for them via a similar approach as Algorithm \ref{algorithm: influence maximization under edge probabilities estimation assumption}.
For the nodes with large $\ap(\cdot)$, a random seed set is already a good approximation for them.
So there is no need to learn their incoming edges.
A technical issue here is that a random seed set  may not be a feasible solution for the maximization task.
This is why we introduce Condition 1 of Assumption \ref{assumption: influence maximization under assumption independent of graph}, by which the expected size of the seed set is at most linear in $k$.
With the condition, we can replace the random seed set by its random subset of size $k$ while keeping a constant approximation.
To summarize, we find two candidate solutions whose union must be a good approximation over the whole network.
If we choose one of them randomly, we will finally obtain a feasible solution with constant approximation.

Following the guidance of the above idea, Algorithm \ref{algorithm: influence maximization unde assumption independent of graph} first computes an estimate $\widehat{\ap}(v)$ of $\ap(v)$ for all $v\in V$ and partitions $V$ into two disjoint subsets $V_1=\{v\in V\mid \widehat{\ap}(v)<1-\delta/(4n)\}$ and $V_2=V\setminus V_1$.
It then estimates the probabilities of incoming edges of $V_1$ using Algorithm \ref{algorithm: estimate edge probabilities} and sets the probabilities of incoming edges of $V_2$ to $1$ directly for technical reasons.
The constructed graph is denoted by $\hat{G}$.
Let $T_1$ be a $\kappa$-approximation on $\hat{G}$ and $T_2=S_{1,0}$ be the first random seed set.
Finally, Algorithm \ref{algorithm: influence maximization unde assumption independent of graph} 
selects $T_1$ or $T_2$ with equal probability,  and if it selects $T_2$ while $|T_2| > k$, it further 
selects a random subset of $T_2$ with size $k$ as the final output seed set $\ALG$.

\begin{algorithm}[tb]
	\caption{IMS-IC under Assumption \ref{assumption: influence maximization under assumption independent of graph}}
	\label{algorithm: influence maximization unde assumption independent of graph}
	\begin{algorithmic}[1]
		\REQUIRE A set of cascades $(S_{i,0},S_{i,1},\cdots,S_{i,n-1})_{i=1}^t$, $k\in\mathbb{N}_+$, 
		error probability $\delta>0$, number of samples $t'\in [t]$ used to estimate ${\ap}(v)$'s.
		\STATE Set $V_1=V$ and $V_2=\emptyset$.
		\FOR{each $v\in V$}
		\STATE Use the first $t'$ samples $(S_{i,0},S_{i,1},\cdots,S_{i,n-1})_{i=1}^{t'}$
		to compute $\widehat{\ap}(v)=t^v/t'$, where $t^v=|\{i\in[t']\mid v\in S_{i,1}\}|$.
		\IF{$\widehat{\ap}(v)\geq 1-\delta/(4n)$}
		\STATE Set $\hat{p}_{uv}=1$ for all $u\in V$.
		\STATE $V_1=V_1\setminus\{v\}$ and $V_2=V_2\cup\{v\}$.
		\ENDIF
		\ENDFOR
		\STATE 	$\{\hat{p}_{uv}\}_{u\in V,v\in V_1}=\mbox{Estimate-Edge-Probabilities}$\\$((S_{i,0},S_{i,1},\cdots,S_{i,n-1})_{i=t'+1}^t)$ on $V_1$.
		\COMMENT{With accuracy $\varepsilon k/(2n^3)$, $\alpha=\delta/(6n)$ in Assumption \ref{assumption: estimate edge probabilities}.}
		\STATE Let $\hat{G}=(V,E,\hat{p})$.
		\STATE $T_1=A(\hat{G},k)$, where $A$ is a $\kappa$-approximation IM algorithm.
		\STATE $T_2=S_{1,0}$.
		\STATE Let $T$ be a random set by picking $T_1$ and $T_2$ with equal probability. If $T\leq k$, let $\ALG=T$; otherwise, let $\ALG$ be a uniformly random subset of $T$ with $|\ALG|=k$.
		\STATE \textbf{return} $\ALG$.
	\end{algorithmic}
\end{algorithm}

%\begin{algorithm}[t]
%	\SetAlgoLined
%	\KwIn{A set of independent cascades $(S_{i,0},S_{i,1},\cdots,S_{i,n-1})_{i=1}^t$, $k\in\mathbb{N}$, error probability $\delta>0$.}
%	%		\wei{It looks like the algorithm
%	%			uses $\delta$, so I put it here.}
%	Set $V_1=V$.
%	
%	\ForEach{$v\in V$}{
%		Compute $\widehat{\\ap}(v)=t^v/t$, where $t^v=|\{i\in[t]\mid v\in S_{i,1}\}|$.
%		
%		\If{$\widehat{\\ap}(v)\geq 1-\delta/(4n)$}{
%			Set $\hat{p}_{uv}=1$ for all $u\in V$.
%			
%			$V_1=V_1\setminus\{v\}$.
%		}
%	}
%	
%	$\{\hat{p}_{uv}\}_{u\in V,v\in V_1}=\mbox{Estimate-Edge-Probability}((S_{i,0},S_{i,1},\cdots,S_{i,n-1})_{i=1}^t)$ on $V_1$, with estimation accuracy being $\varepsilon k/(2n^3)$, $\alpha=\delta/(6n)$ in Assumption \ref{assumption: estimate edge probabilities}.
%	%    \wei{Again, check if we need parameter $\varepsilon k/(2n^3)$. Also saying $\alpha=\delta/(6n)$ is not needed for the algorithm, right?
%	%    	I think it is just for the analysis.}
%	
%	Let $\hat{G}=(V,E,\hat{p})$.
%	
%	$T_1=A(\hat{G},k)$, where $A$ is a $\kappa$-approximation algorithm for influence maximization.
%	
%	$T_2=S_{1,0}$.
%	
%	Let $T$ be a random set by picking $T_1$ and $T_2$ with equal probability. If $T\leq k$, let $\ALG=T$; otherwise, let $\ALG$ be a uniformly random subset of $T$ with $|\ALG|=k$.
%	
%	\Return $\ALG$.
%	\caption{Influence Maximization from Samples under Assumption \ref{assumption: influence maximization under assumption independent of graph}}
%%	\label{algorithm: influence maximization unde assumption independent of graph}
%\end{algorithm}

We now give an analysis of Algorithm \ref{algorithm: influence maximization unde assumption independent of graph}.
Our analysis requires a technical lemma (Lemma \ref{lemma: R will cover a lot}).
Informally, Lemma \ref{lemma: R will cover a lot} means that for any set $R\subseteq V$, the influence of the seed set $S$ when setting the probabilities of all incoming edges of $R$ to $1$ is no larger than the influence of $S\cup R$.

\begin{lemma}
	\label{lemma: R will cover a lot}
	Let $G=(V,E,p)$ be a directed graph and $R\subseteq V$.
	Let $G'=(V,E,p')$ be a directed graph obtained from $G$ as follows: $p'_{uv}=1$ if $v\in R$ and $p'_{uv}=p_{uv}$ otherwise.
	Then, for any $S\subseteq V$, we have $\sigma^p(S\cup R)\geq \sigma^{p'}(S)$.
\end{lemma}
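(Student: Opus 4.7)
The plan is to use the standard live-edge (or ``possible worlds'') characterization of the IC model, coupled with a natural monotone coupling between $G$ and $G'$. Recall that under IC the influence spread $\sigma(S)$ equals the expected number of nodes reachable from $S$ in the random subgraph where each edge $(u,v)$ is independently declared live with probability $p_{uv}$. Introduce i.i.d.\ uniform variables $X_{uv}\in[0,1]$, one per edge. Let $L$ be the live-edge subgraph of $G$ obtained by keeping $(u,v)$ iff $X_{uv}\le p_{uv}$, and let $L'$ be the live-edge subgraph of $G'$ obtained by keeping $(u,v)$ iff $X_{uv}\le p'_{uv}$. Since $p'_{uv}\ge p_{uv}$ for every edge, this coupling satisfies $L\subseteq L'$; moreover every edge entering $R$ is automatically live in $L'$.

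The main claim I would then establish pointwise (for every realization of the $X_{uv}$'s) is the inclusion
\[ \{v : v \text{ is reachable from } S \text{ in } L'\} \;\subseteq\; \{v : v \text{ is reachable from } S\cup R \text{ in } L\}. \]
Fix any $v$ reachable from $S$ in $L'$ and take a live-edge path $P=(u_0,u_1,\dots,u_\ell=v)$ from some $u_0\in S$. If no edge of $P$ enters $R$, then $p'_{u_i u_{i+1}}=p_{u_i u_{i+1}}$ for all $i$, so $P$ is a path in $L$ as well, giving reachability from $S$ in $L$. Otherwise let $j$ be the largest index with $u_j\in R$. Every edge $(u_i,u_{i+1})$ with $i\ge j$ then terminates outside $R$, so its live status coincides in $L$ and $L'$; these edges are live in $L'$ and hence also in $L$. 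Thus $v$ is reachable in $L$ from $u_j\in R\subseteq S\cup R$.

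Taking cardinalities on both sides of the inclusion and then expectations over the shared randomness yields $\sigma^{p'}(S)\le \sigma^p(S\cup R)$, which is the desired statement. I do not foresee a substantial obstacle: the only subtle point is to argue that the ``last'' visit to $R$ along the $L'$-path gives a sub-path lying entirely in $L$, which is exactly the case analysis above. Everything else (the coupling, monotonicity, and the live-edge formulation) is standard for IC.
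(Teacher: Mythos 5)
Your proof is correct and takes essentially the same route as the paper: both rely on the live-edge characterization of the IC model and the observation that, once $R$ is seeded, the status of edges entering $R$ is irrelevant, so one may compare the two processes on a common realization in which all such edges are live. Your explicit coupling and ``last visit to $R$'' path-surgery simply make pointwise the step that the paper carries out by averaging the indicator $\textbf{1}_u(A\cup\mathcal{B},S\cup R)$ over the random subset $\mathcal{B}$ of incoming edges of $R$.
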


\begin{proof}
	To prove Lemma \ref{lemma: R will cover a lot}, we will use \emph{live-edge graphs} to interpret the IC model and help understand the influence spread.
	Formally, a live-edge graph corresponding to the IC model is a random subgraph of $G$ such that each edge $(u,v)$ is picked independently with probability $p_{uv}$.
	Let $\sigma_u(S)$ be the probability that $u$ is reachable from $S$ in the live-edge graph.
	Then, $\sigma_u(S)$ is also the probability that $u$ is activated by $S$ and hence the influence spread function $\sigma(S)=\sum_{u\in V} \sigma_u(S)$.
	For a node $u\in V$, a fixed edge set $A$ and a seed set $S$, let $\textbf{1}_u(A, S)$ be the indicator variable that $u$ is reachable from $S$ through edges in $A$.
	Then, $\sigma_u(S)$ can be written as
	\[ \sigma_u(S)=\sum_{A\subseteq E} \prod_{(a,b)\in A}p_{ab}\prod_{(a,b)\in E\setminus A}(1-p_{ab})\textbf{1}_u(A, S). \]
	
	Let $B=\{(u,v)\mid u\in V, v\in R\}$ be the set of incoming edges of $R$.
	By definition, $p'_{ab}=1$ for $(a,b)\in B$ and $p'_{ab}=p_{ab}$ otherwise.
	Let $\mathcal{B}\subseteq B$ be a random subset of $B$ such that for each edge $(a,b)\in B$, $(a,b)\in\mathcal{B}$ independently with probability $p_{ab}$.
	For a fixed $u\in V$, we consider its activation probability.
	First, we have
	\begin{align*}
		\sigma_u^p(S\cup R) &=\sum_{A\subseteq E\setminus B} \prod_{(a,b)\in A}p_{ab}\prod_{(a,b)\in E\setminus(A\cup B)}(1-p_{ab})\E[\textbf{1}_u(A\cup \mathcal{B}, S\cup R)] \\
		&=\sum_{A\subseteq E\setminus B} \prod_{(a,b)\in A}p_{ab}\prod_{(a,b)\in E\setminus(A\cup B)}(1-p_{ab})\textbf{1}_u(A\cup B, S\cup R).
	\end{align*}
	The expectation in the first equality is taken over the randomness of $\mathcal{B}$.
	The second equality holds since $R$ itself is part of the seed set, and it makes no difference whether its incoming edges $B$ are picked into the live-edge graph.
	Next, by the definition of $p'$,
	\[ \sigma_u^{p'}(S)=\sum_{A\subseteq E\setminus B} \prod_{(a,b)\in A}p_{ab}\prod_{(a,b)\in E\setminus(A\cup B)}(1-p_{ab})\textbf{1}_u(A\cup B, S). \]
	Clearly, $\textbf{1}_u(A\cup B, S\cup R)\geq \textbf{1}_u(A\cup B, S)$.
	It follows that $\sigma^p_u(S\cup R)\geq\sigma^{p'}_u(S)$ and therefore $\sigma^p(S\cup R)\geq\sigma^{p'}(S)$.
\end{proof}

%\begin{restatable}{lemma}{lemRWillCoverALot}
%	\label{lemma: R will cover a lot}
%	Let $G=(V,E,p)$ be a directed graph and $R\subseteq V$.
%	Let $G'=(V,E,p')$ be a directed graph obtained from $G$ as follows: $p'_{uv}=1$ if $v\in R$ and $p'_{uv}=p_{uv}$ otherwise.
%	Then, for any $S\subseteq V$, we have $\sigma^p(S\cup R)\geq \sigma^{p'}(S)$.
%\end{restatable}

\begin{theorem}
	Under Assumption \ref{assumption: influence maximization under assumption independent of graph}, suppose that the number of cascades $t\geq\frac{36864}{\varepsilon^2\delta^2\gamma^3}\frac{n^8}{k^2}\ln\frac{36n}{\delta}+\frac{72n^2}{\delta^2}\ln\frac{12n}{\delta}$,
	and the number of samples used to estimate ${\ap}(v)$'s is $t'=\frac{72n^2}{\delta^2}\ln\frac{12n}{\delta}$.
	Let $A$ be a $\kappa$-approximation algorithm for influence maximization.
	Assume that $k\geq\frac{3}{c}\ln\frac{3}{\delta}$.
	Let $\ALG$ be the set returned by Algorithm \ref{algorithm: influence maximization unde assumption independent of graph} and $\OPT$ be the optimal solution on the original graph.
	We have that $\ALG$ is a feasible solution, and
	\[ \Pr[\E[\sigma(\ALG)]\geq\min\left\{\frac{1}{2c},1\right\}\frac{\kappa-\varepsilon}{2}\sigma(\OPT)]\geq1-\delta, \]
	where the probability is taken over the randomness of $(S_{i,0},S_{i,1},\cdots,S_{i,n-1})_{i=1}^t$ and the expectation is taken over the randomness from line 13 of Algorithm \ref{algorithm: influence maximization unde assumption independent of graph}.
	%\wei{
	%I think the above formula is not accurate, since $\ALG$ is a random set, but we need to take expectation of this random set,
	%	not leave its randomness to the overall probability.
	%	\[ \Pr[\E[\sigma(\ALG)]\geq\min\left\{\frac{1}{2c},1\right\}\frac{\kappa-\varepsilon}{2}\sigma(\OPT)]\geq1-\delta. \]
	%	and we need to be clear that the expectation is taken over the randomness of $\ALG$ selected either from $T_1$ or $T_2$ 
	%	(we do not care about the randomness of selecting $k$ nodes from $T_2$!?), and the probability is taken from
	%	the randomness of seed sampling and IC propagation, right?
	%Previous theorems should be made clear too.
	%}
	
	%\wei{By the way, I checked our ICML20 paper, and noticed that the theorem there explicitly writes the distribution
	%	under $\Pr$, which is better, but it writes as $\Pr_{S_1, \ldots, S_t \sim \mathcal{D}}$, but more accurately it 
	%	should be $\Pr_{(S_1, \ldots, S_t) \sim \mathcal{D}^t}$, right?
	%	We can fix it perhaps, and at least I can update my homepage version, or we can update arxiv.
	%}

	%	If $t\geq\frac{2048|L|^2|R|^4}{\varepsilon^2\delta^2\gamma^3}\ln\frac{32|L||R|}{\delta}$, then $\E[f^p(\ALG)]\geq\frac{1}{2}(1-e^{-1}-O(\varepsilon))f^p(\OPT)$.
\end{theorem}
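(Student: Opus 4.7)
The plan is to show that the two candidates $T_1$ and $T_2$ in the algorithm jointly dominate $(\kappa-\varepsilon)\sigma^p(\OPT)$, by running the chain of Theorem~\ref{theorem: influence maximization under edge probabilities estimation assumption} restricted to $V_1$ and then using Lemma~\ref{lemma: R will cover a lot} to absorb $V_2$; the $50/50$ mixing and the subsampling in line 13 will then recover a $M=\min\{1/(2c),1\}$ fraction of this bound.

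First I would assemble a union-bound ``good event'' on the samples whose probability exceeds $1-\delta$. Its three constituents (each of probability at least $1-\delta/3$) are: (i) additive Chernoff (Lemma~\ref{lemma: additive chernoff bound}) for $\widehat{\ap}(v)$ with the chosen $t'$, tuned so that $|\widehat{\ap}(v)-\ap(v)|\leq \delta/(12n)$ for every $v$; the threshold $1-\delta/(4n)$ defining $V_2$ then forces $\ap(v)\leq 1-\delta/(6n)$ on $V_1$ and $\sum_{v\in V_2}(1-\ap(v))\leq \delta/3$; (ii) Theorem~\ref{theorem: estimate edge probabilities} applied to the remaining $t-t'$ samples on $V_1$ (Assumption~\ref{assumption: estimate edge probabilities} with $\alpha=\delta/(6n)$ holds there by (i)), giving $|\hat{p}_{uv}-p_{uv}|\leq \varepsilon k/(2n^3)$ for every edge into $V_1$; (iii) multiplicative Chernoff (Lemma~\ref{lemma: multiplicative chernoff bound}) on $|S_{1,0}|$, combined with $\E|S_{1,0}|\leq ck$ and $k\geq (3/c)\log(3/\delta)$, to force $|T_2|\leq 2ck$.

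On this good event I would execute the main chain. Define the boosted vector $\tilde{p}$ that agrees with $p$ on edges into $V_1$ and equals $1$ on edges into $V_2$, so that $\|\hat{p}-\tilde{p}\|_1\leq \varepsilon k/(2n)$ (only $V_1$-edges differ). Lemma~\ref{lemma: estimate influence} yields $|\sigma^{\hat{p}}(S)-\sigma^{\tilde{p}}(S)|\leq \varepsilon k/2$ for every $S$; Lemma~\ref{lemma: R will cover a lot} with $R=V_2$ gives $\sigma^p(S\cup V_2)\geq \sigma^{\tilde{p}}(S)$; and $\tilde{p}\geq p$ coordinate-wise gives $\sigma^{\tilde{p}}(\OPT)\geq \sigma^p(\OPT)$. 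Wrapping these around the $\kappa$-approximation guarantee of $T_1$ on $\hat{G}$ and using $\sigma(\OPT)\geq k$ yields $\sigma^p(T_1\cup V_2)\geq (\kappa-\varepsilon)\sigma^p(\OPT)$, exactly as in Theorem~\ref{theorem: influence maximization under edge probabilities estimation assumption}. Submodularity then gives the split $\sigma(T_1)+\sigma(V_2)\geq (\kappa-\varepsilon)\sigma(\OPT)$.

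Next I would handle line 13. The standard concavity-of-random-subset fact — for monotone submodular $f$ and a uniformly random size-$k$ subset $R$ of a set $T$ with $|T|\geq k$, $\E[f(R)]\geq (k/|T|)f(T)$, derivable by combining $\E[f(R)]\geq p\cdot f(T)$ for independent-$p$ sampling with Jensen on the concave map $j\mapsto \E[f(R)\mid|R|=j]$ — gives $\E[\sigma(\ALG_2)\mid T_2]\geq M\sigma(T_2)$ under $|T_2|\leq 2ck$. To link $\sigma(T_2)=\sigma(S_{1,0})$ to $\sigma(V_2)$ I would invoke FKG on the joint product distribution of the seed vector and the live-edge graph: the events $\{u\in\Phi_L(V_2)\}$ and $\{V_2\subseteq \Phi_L(S_0)\}$ are both monotone increasing in the underlying Bernoullis, so $\Pr[u\in\Phi_L(S_0)]\geq \sigma_u(V_2)\cdot \Pr[V_2\subseteq \Phi_L(S_0)]\geq (1-\delta/3)\sigma_u(V_2)$, and summing over $u$ gives $\E_{S_{1,0}}[\sigma(S_{1,0})]\geq (1-\delta/3)\sigma(V_2)$.

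The hard part will be that the theorem's outer probability runs over the samples (which fix $T_2=S_{1,0}$), whereas FKG only delivers a bound on $\E_{S_{1,0}}[\sigma(S_{1,0})]$. I expect to close this either by folding into the good event a concentration bound on $\sigma(S_{1,0})$ (exploiting that Assumption~\ref{assumption: influence maximization under assumption independent of graph} forces $n\leq ck/\gamma$, so the submodular function of the independent Bernoulli seed vector is sufficiently Lipschitz), or by interpreting the draw of $T_2$ as part of the line-13 randomness, which is conceptually harmless since $S_{1,0}$ is a fresh $\mathcal{D}$-sample independent of the inference step. Once this is settled, combining $\tfrac{1}{2}\sigma(T_1)+\tfrac{M}{2}\sigma(T_2)\geq \tfrac{M}{2}[\sigma(T_1)+(1-\delta/3)\sigma(V_2)]\geq \tfrac{M(1-\delta/3)}{2}(\kappa-\varepsilon)\sigma(\OPT)$ and absorbing the $(1-\delta/3)$ slack into $\varepsilon$ yields the claimed bound. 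Feasibility is immediate since $|T_1|=k$ by construction and $T_2$ is explicitly truncated to size $k$.
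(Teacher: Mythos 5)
Your architecture matches the paper's almost exactly — the three-part good event, the boosted vector ($\tilde p$ is the paper's $p'$), the chain through Lemma~\ref{lemma: R will cover a lot}, Lemma~\ref{lemma: estimate influence} and the $\kappa$-approximation guarantee to get $\sigma^p(T_1\cup V_2)\geq(\kappa-\varepsilon)\sigma^p(\OPT)$, the factor $1/2$ from the coin flip, and the factor $\min\{1/(2c),1\}$ from submodular subsampling of $T_2$. The one place you diverge is also the one place you have a genuine gap, and you correctly flag it: you apply submodularity to $T_1\cup V_2$, obtaining $\sigma(T_1)+\sigma(V_2)$, and then must convert $\sigma(V_2)$ into a statement about the \emph{realized} $T_2=S_{1,0}$. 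Your FKG step only yields $\E_{S_{1,0}}[\sigma(S_{1,0})]\geq(1-\delta/3)\sigma(V_2)$, which is the wrong quantifier for a theorem whose outer probability is over the samples. Neither of your repairs closes this: (a) a concentration bound on $\sigma(S_{1,0})$ as a function of the independent seed indicators fails because flipping a single seed can change $\sigma(S_0)$ by up to $n$ (the bound $n\leq ck/\gamma$ does not help), so McDiarmid-type deviations swamp $\sigma(V_2)$; (b) folding the draw of $S_{1,0}$ into the line-13 expectation changes the statement being proved (the theorem places that randomness in the outer probability), and moreover $S_{1,0}$ is one of the first $t'$ samples already used to build $V_1,V_2$, so it is not a fresh independent draw.

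The paper's resolution is simpler and avoids FKG entirely: since $\ap(v)\geq 1-\delta/(3n)$ for every $v\in V_2$, a union bound shows that \emph{the observed first cascade} activates all of $V_2$ in one step with probability at least $1-\delta/3$; this event is added to the good event over the samples. On that event one gets the deterministic inequality $\sigma^p(T_1\cup T_2)=\sigma^p(T_1\cup S_{1,0})\geq\sigma^p(T_1\cup V_2)$, and submodularity is then applied to $T_1\cup T_2$ rather than to $T_1\cup V_2$, giving $\sigma^p(T_1)+\sigma^p(T_2)\geq\sigma^p(T_1\cup T_2)\geq\sigma^{p'}(T_1)\geq(\kappa-\varepsilon)\sigma^p(\OPT)$ directly, with no expectation over $S_{1,0}$ and no $(1-\delta/3)$ slack to absorb. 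If you reorder your argument this way — compare $T_1\cup T_2$ to $T_1\cup V_2$ \emph{before} splitting by submodularity — the rest of your proof goes through as written.
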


\begin{proof}
	Let $G=(V,E,p)$ be the original graph.
	Let $V_1=\{v\in V\mid \widehat{\ap}(v)\leq1-\frac{\delta}{4n}\}$ and $V_2=V\setminus V_1$, defined as in Algorithm \ref{algorithm: influence maximization unde assumption independent of graph}.
	Let $B=\{(u,v)\mid u\in V, v\in V_2\}$ be the set of all edges pointing to some node in $V_2$.
	Let $G'=(V,E,p')$ be a directed graph obtained from $G$ as follows: $p'_{uv}=1$ if $(u,v)\in B$ and $p'_{uv}=p_{uv}$ otherwise.
	Let $\hat{G}=(V,E,\hat{p})$ be a directed graph obtained from $G'$ by replacing $p_{uv}$ with $\hat{p}_{uv}$ for any $(u,v)\notin B$.
	Clearly, $\hat{G}$ is exactly the same graph we constructed in Algorithm \ref{algorithm: influence maximization unde assumption independent of graph}.
	
	For any node $v\in V$, by Lemma \ref{lemma: additive chernoff bound}, when $t'=\frac{72n^2}{\delta^2}\ln\frac{12n}{\delta}$,
	\begin{align*}
		&\Pr[|\widehat{\ap}(v)-\ap(v)|\geq \delta/(12n)] \\
		&\leq \exp(-t'(\delta/(12n))\min(1/5,(\delta/(12n))\cdot (1/\ap(v))) \\
		&+\exp(-t'(\delta/(12n))^2/2ap(v))\\
		&\leq \delta/(6n)+\delta/(6n)=\delta/(3n).
	\end{align*}
	By union bound, with probability $1-\delta/3$, for all nodes $v\in V$, $|\widehat{\ap}(v)-\ap(v)|\leq \delta/(12n)$.
	Specifically, for a node $v\in V_1$ with $\widehat{\ap}(v)\leq 1-\delta/(4n)$, we have $\ap(v)\leq \widehat{\ap}(v)+\delta/(12n)\leq 1-\delta/(6n)$.
	For a node $v\in V_2$ with $\widehat{\ap}(v)>1-\delta/(4n)$, we have $\ap(v)\geq \widehat{\ap}(v)-\delta/(12n)\geq 1-\delta/(3n)$.
	
	%	Since for any $v\in V_2$, $\\ap(v)\geq 1-\delta/(3n)$, it means that $\Pr[v\notin S_{1,1}]\leq\delta/(3n)$.
	%	By union bound, $\Pr[\exists v\in V_2, v\notin S_{1,1}]\leq\delta/3$.
	%	Thus, $\Pr[V_2\subseteq S_{1,1}]\geq 1-\delta/3$.
	%	Assume that $V_2\subseteq S_{1,1}$.
	%	Then, $\sigma^p(T_2)=\sigma^p(S_{1,0})\geq \sigma^p(V_2)$.
	%	\wei{There could be a slight gap between assuming $V_2\subseteq S_{1,1}$ and concluding
	%		$\sigma^p(S_{1,0})\geq \sigma^p(V_2)$, because $S_{1,1}$ is just one realization of the first step with $S_{1,0}$ as the seed set.
	%		I guess what we really want to assume is that given $S_{1,0}$ as a seed set, in one time step all nodes in $V_2$ are activated. This
	%		happens with probability $1-\delta/3$.
	%		Also, it is a bit unclear what randomness is included in this probability --- it should include the randomness of IC propagation, the randomness of $V_2$,
	%		but we treat $S_{1,0}$ as fixed?
	%		The following is an alternative explanation.
	%	}
	Since for any $v\in V_2$, $\ap(v)\geq 1-\delta/(3n)$, by union bound, it means that with probability at least $1-\delta/3$ in one time step all nodes in $V_2$
	are activated.
	Assume that indeed all nodes in $V_2$ are activated in one time step.
	Then, we have $\sigma^p(T_1\cup T_2)=\sigma^p(T_1\cup S_{1,0})\geq \sigma^p(T_1\cup V_2)$.
	By plugging $S=T_1,R=V_2$ into Lemma \ref{lemma: R will cover a lot}, we obtain $\sigma^p(T_1\cup V_2)\geq \sigma^{p'}(T_1)$.
	Therefore, by submodularity of $\sigma$, $\sigma^p(T_1)+\sigma^p(T_2)\geq \sigma^p(T_1\cup T_2)\geq \sigma^{p'}(T_1)$.
	
	Since $\{\hat{p}\}_{u,v}$ is obtained by running Estimate-Edge-Probability with parameters $\varepsilon k/(2n^3), \alpha=\delta/(6n), \gamma$, when the number of cascades $t-t'\geq\frac{36864}{\varepsilon^2\delta^2\gamma^3}\frac{n^8}{k^2}\ln\frac{36n}{\delta}$, with probability $1-\delta/3$, we have $|\hat{p}_{uv}-p_{uv}|\leq\varepsilon k/(2n^3)$ for any $(u,v)\notin B$.
	Since $\hat{p}_{uv}=p'_{uv}=1$ for $(u,v)\in B$, we have $\|\hat{p}-p'\|_1\leq\varepsilon k/(2n)$.
	Therefore, by Lemma \ref{lemma: estimate influence}, for any $S\subseteq V$, $|\sigma^{\hat{p}}(S)-\sigma^{p'}(S)|\leq\varepsilon k/2$.
	We have
	\begin{align*}
		\sigma^{p'}(T_1) &\geq \sigma^{\hat{p}}(T_1)-\varepsilon k/2 \geq \kappa\cdot \sigma^{\hat{p}}(\OPT)-\varepsilon k/2 \\
		&\geq\kappa\cdot(\sigma^{p'}(\OPT)-\varepsilon k/2)-\varepsilon k/2 \\
		&\geq\kappa\cdot(\sigma^p(\OPT)-\varepsilon k/2)-\varepsilon k/2\geq(\kappa-\varepsilon)\sigma^p(\OPT).
	\end{align*}
	The second inequality holds since $T_1$ is a $\kappa$ approximation of $\OPT$ on $\hat{G}$.
	The forth inequality holds since $\sigma^{p'}(S)\geq\sigma^p(S)$ for any $S\subseteq V$, due to $p'_{uv}\geq p_{uv}$ for any $(u,v)\in E$.
	The last inequality holds as long as $(1+\kappa)k/2\leq k\leq \sigma^p(\OPT)$, which holds trivially since $\kappa\leq 1$ and $\sigma^p(\OPT)\geq k$.
	
	Combining the previous inequalities, we have
	\[ \sigma^p(T_1)+\sigma^p(T_2)\geq \sigma^{p'}(T_1)\geq(\kappa-\varepsilon)\sigma^p(\OPT), \]
	which implies that $\E[\sigma^p(T)]=\frac{1}{2}(\sigma^p(T_1)+\sigma^p(T_2))\geq \frac{1}{2}(\kappa-\varepsilon)\sigma^p(\OPT)$.
	
	Finally, since $\sum_{u\in V} q_u\leq ck$, $\Pr[|S_{1,0}|\geq 2ck]\leq e^{-ck/3}\leq\delta/3$ when $k\geq\frac{3}{c}\ln\frac{3}{\delta}$.
	Assume that $|T_2|=|S_{1,0}|\leq 2ck$.
	If $T=T_1$ or $T=T_2$ but $|T_2|\leq k$, then $\ALG=T$.
	If $T=T_2$ and $|T_2|>k$, then $\ALG$ is a uniform subset of $T$ with size $k$.
	Since $\sigma(\cdot)$ is submodular, we have
	$\E[\sigma^p(\ALG)]\geq\min\left\{\frac{1}{2c},1\right\}\E[\sigma^p(T)]\geq\min\left\{\frac{1}{2c},1\right\}\frac{\kappa-\varepsilon}{2}\sigma^p(S^*)$.
	
	To conclude, by union bound, with probability at least $1-\delta$, $\ALG$ is a feasible solution and $\E[\sigma^p(\ALG)]\geq\min\left\{\frac{1}{2c},1\right\}\frac{\kappa-\varepsilon}{2}\sigma^p(\OPT)$.
\end{proof}

\paragraph{Improving the approximation ratio}
Compared with Algorithm \ref{algorithm: influence maximization under edge probabilities estimation assumption}, Algorithm \ref{algorithm: influence maximization unde assumption independent of graph} has a worse (though still constant) approximation ratio.
We show that if the constant $c$ in Assumption \ref{assumption: influence maximization under assumption independent of graph} equals to some prescribed small $\varepsilon\in(0,1/3)$, we can modify Algorithm \ref{algorithm: influence maximization unde assumption independent of graph} to be an approximation-preserving algorithm as follows: let $T_1=A(\hat{G},(1-2\varepsilon)k)$ and returns $T_1\cup T_2$ directly.
It is easy to see that the modified algorithm works since $T_1$ loses little in the approximation ratio and $T_1\cup T_2$ is feasible with high probability.
The formal procedure is presented in Algorithm \ref{algorithm: influence maximization unde assumption independent of graph, strong approx ratio} and its guarantee is presented below.

%\begin{assumption}
%	\label{assumption: influence maximization under assumption independent of graph, strong approx ratio}
%	We require several assumptions. \wei{Need to make it clear if the $\varepsilon$ below is the same $\varepsilon$ as in the approximation ratio $\kappa - \varepsilon$.}
%	\begin{enumerate}
%		\item $\sum_{u\in V} q_u\leq \varepsilon k$.
%		\item $\gamma\leq q_u\leq 1-\gamma$ for all $u\in V$.
%	\end{enumerate}
%\end{assumption}

%\wei{But for Algorithm \ref{algorithm: influence maximization unde assumption independent of graph, strong approx ratio}, there is a slight chance that it returns a seed set of size
%	greater than $k$. Should we explicitly handle this? Either in the algorithm give up the output, or say something in the theorem, 
%	e.g. $\Pr[\sigma(\ALG)\geq(\kappa-\varepsilon)\sigma(\OPT) \wedge |\ALG| \le k]\geq1-\delta$ ?}

\begin{algorithm}[tb]
	\caption{IMS-IC under Assumption \ref{assumption: influence maximization under assumption independent of graph} with $c=\varepsilon$}
	\label{algorithm: influence maximization unde assumption independent of graph, strong approx ratio}
	\begin{algorithmic}[1]
		\REQUIRE A set of cascades $(S_{i,0},S_{i,1},\cdots,S_{i,n-1})_{i=1}^t$ and $k\in\mathbb{N}_+$, parameter $\varepsilon\in(0,1/3)$, error probability $\delta>0$, number of samples $t'\in [t]$ used to estimate ${\ap}(v)$'s.
		\STATE Set $V_1=V$ and $V_2=\emptyset$.
		\FOR{each $v\in V$}
		\STATE Use the first $t'$ samples $(S_{i,0},S_{i,1},\cdots,S_{i,n-1})_{i=1}^{t'}$
		to compute $\widehat{\ap}(v)=t^v/t'$, where $t^v=|\{i\in[t']\mid v\in S_{i,1}\}|$.
		\IF{$\widehat{\ap}(v)\geq 1-\delta/(4n)$}
		\STATE Set $\hat{p}_{uv}=1$ for all $u\in V$.	
		\STATE $V_1=V_1\setminus\{v\}$ and $V_2=V_2\cup\{v\}$.
		\ENDIF
		\ENDFOR
		
		\STATE $\{\hat{p}_{uv}\}_{u\in V,v\in V_1}=\mbox{Estimate-Edge-Probabilities}$\\$((S_{i,0},S_{i,1},\cdots,S_{i,n-1})_{i=t'+1}^t)$ on $V_1$.
		\COMMENT{With accuracy $\varepsilon k/(2n^3)$, $\alpha=\delta/(6n)$ in Assumption \ref{assumption: estimate edge probabilities}.}
		\STATE Let $\hat{G}=(V,E,\hat{p})$.
		\STATE $T_1=A(\hat{G},(1-2\varepsilon)k)$, where $A$ is a $\kappa$-approximation algorithm for influence maximization.
		\STATE $T_2=S_{1,0}$.
		\STATE \textbf{return} $\ALG=T_1\cup T_2$.
	\end{algorithmic}
\end{algorithm}

\begin{theorem}
	Under Assumption \ref{assumption: influence maximization under assumption independent of graph} with $c=\varepsilon\in(0,1/3)$, suppose that the number of cascades $t\geq\frac{36864}{\varepsilon^2\delta^2\gamma^3}\frac{n^8}{k^2}\ln\frac{36n}{\delta}+\frac{72n^2}{\delta^2}\ln\frac{12n}{\delta}$
	and the number of samples used to estimate ${\ap}(v)$'s is $t'=\frac{72n^2}{\delta^2}\ln\frac{12n}{\delta}$.
	Let $A$ be an $\kappa$-approximation algorithm for influence maximization.
	Assume that $k\geq\frac{3}{\varepsilon}\ln\frac{3}{\delta}$.
	Let $\ALG$ be the set returned by Algorithm \ref{algorithm: influence maximization unde assumption independent of graph, strong approx ratio} and $\OPT$ be the optimal solution on the original graph.
	We have
	\[ \Pr[|\ALG| \le k \wedge \sigma(\ALG)\geq(\kappa-3\varepsilon)\sigma(\OPT)]\geq1-\delta. \]	
\end{theorem}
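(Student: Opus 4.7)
The analysis closely mirrors that of the previous theorem, so I would begin by re-establishing the same good events. By Lemma \ref{lemma: additive chernoff bound} applied on the first $t'$ samples, with probability $1-\delta/4$ every $\widehat{\ap}(v)$ is within $\delta/(12n)$ of $\ap(v)$; consequently $\ap(v)\le 1-\delta/(6n)$ for $v\in V_1$, so Assumption \ref{assumption: estimate edge probabilities} holds on $V_1$ with $\alpha=\delta/(6n)$, while $\ap(v)\ge 1-\delta/(3n)$ for $v\in V_2$, so a union bound yields that all nodes of $V_2$ are activated in one step from $S_{1,0}=T_2$ with probability $1-\delta/4$. Theorem \ref{theorem: estimate edge probabilities} applied on $V_1$ with the remaining $t-t'$ samples then yields $|\hat{p}_{uv}-p_{uv}|\le\varepsilon k/(2n^3)$ for every incoming edge of $V_1$, again with probability $1-\delta/4$. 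Finally, since $c=\varepsilon$ forces $\E[|S_{1,0}|]\le\varepsilon k$, a multiplicative Chernoff bound combined with the hypothesis $k\ge(3/\varepsilon)\ln(3/\delta)$ gives $|S_{1,0}|\le 2\varepsilon k$ with probability $1-\delta/4$. On the intersection of these four events, $|\ALG|\le |T_1|+|T_2|\le (1-2\varepsilon)k+2\varepsilon k=k$, which settles feasibility.

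For the approximation ratio I would re-use the auxiliary graph $G'=(V,E,p')$ of Lemma \ref{lemma: R will cover a lot} with $R=V_2$: it satisfies $p'\ge p$ coordinatewise, $\|\hat{p}-p'\|_1\le\varepsilon k/(2n)$, and hence $|\sigma^{\hat{p}}(S)-\sigma^{p'}(S)|\le\varepsilon k/2$ for every $S$ by Lemma \ref{lemma: estimate influence}. Exactly the same chain as in the previous proof, combining the cover event with Lemma \ref{lemma: R will cover a lot}, gives
\[ \sigma^p(\ALG)=\sigma^p(T_1\cup T_2)\ge \sigma^p(T_1\cup V_2)\ge \sigma^{p'}(T_1)\ge \sigma^{\hat{p}}(T_1)-\varepsilon k/2. \]
Because $T_1$ is a $\kappa$-approximate maximizer of $\sigma^{\hat{p}}$ under budget $(1-2\varepsilon)k$, this bound becomes $\kappa\cdot\max_{|S|\le(1-2\varepsilon)k}\sigma^{\hat{p}}(S)-\varepsilon k/2$.

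The genuinely new ingredient, and the step I expect to be the main obstacle, is controlling this last maximum by $\sigma^p(\OPT)$ even though the budget has been shrunk from $k$ to $(1-2\varepsilon)k$. For this I would invoke the classical fact that for any monotone submodular $f$ with $f(\emptyset)=0$ and any $T$ of size $k$, a greedy ordering of the elements inside $T$ produces a prefix $O'$ of size $m$ with $f(O')\ge(m/k)f(T)$: submodularity forces the greedy marginals to be non-increasing, so the first $m$ of them dominate the average and already account for at least an $m/k$ share of the telescoping sum $f(T)$. Applying this with $f=\sigma^{p'}$, $T=\OPT$ and $m=\lfloor(1-2\varepsilon)k\rfloor$, and then combining with the monotonicity of $\sigma$ in edge probabilities (giving $\sigma^{p'}\ge\sigma^{p}$) and Lemma \ref{lemma: estimate influence}, I would obtain
\[ \max_{|S|\le(1-2\varepsilon)k}\sigma^{\hat{p}}(S)\ge \sigma^{\hat{p}}(O')\ge (1-2\varepsilon)\sigma^p(\OPT)-\varepsilon k/2. \]
Chaining all inequalities yields $\sigma^p(\ALG)\ge \kappa(1-2\varepsilon)\sigma^p(\OPT)-(1+\kappa)\varepsilon k/2$, and using $\sigma^p(\OPT)\ge k$ together with $\kappa\le 1$ this simplifies to $(\kappa-3\varepsilon)\sigma^p(\OPT)$. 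A final union bound over the four good events gives the claimed $1-\delta$ probability. The one subtlety worth watching is that the $\lfloor(1-2\varepsilon)k\rfloor/k$ rounding factor must not eat into the remaining $\varepsilon$ of slack, which is precisely why the hypothesis $k\ge(3/\varepsilon)\ln(3/\delta)$ does double duty by providing enough headroom.
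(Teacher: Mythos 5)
Your proposal is correct and follows essentially the same route as the paper: the same good events, the same auxiliary graph $G'$ with $R=V_2$, the same chain $\sigma^p(T_1\cup T_2)\geq\sigma^{p'}(T_1)\geq\sigma^{\hat{p}}(T_1)-\varepsilon k/2$, and the same final union bound giving feasibility from $|T_1\cup T_2|\leq(1-2\varepsilon)k+2\varepsilon k$. The only difference is that you explicitly justify the budget-shrinkage factor via the greedy-prefix (equivalently, random-subset) property of monotone submodular functions, a step the paper simply asserts by calling $T_1$ a ``$\kappa(1-2\varepsilon)$ approximation of $\OPT$ on $\hat{G}$''; your version is a welcome elaboration, not a different proof.
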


\begin{proof}
	By a similar analysis for Algorithm \ref{algorithm: influence maximization unde assumption independent of graph}, $\sigma^p(T_1\cup T_2)\geq\sigma^{p'}(T_1)$ with probability at least $1-\delta/3$, and
	for any $S\subseteq V$, $|\sigma^{\hat{p}}(S)-\sigma^{p'}(S)|\leq\varepsilon k/2$ with probability at least $1-\delta/3$.
	We thus have
	\begin{align*}
		\sigma^{p'}(T_1) &\geq \sigma^{\hat{p}}(T_1)-\varepsilon k/2 \\
		&\geq \kappa(1-2\varepsilon)\cdot \sigma^{\hat{p}}(\OPT)-\varepsilon k/2 \\
		&\geq\kappa(1-2\varepsilon)\cdot(\sigma^{p'}(\OPT)-\varepsilon k/2)-\varepsilon k/2 \\
		&\geq\kappa(1-2\varepsilon)\cdot(\sigma^p(\OPT)-\varepsilon k/2)-\varepsilon k/2  \\
		&\geq(\kappa-2\varepsilon)\cdot\sigma^p(\OPT)-(1+\kappa)\varepsilon k/2 \\
		&\geq(\kappa-3\varepsilon)\cdot\sigma^p(\OPT).
	\end{align*}
	The second inequality holds since $T_1$ is a $\kappa(1-2\varepsilon)$ approximation of $\OPT$ on $\hat{G}$.
	The forth inequality holds since $\sigma^{p'}(S)\geq\sigma^p(S)$ for any $S\subseteq V$, due to $p'_{uv}\geq p_{uv}$ for any $(u,v)\in E$.
	The last inequality holds as long as $(1+\kappa)k/2\leq k\leq \sigma^p(\OPT)$, which holds trivially since $\kappa\leq 1$ and $\sigma^p(\OPT)\geq k$.
	Therefore, we have $\sigma^p(\ALG)=\sigma^p(T_1\cup T_2)\geq(\kappa-3\varepsilon)\sigma^p(\OPT)$.
	
	Finally, since $\sum_{u\in V} q_u\leq \varepsilon k$, by Lemma \ref{lemma: multiplicative chernoff bound}, $\Pr[|S_{1,0}|\geq 2\varepsilon k]\leq e^{-\varepsilon k/3}\leq\delta/3$ when $k\geq\frac{3}{\varepsilon}\ln\frac{3}{\delta}$.
	If $|S_{1,0}|\leq 2\varepsilon k$, $|\ALG|=|T_1\cup T_2|\leq (1-2\varepsilon)k+2\varepsilon k=k$.
	
	To conclude, by union bound, with probability at least $1-\delta$, $|\ALG|\leq k$ and $\sigma(\ALG)\geq(\kappa-3\varepsilon)\sigma(\OPT)$.
\end{proof}

\section{Algorithms under the LT Model}
\label{section: the linear threshold model}

In this section, we solve both network inference and IMS problems under the LT model.
In Section \ref{section: network inference under the LT model}, we present a network inference algorithm.
In Section \ref{section: IMS under the LT model}, we present an IMS algorithm.

\subsection{Network Inference}
\label{section: network inference under the LT model}

In this section, we present a network inference algorithm under the LT model which finds an estimate $\hat{w}$ of $w$ such that $|\hat{w}_{uv}-w_{uv}|\leq \varepsilon$ for all $u,v\in V$.
The algorithm is similar to the one under the IC model and based on
the key observation on the connection between $w_{uv}$ and the one-step activation probabilities
$\ap(v)$ and $\ap(v\cond\bar{u})$.
We first give an explicit expression of $\ap(v)$ under the LT model.

\begin{lemma}
	\label{lemma: closed form of ap(v)}
	Under the LT model, $\ap(v)=q_v+(1-q_v)\sum_{u\in N(v)}q_uw_{uv}$ for any $v\in V$.
\end{lemma}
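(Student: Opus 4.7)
The plan is to decompose the event $\{v\in S_1\}$ according to whether $v$ itself is a seed, compute the activation contribution from the in-neighbors using the uniform threshold property, and then take expectation over the product seed distribution.

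First I would write
\[ \ap(v)=\Pr[v\in S_1]=\Pr[v\in S_0]+\Pr[v\in S_1,\,v\notin S_0]=q_v+(1-q_v)\Pr[v\in S_1\cond v\notin S_0], \]
using that under progressiveness $S_0\subseteq S_1$, and that $v$ being a seed has probability $q_v$ under $\mathcal{D}$.

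Next I would analyze the conditional probability that $v$ gets activated in a single step given that $v\notin S_0$. By the LT rule, conditioning on the realized seed set $S_0$, we have $v\in S_1$ iff $r_v\le \sum_{u\in S_0\cap N(v)}w_{uv}$. Since $r_v$ is uniform on $[0,1]$ and, by the normalization condition, $\sum_{u\in S_0\cap N(v)}w_{uv}\le \sum_{u\in N(v)}w_{uv}\le 1$, the $\min$ with $1$ is unnecessary and
\[ \Pr[v\in S_1\cond S_0,\, v\notin S_0]=\sum_{u\in S_0\cap N(v)}w_{uv}=\sum_{u\in N(v)}\mathbf{1}[u\in S_0]\,w_{uv}. \]
Taking expectation over $S_0$ conditioned on $v\notin S_0$, and exploiting that $\mathcal{D}$ is a product distribution (so $\{u\in S_0\}$ is independent of $\{v\notin S_0\}$ for any $u\neq v$, noting also $v\notin N(v)$ since the expression $\ap(v\cond\bar u)$ is only meaningful for $u\neq v$ in neighbors of $v$), I get $\E[\mathbf{1}[u\in S_0]\cond v\notin S_0]=q_u$, so
\[ \Pr[v\in S_1\cond v\notin S_0]=\sum_{u\in N(v)}q_u w_{uv}. \]

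Combining the two displays yields $\ap(v)=q_v+(1-q_v)\sum_{u\in N(v)}q_u w_{uv}$. There is no real obstacle here; the only subtlety worth stressing is that the normalization condition $\sum_{u\in N(v)}w_{uv}\le 1$ is precisely what makes the threshold-activation probability linear in the weights (no $\min$ truncation), and that the product structure of $\mathcal{D}$ lets the expectation pass through the sum term by term.
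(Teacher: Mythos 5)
Your proof is correct and follows essentially the same route as the paper's: decompose on whether $v$ is itself a seed, use the uniform threshold together with the normalization condition to show the conditional one-step activation probability equals $\sum_{u\in S_0\cap N(v)}w_{uv}$, and then pass the expectation through the sum using the product structure of $\mathcal{D}$. The paper phrases the last step as a law of total probability over subsets $R\subseteq N(v)$ and an expectation of $X=\sum_u w_{uv}X_u$, which is the same computation you carry out by conditioning on $S_0$ directly.
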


\begin{proof}
	Fix $v\in V$. For $v$ to be active in one time step, $v$ is either picked as a seed or activated by its in-neighbors which are picked as seeds.
	By the fact that $\mathcal{D}$ is a product distribution, in our notations, we have
	\[ \ap(v)=q_v+(1-q_v)\ap(v\cond \bar{v}). \]
	Let $S_0\sim\mathcal{D}$ and $R=S_0\cap N(v)$.
	For $u\in N(v)$, let $X_u\in\{0,1\}$ indicate whether $u\in S_0$.
	Then, $\E[X_u]=q_u$.
	Let $X=\sum_{u\in N(v)}w_{uv}X_u$.
	By the linearity of expectation, $\E[X]=\sum_{u\in N(v)}q_uw_{uv}$.
	
	On the other hand, by law of total probability, the facts that $\mathcal{D}$ is a product distribution and $\mathcal{D},\theta_v$ are independent,
	\[ \ap(v\cond \bar{v})=\sum_{R\subseteq N(v)}\Pr_{\mathcal{D}}[R]\Pr_{\theta_v}[v\in S_1\mid v\notin S_0,R]=\sum_{R\subseteq N(v)}\Pr_{\mathcal{D}}[R]\sum_{u\in R}w_{uv}=\E[X]. \]
	Therefore, we have $\ap(v\cond \bar{v})=\sum_{u\in N(v)}q_uw_{uv}$.
	
	By plugging it back, we finally obtain $\ap(v)=q_v+(1-q_v)\sum_{u\in N(v)}q_uw_{uv}$.
\end{proof}

Next, we derive a closed-form expression for the edge weight $w_{uv}$ from Lemma \ref{lemma: closed form of ap(v)}.

\begin{lemma}
	\label{lemma: closed form of edge weight}
	Under the LT model, for any $u,v\in V$ with $u\not=v$,
	\[ w_{uv}=\frac{\ap(v)-\ap(v\cond\bar{u})}{q_u(1-q_v)}. \]
\end{lemma}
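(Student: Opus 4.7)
The plan is to mimic the derivation from Lemma \ref{lemma: closed form of ap(v)}, but computing $\ap(v\cond\bar{u})$ in place of $\ap(v)$, and then subtracting and solving for $w_{uv}$. The key insight is that because $\mathcal{D}$ is a product distribution and $u\neq v$, conditioning on $u\notin S_0$ (i) does not change the probability $q_v$ that $v$ is itself a seed, and (ii) effectively deletes $u$ from the set of in-neighbors that could push $v$ above its threshold in one step.

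First, I would note that Lemma \ref{lemma: closed form of ap(v)} already provides
\[ \ap(v) = q_v + (1-q_v)\sum_{u'\in N(v)} q_{u'} w_{u'v}. \]
Next, I would repeat that proof's computation conditionally on the event $u\notin S_0$. By the product structure of $\mathcal{D}$,
\[ \ap(v\cond\bar{u}) = q_v + (1-q_v)\,\Pr[v\in S_1\mid v\notin S_0,\,u\notin S_0]. \]
For the remaining term, I would re-run the same argument as in the proof of Lemma \ref{lemma: closed form of ap(v)}: let $R=S_0\cap N(v)$ and $X=\sum_{u'\in N(v)} w_{u'v} X_{u'}$, where $X_{u'}=\mathbf{1}[u'\in S_0]$. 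Because the threshold $\theta_v$ is uniform on $[0,1]$ and independent of $\mathcal{D}$, and because the LT normalization forces $X\le 1$, we have $\Pr[v\in S_1\mid v\notin S_0,R]=X$. Conditioning additionally on $u\notin S_0$ simply sets $X_u=0$, so
\[ \Pr[v\in S_1\mid v\notin S_0,\,u\notin S_0] = \E\!\left[\sum_{u'\in N(v)\setminus\{u\}} w_{u'v} X_{u'}\right] = \sum_{u'\in N(v)\setminus\{u\}} q_{u'} w_{u'v}. \]

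Finally, subtracting the two expressions gives
\[ \ap(v)-\ap(v\cond\bar{u}) = (1-q_v)\,q_u\,w_{uv}, \]
since all terms with $u'\neq u$ cancel. Dividing by $q_u(1-q_v)$ yields the claim. If $u\notin N(v)$, then $w_{uv}=0$ by the paper's convention and both sides vanish, so the formula still holds trivially.

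The only subtlety I anticipate is justifying why conditioning on $u\notin S_0$ leaves the contributions of the other in-neighbors and of $v$ itself undisturbed; this relies crucially on the product form of $\mathcal{D}$ and on $u\neq v$. There is no real obstacle beyond being careful that the intermediate expectation $\E[X\mid u\notin S_0]$ equals the activation probability (rather than being truncated at $1$), which is guaranteed by the LT normalization condition $\sum_{u'\in N(v)} w_{u'v}\le 1$.
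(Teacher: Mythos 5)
Your proposal is correct and follows essentially the same route as the paper: the paper also computes $\ap(v\cond\bar{u})$ by observing that conditioning on $u\notin S_0$ is equivalent (for one-step activation) to deleting $u$, applies Lemma \ref{lemma: closed form of ap(v)} to obtain $q_v+(1-q_v)\sum_{u'\in N(v)\setminus\{u\}}q_{u'}w_{u'v}$, and subtracts. Your direct re-derivation of the conditional expectation is just a rephrasing of the paper's ``remove $u$ from the graph'' step, and your handling of the normalization condition matches the argument already used in Lemma \ref{lemma: closed form of ap(v)}.
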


\begin{proof}
	To avoid confusion, we write the underlying graph $G$ and the seed distribution $\mathcal{D}$ explicitly in notation $\ap(\cdot)$, namely $\ap(v)=\ap_{G,\mathcal{D}}(v)$.
	Consider the subgraph $G'=G \setminus \{u\}$ by removing node $u$.
	Since when considering one-step activation of $v$, node $u$ not being the seed is equivalent to removing it from the graph, we have
	\[ \ap_{G,\mathcal{D}}(v\cond\bar{u})=\ap_{G',\mathcal{D}}(v). \]
	Next, by Lemma \ref{lemma: closed form of ap(v)}, we have
	\begin{align*}
		\ap_{G',\mathcal{D}}(v) &=q_v+(1-q_v)\sum_{u'\in N(v)\setminus\{u\}}q_{u'}w_{u'v}. \\
		\ap_{G,\mathcal{D}}(v) &=q_v+(1-q_v)\sum_{u'\in N(v)}q_{u'}w_{u'v}.
	\end{align*}
	We therefore obtain that
	\[ \ap_{G,\mathcal{D}}(v)-\ap_{G,\mathcal{D}}(v\cond\bar{u})=(1-q_v)q_uw_{uv}. \]
	By rearranging the equality, we obtain
	\[ w_{uv}=\frac{\ap_{G,\mathcal{D}}(v)-\ap_{G,\mathcal{D}}(v\cond\bar{u})}{q_u(1-q_v)}. \]
\end{proof}

Equipped with the lemma, we are able to estimate $w_{uv}$ by estimating $q_u,\ap(v)$ and $\ap(v\cond\bar{u})$ respectively from cascade samples.
Let $t_u=|\{i\in[t]\mid u\in S_{i,0} \}|$ be the number of cascades where $u$ is a seed, $t_{\bar{u}}=|\{i\in[t]\mid u\notin S_{i,0} \}|$ the number of cascades where $u$ is not a seed, $t^v=|\{i\in[t]\mid v\in S_{i,1}\}|$ the number of cascades where $v$ is activated in one time step and $t_{\bar{u}}^v=|\{i\in[t]\mid u\notin S_{i,0},v\in S_{i,1}\}|$ the number of cascades where $u$ is not a seed and $v$ is activated in one time step.
Then, $\hat{q}_u=t_u/t, \widehat{\ap}(v)=t^v/t$ and $\widehat{\ap}(v\cond \bar{u})=t_{\bar{u}}^v/t_{\bar{u}}$ are good estimates of $q_u,\ap(v)$ and $\ap(v\cond\bar{u})$, respectively.
The formal procedure is formulated as Algorithm \ref{algorithm: estimate edge weights}.

Algorithm \ref{algorithm: estimate edge weights} needs to work under Assumption \ref{assumption: estimate edge weights} below, which means that the probability of a node $u\in V$ being selected as a seed is neither too low nor too high.
This assumption ensures that the above quantities are well estimated.
Compared with Assumption \ref{assumption: estimate edge probabilities}, Assumption \ref{assumption: estimate edge weights} does not need the condition $\ap(v)\leq 1-\alpha$ and hence imposes no requirement on the network.
This is because the condition $\ap(v)\leq 1-\alpha$ gives a lower bound for $1-\ap(v\cond \bar{u})$ and leads to a tighter estimate of it, while in the closed-form expression of $w_{uv}$, $1-\ap(v\cond \bar{u})$ does not appear in the denominator, and hence a loose estimate in the lack of the condition still suffices.

\begin{algorithm}[tb]
	\caption{Estimate Edge Weights}
	\label{algorithm: estimate edge weights}
	\begin{algorithmic}[1]
		\REQUIRE A set of cascades $(S_{i,0},S_{i,1},\cdots,S_{i,n-1})_{i=1}^t$.
		\ENSURE $\{\hat{w}_{uv}\}_{u,v\in V}$ such that $|\hat{w}_{uv}-w_{uv}|\leq\varepsilon$ for all $u,v\in V$.
		\FOR{each $u\in V$}
		\STATE Compute $\hat{q}_u=t_u/t$, where $t_u=|\{i\in[t]\mid u\in S_{i,0} \}|$.
		\ENDFOR
		\FOR{each $v\in V$}
		\STATE Compute $\widehat{\ap}(v)=t^v/t$, where $t^v=|\{i\in[t]\mid v\in S_{i,1}\}|$.
		\ENDFOR
		\FOR{each $v\in V$}
		\FOR{each $u\in V$}
		\STATE Compute $\widehat{\ap}(v\cond \bar{u})=t_{\bar{u}}^v/t_{\bar{u}}$, where $t_{\bar{u}}=|\{i\in[t]\mid u\notin S_{i,0} \}|$ and $t_{\bar{u}}^v=|\{i\in[t]\mid u\notin S_{i,0},v\in S_{i,1}\}|$.
		\STATE Let $\hat{w}_{uv}=\frac{\widehat{\ap}(v)-\widehat{\ap}(v\cond\bar{u})}{\hat{q}_u(1-\hat{q}_v)}$.
		\ENDFOR
		\ENDFOR
		\STATE \textbf{return} $\{\hat{w}_{uv}\}_{u,v\in V}$.
	\end{algorithmic}
\end{algorithm}

\begin{assumption}[Edge weights estimation under the LT model]
	\label{assumption: estimate edge weights}
	%	The following are the assumptions for estimating edge probabilities.
	For some parameter $\gamma\in(0,1/2]$,
	\[ \gamma\leq q_u\leq 1-\gamma, \forall u\in V. \]
\end{assumption}

We now give an analysis of Algorithm \ref{algorithm: estimate edge weights}.
Lemma \ref{lemma: estimate ap() in learning edge weights} gives the number of samples we need to estimate $q_u,\ap(v)$ and $\ap(v\cond\bar{u})$ within a small accuracy.
Its proof is exactly the same as that of Lemma \ref{lemma: estimate ap() in learning edge probabilities} and therefore omitted.

\begin{lemma}
	\label{lemma: estimate ap() in learning edge weights}
	Under Assumption \ref{assumption: estimate edge weights}, for any $\eta\in(0,4/5),\delta\in(0,1)$, for $\hat{q}_u$, $\widehat{\ap}(v)$, and $\widehat{\ap}(v\cond\bar{u})$ defined in Algorithm~\ref{algorithm: estimate edge weights},
	if the number of samples $t\geq\frac{16}{\gamma\eta^2}\ln\frac{12n}{\delta}$, with probability at least $1-\delta$, we have
	\begin{enumerate}
		%		\item $\forall i, t_{u_i}-tq_i\leq \gamma t/2$.
		\item $|\hat{q}_u-q_u|\leq\eta q_u$ for all $u\in V$,
		\item $|\widehat{\ap}(v)-\ap(v)|\leq \eta$ for all $v\in V$,
		\item $|\widehat{\ap}(v\cond\bar{u})-\ap(v\cond\bar{u})|\leq\eta$ for all $u,v\in V$.
	\end{enumerate}
\end{lemma}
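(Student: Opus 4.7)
The plan is to observe that the statement of this lemma is entirely about estimating three quantities $q_u$, $\ap(v)$, $\ap(v\cond \bar u)$ from i.i.d.\ cascades drawn under the product seed distribution $\mathcal{D}$, and that these quantities are defined purely in terms of the joint law of $(S_0,S_1)$. Hence the specific diffusion rule (IC vs.\ LT) never enters the estimation analysis, only the fact that $q_u\in[\gamma,1-\gamma]$. So I would reuse verbatim the three-step structure of the proof of Lemma~\ref{lemma: estimate ap() in learning edge probabilities}, just rebinding the symbols to the LT setting.

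For the first two bullets, I would invoke the Chernoff bounds directly. For $\hat q_u$, I let $X_i$ be the indicator that $u\in S_{i,0}$, these are i.i.d.\ Bernoulli$(q_u)$ across cascades, and I apply the multiplicative Chernoff bound (Lemma~\ref{lemma: multiplicative chernoff bound}) with relative error $\eta$, using $\mu=tq_u\ge t\gamma$. This yields failure probability $\le 2\exp(-t\gamma\eta^2/3)\le \delta/(3n)$ under the stated $t$. For $\widehat{\ap}(v)$, I let $Y_i$ be the indicator that $v\in S_{i,1}$ and apply the additive Chernoff bound (Lemma~\ref{lemma: additive chernoff bound}) with $a=\eta t$ and $\mu=t\cdot\ap(v)$, bounding $\ap(v)\le 1$ in the denominators to drop the $\ap(v)$ dependence; the required $t$ easily subsumes the threshold.

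The main obstacle, exactly as in the IC case, is the conditional estimator $\widehat{\ap}(v\cond\bar u)$, since the number $t_{\bar u}$ of cascades with $u\notin S_{i,0}$ is itself random. I would handle it in two stages. First, use the multiplicative Chernoff bound on $t_{\bar u}$ with mean $t(1-q_u)\ge t\gamma$ to show that $t_{\bar u}\ge 8\ln(12n/\delta)/\eta^2$ fails with probability at most $\delta/(6n^2)$. Second, condition on $t_{\bar u}=\ell$ for $\ell$ above this threshold; because cascades are independent and $\mathcal{D}$ is a product distribution, the indicators $Z_i=\mathbf{1}\{v\in S_{i,1}\}$ restricted to cascades with $u\notin S_{i,0}$ are i.i.d.\ Bernoulli$(\ap(v\cond\bar u))$. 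Applying the additive Chernoff bound with $a=\eta\ell$ and $\mu=\ell\cdot\ap(v\cond\bar u)$ gives conditional failure probability at most $\delta/(6n^2)$. Combining the two stages via the law of total probability yields an overall failure probability of $\delta/(3n^2)$ per pair $(u,v)$.

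Finally, a union bound over $u\in V$ (for $\hat q_u$), $v\in V$ (for $\widehat{\ap}(v)$), and ordered pairs $(u,v)$ (for $\widehat{\ap}(v\cond \bar u)$) absorbs at most $n\cdot\delta/(3n)+n\cdot\delta/(3n)+n^2\cdot\delta/(3n^2)=\delta$, giving the joint guarantee with probability at least $1-\delta$. The proof is indeed identical to the IC counterpart, so I would simply note that the LT diffusion rule is never used in any of these concentration arguments and refer to the earlier proof.
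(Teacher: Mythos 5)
Your proposal is correct and matches the paper's approach exactly: the paper itself omits this proof, stating it is identical to that of Lemma~\ref{lemma: estimate ap() in learning edge probabilities}, and your three-step argument (multiplicative Chernoff for $\hat q_u$, additive Chernoff for $\widehat{\ap}(v)$, and the two-stage conditioning on $t_{\bar u}$ for $\widehat{\ap}(v\cond\bar u)$, followed by a union bound) is precisely that proof. Your observation that the diffusion rule never enters the concentration analysis is exactly the justification the paper relies on.
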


\begin{theorem}
	\label{theorem: estimate edge weights}
	Under Assumption \ref{assumption: estimate edge weights}, for any $\varepsilon,\delta\in(0,1)$, let $\eta=\varepsilon\gamma^2/4<1/8$.
	Let $\{\hat{w}_{uv}\}_{u,v\in V}$ be the set of edge weights returned by Algorithm \ref{algorithm: estimate edge weights}.
	If the number of cascades $t\geq \frac{16}{\gamma\eta^2}\ln\frac{12n}{\delta}=\frac{256}{\varepsilon^2\gamma^6}\ln\frac{12n}{\delta}$, with probability at least $1-\delta$, for any $u,v\in V$, $|\hat{w}_{uv}-w_{uv}|\leq\varepsilon$.
\end{theorem}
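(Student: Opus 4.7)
The plan is to mirror the proof of Theorem \ref{theorem: estimate edge probabilities}, replacing the role of $1-\ap(v\cond\bar{u})$ in the denominator by $1-q_v$ and exploiting the closed-form identity from Lemma \ref{lemma: closed form of edge weight}. Since the choice $\eta = \varepsilon\gamma^2/4 < 1/8$ satisfies the condition of Lemma \ref{lemma: estimate ap() in learning edge weights}, and the sample budget matches its hypothesis, I will condition on the high-probability event ($\geq 1-\delta$) that simultaneously
\[
|\hat{q}_u - q_u| \leq \eta q_u,\quad |\widehat{\ap}(v) - \ap(v)| \leq \eta,\quad |\widehat{\ap}(v\cond \bar u) - \ap(v\cond\bar u)| \leq \eta
\]
hold for all $u,v \in V$, and show $|\hat w_{uv} - w_{uv}| \leq \varepsilon$ deterministically on this event.

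The numerator is handled by the triangle inequality: it differs from $\ap(v)-\ap(v\cond\bar u)$ by at most $2\eta$. The denominator is the step that requires the new ingredient. I will use Assumption \ref{assumption: estimate edge weights} twice, writing $\hat q_u \in [(1-\eta)q_u, (1+\eta)q_u]$ directly, and, for the second factor, noting
\[
|(1-\hat q_v) - (1-q_v)| = |\hat q_v - q_v| \leq \eta q_v \leq \frac{\eta}{\gamma}(1-q_v),
\]
where the last step uses $1-q_v \geq \gamma$. Hence $1-\hat q_v$ lies in $[(1-\eta/\gamma)(1-q_v), (1+\eta/\gamma)(1-q_v)]$. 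Multiplying the two ranges yields the two-sided bound
\[
(1-\eta)(1-\eta/\gamma)\, q_u(1-q_v) \;\leq\; \hat q_u(1-\hat q_v) \;\leq\; (1+\eta)(1+\eta/\gamma)\, q_u(1-q_v).
\]

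To finish, I will combine the numerator and denominator bounds in the style of Eq.~(\ref{eq:puvepsilon}). For the upper side, using $q_u(1-q_v) \geq \gamma^2$ together with $\eta = \varepsilon\gamma^2/4$ gives $2\eta/(q_u(1-q_v)) \leq \varepsilon/2$, so $\hat w_{uv} \leq (w_{uv} + \varepsilon/2)/[(1-\eta)(1-\eta/\gamma)]$; the residual factor is absorbed by noting that
\[
(w_{uv}+\varepsilon)(\eta + \eta/\gamma) \leq 2 \cdot \frac{2\eta}{\gamma} = \frac{\varepsilon\gamma^2}{\gamma} = \varepsilon\gamma \leq \varepsilon/2,
\]
since $w_{uv}+\varepsilon \leq 2$ and $\gamma \leq 1/2$, yielding $\hat w_{uv} \leq w_{uv} + \varepsilon$. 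The lower bound follows symmetrically (trivial when $w_{uv} < \varepsilon$ since $\hat w_{uv} \geq 0$, and otherwise the same calculation with signs flipped).

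The main obstacle I anticipate is keeping the propagation of the multiplicative error on $\hat q_v$ into the additive-looking term $1-\hat q_v$ under control; the key observation that makes it go through is precisely the lower bound $1-q_v \geq \gamma$ provided by Assumption \ref{assumption: estimate edge weights}, which is why no analogue of the $\alpha$-condition of Assumption \ref{assumption: estimate edge probabilities} is needed here.
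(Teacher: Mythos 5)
Your proposal is correct and follows essentially the same route as the paper's own proof: condition on the concentration event of Lemma \ref{lemma: estimate ap() in learning edge weights}, bound the numerator additively by $2\eta$, convert the denominator error into a multiplicative factor via $1-q_v\geq\gamma$ and $q_u\geq\gamma$ (your $1\pm\eta/\gamma$ is exactly the paper's $1\pm\varepsilon\gamma/4$), and absorb the residual factor as in Eq.~\eqref{eq:puvepsilon}. No gaps.
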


\begin{proof}
	With probability at least $1-\delta$, all the events in Lemma \ref{lemma: estimate ap() in learning edge weights} occur.
	We assume that this is exactly the case in the following.
	By the choice of $\eta$ and the assumption that $\gamma \leq q_u\leq 1-\gamma$, we have
	\begin{equation}  \label{eq:eta-LT}
		\eta\leq\frac{\varepsilon\gamma}{4}(1-q_v)\leq\frac{\varepsilon}{4}q_u(1-q_v). 
	\end{equation}
	To prove $\hat{w}_{uv}\leq w_{uv}+\varepsilon$, we have
	\begin{align*}
		\hat{w}_{uv} &=\frac{\widehat{\ap}(v)-\widehat{\ap}(v\cond u)}{\hat{q}_u(1-\hat{q}_v)} \\
		&\leq \frac{\ap(v)-\ap(v\cond u)+2\eta}{(1-\eta)q_u(1-q_v-\eta q_v)} \\
		&\leq \frac{\ap(v)-\ap(v\cond u)+2\eta}{(1-\eta)(1-\varepsilon\gamma/4)q_u(1-q_v)} \\
		&\leq \frac{w_{uv}+\varepsilon/2}{(1-\eta)(1-\varepsilon\gamma/4)} \\
		&\leq w_{uv}+\varepsilon.
	\end{align*}
	The first inequality holds due to Lemma \ref{lemma: estimate ap() in learning edge weights}.
	The second inequality holds by applying the first inequality in Eq.~\eqref{eq:eta-LT} and the fact that $q_v\leq 1$.
	The third inequality holds due to Lemma \ref{lemma: closed form of edge weight} and the second inequality in Eq.~\eqref{eq:eta-LT}.
	The correctness of the last inequality follows the same argument as Theorem \ref{theorem: estimate edge probabilities}.
	
	On the other hand, to prove $\hat{w}_{uv}\geq w_{uv}-\varepsilon$, first assume that $w_{uv}\geq\varepsilon$, since otherwise the claim would be trivial for $\hat{w}_{uv}\geq0$.
	We now have
	\begin{align*}
		\hat{w}_{uv} &=\frac{\widehat{\ap}(v)-\widehat{\ap}(v\cond u)}{\hat{q}_u(1-\hat{q}_v)} \\
		&\geq \frac{\ap(v)-\ap(v\cond u)-2\eta}{(1+\eta)q_u(1-q_v+\eta q_v)} \\
		&\geq \frac{\ap(v)-\ap(v\cond u)-2\eta}{(1+\eta)(1+\varepsilon\gamma/4)q_u(1-q_v)} \\
		&\geq \frac{w_{uv}-\varepsilon/2}{(1+\eta)(1+\varepsilon\gamma/4)} \\
		&\geq w_{uv}-\varepsilon.
	\end{align*}
	The first inequality holds due to Lemma \ref{lemma: estimate ap() in learning edge weights}.
	The second inequality holds by applying the first inequality in Eq.~\eqref{eq:eta-LT} and the fact that $q_v\leq 1$.
	The third inequality holds due to Lemma \ref{lemma: closed form of edge weight} and the second inequality in Eq.~\eqref{eq:eta-LT}.
	The correctness of the last inequality follows the same argument as Theorem \ref{theorem: estimate edge probabilities}.
\end{proof}

As in the IC model, Algorithm \ref{algorithm: estimate edge weights} can be adapted to recover the graph structure.
For compactness, we omit the adapted algorithm.

In general, the $\hat{w}$ returned by Algorithm \ref{algorithm: estimate edge weights} does not necessarily satisfies the \emph{normalization condition} that $\sum_{u\in N(v)}\hat{w}_{uv}\leq 1$ for all $v\in V$.
The condition is crucial in defining the \emph{live-edge} graph under the LT model, which helps the design of fast IM algorithm such as RR-set \citep{BorgsBCL14} and our IMS algorithm in the next subsection.
For this reason, we achieve the normalization condition by rescaling $\hat{w}$, as Corollary \ref{corollary: estimate edge weights} below shows.

\begin{corollary}
	\label{corollary: estimate edge weights}
	For any $\varepsilon,\delta\in(0,1)$, let $\{\hat{w}_{uv}\}_{u,v\in V}$ be the edge weights returned by Algorithm \ref{algorithm: estimate edge weights} under Assumption \ref{assumption: estimate edge weights} with $t\geq \frac{1024}{\varepsilon^2\gamma^6}D^2\ln\frac{12n}{\delta}$ cascade samples, where $D$ is the maximum in-degree of the underlying graph $G$.
	Let $w'_{uv}=\frac{\hat{w}_{uv}}{1+\varepsilon/2}$.
	Then, with probability at least $1-\delta$,
	\begin{enumerate}
		\item $\sum_{u\in N(v)}w'_{uv}\leq 1$ for all $v\in V$, and 
		\item $|w'_{uv}-w_{uv}|\leq\varepsilon$ for any $u,v\in V$.
	\end{enumerate}
\end{corollary}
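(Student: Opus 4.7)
The plan is straightforward: apply Theorem \ref{theorem: estimate edge weights} with a scaled accuracy parameter, and then show that dividing by $1+\varepsilon/2$ simultaneously (i) enforces the normalization condition and (ii) preserves the per-edge additive accuracy up to a constant factor. First I would invoke Theorem \ref{theorem: estimate edge weights} with target accuracy $\varepsilon' = \varepsilon/(2D)$. Substituting $\varepsilon'$ into the theorem's sample-complexity bound yields
\[
t \;\geq\; \frac{256}{(\varepsilon')^2\gamma^6}\ln\frac{12n}{\delta} \;=\; \frac{1024\,D^2}{\varepsilon^2\gamma^6}\ln\frac{12n}{\delta},
\]
which matches our hypothesis. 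Hence, with probability at least $1-\delta$, we have $|\hat{w}_{uv} - w_{uv}| \leq \varepsilon/(2D)$ for every pair $u,v\in V$, and the remainder of the argument is conditioned on this good event.

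For the normalization claim, I would combine the pointwise estimation bound with the true normalization $\sum_{u\in N(v)} w_{uv} \leq 1$ and the fact that $|N(v)| \leq D$:
\[
\sum_{u\in N(v)} \hat{w}_{uv} \;\leq\; \sum_{u\in N(v)} w_{uv} + |N(v)|\cdot \frac{\varepsilon}{2D} \;\leq\; 1 + \frac{\varepsilon}{2}.
\]
Dividing by $1+\varepsilon/2$ gives $\sum_{u\in N(v)} w'_{uv} \leq 1$, which is exactly item (1).

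For item (2), I would write $\hat{w}_{uv} = w_{uv} + \xi_{uv}$ with $|\xi_{uv}| \leq \varepsilon/(2D)$ and compute
\[
w'_{uv} - w_{uv} \;=\; \frac{\hat{w}_{uv} - w_{uv}(1+\varepsilon/2)}{1+\varepsilon/2} \;=\; \frac{\xi_{uv} - w_{uv}\,\varepsilon/2}{1+\varepsilon/2}.
\]
Since $w_{uv}\in[0,1]$ and $D\geq 1$, the numerator has magnitude at most $\varepsilon/(2D) + \varepsilon/2 \leq \varepsilon$, and the denominator exceeds $1$; this yields $|w'_{uv} - w_{uv}| \leq \varepsilon$ as required, after a union bound over all pairs in the event from Theorem \ref{theorem: estimate edge weights}.

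There is no substantive obstacle here. The only step requiring care is choosing the scale $\varepsilon/(2D)$: this is exactly tight enough so that the $D$ edges in any in-neighborhood can collectively overshoot the true sum of at most $1$ by no more than $\varepsilon/2$, which is precisely the slack absorbed by dividing by $1+\varepsilon/2$. The extra factor of $D^2$ in the sample complexity of this corollary, relative to Theorem \ref{theorem: estimate edge weights}, is exactly what pays for that scaling.
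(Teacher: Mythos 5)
Your proposal is correct and follows essentially the same route as the paper: invoke Theorem \ref{theorem: estimate edge weights} at accuracy $\varepsilon/(2D)$, bound the in-neighborhood sum by $1+\varepsilon/2$ to get normalization after rescaling, and split $|w'_{uv}-w_{uv}|$ into the estimation error plus the rescaling error of at most $\varepsilon/2$. The only cosmetic difference is that no extra union bound is needed at the end, since the theorem already gives the uniform guarantee over all pairs $u,v$ with probability $1-\delta$.
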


\begin{proof}
	By Theorem \ref{theorem: estimate edge weights},  with probability at least $1-\delta$, for any $u,v\in V$, $|\hat{w}_{uv}-w_{uv}|\leq\varepsilon/(2D)$.
	We assume that this is exactly the case in the following.
	
	Fix $v\in V$, we have
	\[ \sum_{u\in N(v)}w'_{uv}=\sum_{u\in N(v)}\frac{\hat{w}_{uv}}{1+\varepsilon/2}\leq\frac{1}{1+\varepsilon/2}\sum_{u\in N(v)}(w_{uv}+\varepsilon/(2D))\leq\frac{1}{1+\varepsilon/2}(1+\varepsilon/2)=1. \]
	The last inequality holds since the original $w$ satisfies $\sum_{u\in N(v)} w_{uv}\leq 1$.
	
	Next, we have
	\begin{align*}
		|w'_{uv}-w_{uv}| &=\frac{1}{1+\varepsilon/2}|\hat{w}_{uv}-(1+\varepsilon/2)w_{uv}|\leq \frac{1}{1+\varepsilon/2}|\hat{w}_{uv}-w_{uv}|+\frac{\varepsilon/2}{1+\varepsilon/2}|w_{uv}| \\
		&\leq\frac{\varepsilon/(2D)}{1+\varepsilon/2}+\frac{\varepsilon/2}{1+\varepsilon/2}\leq\frac{\varepsilon}{2}(1+\frac{1}{D})\leq\varepsilon.
	\end{align*}
\end{proof}

\subsection{Influence Maximization from Samples}
\label{section: IMS under the LT model}

In this section, we present an IMS algorithm (Algorithm \ref{algorithm: IMS-LT}) under the LT model.
Our algorithm is \emph{approximation-preserving} and imposes no requirement on the network.
It follows the canonical learning-and-then-optimization approach by first learning a surrogate graph $G'=(V,E,w')$ from the cascades and then executing any $\kappa$-approximation algorithm $A$ for standard influence maximization on $G'$ to obtain a solution as output.
The construction of $G'$ builds on Algorithm \ref{algorithm: estimate edge weights} and is obtained by first estimating all the edge weights to a sufficiently small additive error and then rescale them as in Corollary \ref{corollary: estimate edge weights} to meet the normalization condition.
Algorithm \ref{algorithm: IMS-LT} works under Assumption \ref{assumption: estimate edge weights}, since Algorithm \ref{algorithm: estimate edge weights} does.
Consequently, Algorithm \ref{algorithm: IMS-LT} can handle arbitrary social networks, since Assumption \ref{assumption: estimate edge weights} imposes no requirement for the network.

As Lemma \ref{lemma: estimate influence} for the IC model, to prove the correctness of Algorithm \ref{algorithm: IMS-LT}, we need to bound the difference between two LT influence functions with different edge parameters by the difference of the parameters.
We show this in Lemma \ref{lemma: estimate LT influence} below.
Note that to apply Lemma \ref{lemma: estimate LT influence}, the normalization condition must hold.
This explains why Algorithm \ref{algorithm: IMS-LT} rescales the estimated edge weights before it runs a standard IM algorithm.

\begin{algorithm}[tb]
	\caption{IMS-LT under Assumption \ref{assumption: estimate edge weights}}
	\label{algorithm: IMS-LT}
	\begin{algorithmic}[1]
		\REQUIRE A set of cascades $(S_{i,0},S_{i,1},\cdots,S_{i,n-1})_{i=1}^t$ and $k\in\mathbb{N}_+$.
		\STATE $\{\hat{w}_{uv}\}_{u,v\in V}=\mbox{Estimate-Edge-Weights}((S_{i,0},S_{i,1},\cdots,S_{i,n-1})_{i=1}^t)$.
		\COMMENT{ With estimation accuracy $\varepsilon k/(2Dn^3)$.}
		\STATE Let $w'_{uv}=\frac{\hat{w}_{uv}}{1+\varepsilon/2}$ for all $u,v\in V$.
		\STATE Let $G'=(V,E,w')$.
		\STATE Let $\ALG=A(G',k)$, where $A$ is a $\kappa$-approximation IM algorithm.
		\STATE \textbf{return} $\ALG$.
	\end{algorithmic}
\end{algorithm}

We first present Lemma \ref{lemma: estimate LT influence} and its proof.

\begin{lemma}
	\label{lemma: estimate LT influence}
	Under the LT model, for any two edge weight vectors $w,w'$ such that (1) $\|w-w'\|_1\leq\varepsilon/n$, and (2) both $w$ and $w'$ satisfy the normalization condition, we have $|\sigma^w(S)-\sigma^{w'}(S)|\leq\varepsilon$ for all $S$.
\end{lemma}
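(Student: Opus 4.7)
The plan is to leverage the \emph{live-edge graph} characterization of the LT model due to \citet{KempeKT03}: under edge weights $w$ satisfying the normalization condition, one can sample an equivalent random subgraph by having each node $v$ independently select exactly one incoming edge $(u,v)$ with probability $w_{uv}$, or select ``no edge'' with probability $1-\sum_{u\in N(v)} w_{uv}$. Under this coupling, $\sigma^w(S)$ equals the expected number of nodes reachable from $S$ in the resulting live-edge graph. The same holds for $w'$ because it also satisfies normalization. So the influence difference reduces to a reachability comparison between two random subgraphs.

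Next I would build an optimal coupling of the two live-edge graph distributions node by node. Fix $v\in V$ and consider its in-edge distributions $p_v$ (under $w$) and $p'_v$ (under $w'$) over the sample space $N(v)\cup\{\bot\}$. Since
\[ \bigl|(1-\textstyle\sum_{u}w_{uv})-(1-\sum_{u}w'_{uv})\bigr|\le\sum_{u}|w_{uv}-w'_{uv}|, \]
the total variation satisfies $\mathrm{TV}(p_v,p'_v)\le \sum_{u}|w_{uv}-w'_{uv}|$. Using the maximal coupling at each node independently, the probability that the two live-edge graphs differ at \emph{any} node is, by a union bound,
\[ \Pr[\text{live-edge graphs differ}]\le \sum_{v}\sum_{u}|w_{uv}-w'_{uv}|=\|w-w'\|_1\le \varepsilon/n. \]

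Finally I would translate the coupling bound into the desired influence bound. Let $L$ and $L'$ denote the coupled live-edge graphs and let $r_L(S)$ be the number of nodes reachable from $S$ in $L$. When $L=L'$ we have $r_L(S)=r_{L'}(S)$; otherwise the two counts differ by at most $n$. Hence
\[ |\sigma^w(S)-\sigma^{w'}(S)|=|\E[r_L(S)-r_{L'}(S)]|\le n\cdot \Pr[L\ne L']\le n\cdot \varepsilon/n = \varepsilon. \]
The main subtle point — the only place one could slip — is verifying that the total-variation contribution from the ``no edge'' alternative does not double-count the mass discrepancy on $N(v)$; the one-line bound above handles it. Everything else is routine, and the argument is structurally the same as the IC-model proof of Lemma \ref{lemma: estimate influence} but with the LT live-edge representation in place of the IC one.
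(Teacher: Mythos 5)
Your proof is correct, but it takes a genuinely different route from the paper's. The paper also starts from the LT live-edge representation, but then writes $\sigma_v^w(S)$ explicitly as a polynomial in $w$, shows $\bigl|\partial \sigma_v^w(S)/\partial w_{cd}\bigr|\le 1-\sum_{a\in N(d)}w_{ad}\le 1$, and invokes the mean-value theorem along the segment from $w$ to $w'$ (normalization is preserved on the segment by convexity) to get the per-node Lipschitz bound $|\sigma_v^w(S)-\sigma_v^{w'}(S)|\le\|w-w'\|_1\le\varepsilon/n$, summing over the $n$ nodes at the end. You instead build a maximal coupling of the two live-edge distributions node by node, bound the discrepancy probability by $\mathrm{TV}(p_v,p'_v)\le\sum_u|w_{uv}-w'_{uv}|$ plus a union bound, and pay the worst case $n$ only on the discrepancy event; your TV computation correctly accounts for the ``no edge'' atom, and independence across nodes is exactly what makes the product coupling legitimate. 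Your argument is more elementary (no calculus, no mean-value theorem) and more robust: it bounds the difference for \emph{any} $[0,n]$-valued functional of the live-edge graph, not just reachability counts, and it transparently uses normalization only to make the in-edge selection a probability distribution. The paper's derivative computation buys a slightly finer structural fact --- each coordinate function $\sigma_v^w(S)$ is $1$-Lipschitz in $\|\cdot\|_1$ --- but both yield the same final bound, so either proof is acceptable here.
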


\begin{proof}
	To prove Lemma \ref{lemma: estimate LT influence}, we will use \emph{live-edge graphs} to interpret the LT model and help understand the influence spread.
	For node $v\in V$, its (final) activation probability is denoted by $\sigma_v^w$.
	Clearly, $\sigma^w(S)=\sum_{v\in V}\sigma_v^w(S)$ for any seed set $S\subseteq V$.
	For node $b\in V$, let $E_b$ be the set of its incoming edges.
	The live-edge graph under the LT model is generated as follows: For each node $b\in V$, among all of its incoming edges, $(u,v)\in E_b$ is selected exclusively as the single live edge with probability $w_{uv}$, and no edge is selected with probability $1-\sum_{a\in N(b)}w_{ab}$.
	The selection is independent among all nodes $b\in V$.
	Let $A\subseteq E$ be the edge set of some realization of the live-edge graph.
	Then, $A$ satisfies that $|A\cap E_b|\leq 1$ for all $b\in V$.
	For convenience, let $\mathcal{A}=\{A\subseteq E\mid |A\cap E_b|\leq 1,\forall b\in V\}$ and $A\cap E_b=\{e(b)\}$ if $A\cap E_b\neq\emptyset$.
	Finally, let $\textbf{1}_v(A,S)$ be the indicator variable such that $\textbf{1}_v(A,S)=1$ if and only if $v$ is reachable from $S$ via edges in $A$.
	it is proved that $\sigma_v^w(S)$ can be written as the summation of $\textbf{1}_v(A,S)$ over all realizations of the live-edge graph \citep[see][]{ChenLC13}:
	\[ \sigma_v^w(S)=\sum_{A\in \mathcal{A}}\prod_{b\in V:A\cap E_b\neq\emptyset}w_{e(b)}\prod_{b\in V:A\cap E_b=\emptyset}(1-\sum_{a\in N(b)}w_{ab})\textbf{1}_v(A,S). \]
	
	With the above interpretation, we now bound the $L_{\infty}$ norm of the gradient of $\sigma_v^w$.
	\begin{align*}
		&\left|\frac{\partial\sigma_v^w(S)}{\partial w_{cd}}\right| \\
		&=\Bigg|\frac{\partial}{\partial w_{cd}}\Bigg[w_{cd}\sum_{A\in\mathcal{A},A\cap E_d=\emptyset}\prod_{b\neq d:A\cap E_b\neq\emptyset}w_{e(b)}\prod_{b\neq d:A\cap E_b=\emptyset}(1-\sum_{a\in N(b)}w_{ab})\textbf{1}_v(A\cup\{(c,d)\},S) \\
		&+(1-\sum_{a\in N(d)}w_{ad})\sum_{A\in\mathcal{A},A\cap E_d=\emptyset}\prod_{b\neq d:A\cap E_b\neq\emptyset}w_{e(b)}\prod_{b\neq d:A\cap E_b=\emptyset}(1-\sum_{a\in N(b)}w_{ab})\textbf{1}_v(A,S)\Bigg]\Bigg| \\
		&=\Bigg|\sum_{A\in\mathcal{A},A\cap E_d=\emptyset}\prod_{b\neq d:A\cap E_b\neq\emptyset}w_{e(b)}\prod_{b\neq d:A\cap E_b=\emptyset}(1-\sum_{a\in N(b)}w_{ab})\textbf{1}_v(A\cup\{(c,d)\},S) \\
		&-\sum_{A\in\mathcal{A},A\cap E_d=\emptyset}\prod_{b\neq d:A\cap E_b\neq\emptyset}w_{e(b)}\prod_{b\neq d:A\cap E_b=\emptyset}(1-\sum_{a\in N(b)}w_{ab})\textbf{1}_v(A,S)\Bigg| \\
		&\leq\sum_{A\in\mathcal{A},A\cap E_d=\emptyset}\prod_{b\neq d:A\cap E_b\neq\emptyset}w_{e(b)}\prod_{b\neq d:A\cap E_b=\emptyset}(1-\sum_{a\in N(b)}w_{ab}) \\
		&=1-\sum_{a\in N(d)}w_{ad}\leq 1.
	\end{align*}
%    \begin{align*}
%    	&\left|\frac{\partial\sigma_v^w(S)}{\partial w_{cd}}\right| \\
%    	&=\Bigg|\frac{\partial}{\partial w_{cd}}\Bigg[w_{cd}\sum_{A\subseteq E:|A\cap E_b|\leq 1,\forall b\in V,A\cap E_d=\emptyset}\prod_{b\neq d:A\cap E_b=\{e(b)\}}w_{e(b)}\prod_{b\neq d:A\cap E_b=\emptyset}(1-\sum_{a\in N(b)}w_{ab})\textbf{1}_v(A\cup\{(c,d)\},S) \\
%    	&+(1-\sum_{a\in N(d),a\neq c}w_{ad}-w_{cd})\sum_{A\subseteq E:|A\cap E_b|\leq 1,\forall b\in V,A\cap E_d=\emptyset}\prod_{b\neq d:A\cap E_b=\{e(b)\}}w_{e(b)}\prod_{b\neq d:A\cap E_b=\emptyset}(1-\sum_{a\in N(b)}w_{ab})\textbf{1}_v(A,S)\Bigg]\Bigg| \\
%    	&=\Bigg|\sum_{A\subseteq E:|A\cap E_b|\leq 1,\forall b\in V,A\cap E_d=\emptyset}\prod_{b\neq d:A\cap E_b=\{e(b)\}}w_{e(b)}\prod_{b\neq d:A\cap E_b=\emptyset}(1-\sum_{a\in N(b)}w_{ab})\textbf{1}_v(A\cup\{(c,d)\},S) \\
%    	&-\sum_{A\subseteq E:|A\cap E_b|\leq 1,\forall b\in V,A\cap E_d=\emptyset}\prod_{b\neq d:A\cap E_b=\{e(b)\}}w_{e(b)}\prod_{b\neq d:A\cap E_b=\emptyset}(1-\sum_{a\in N(b)}w_{ab})\textbf{1}_v(A,S)\Bigg| \\
%    	&\leq\Bigg|\sum_{A\subseteq E:|A\cap E_b|\leq 1,\forall b\in V,A\cap E_d=\emptyset}\prod_{b\neq d:A\cap E_b=\{e(b)\}}w_{e(b)}\prod_{b\neq d:A\cap E_b=\emptyset}(1-\sum_{a\in N(b)}w_{ab})\Bigg| \\
%    	&=1-\sum_{a\in N(d)}w_{ad}\leq 1.
%    \end{align*}
	The first equality holds since when computing the partial derivative at $w_{cd}$, we only need to concern the terms where $w_{cd}$ appears, and $w_{cd}$ appears when (1) $(c,d)$ is selected as the single live edge of $d$, or (2) no incoming edges of $d$ are selected.
	The inequality follows from the fact hat $0\leq |\textbf{1}_v(A\cup\{(c,d)\}S)-\textbf{1}_v(A,S)|\leq 1$.
	The last equality holds since the summation is over $A\cap E_d=\emptyset$, which equals to the probability that no incoming edges of $d$ are selected.
	Clearly, the above inequality means that $\|\nabla_w \sigma_v^w(S)\|_{\infty}\leq 1$.
	
	By the mean-value theorem, there is a $\bar{w}=sw+(1-s)w'$ for some $s\in[0,1]$ such that
	\[ |\sigma_v^w(S)-\sigma_v^{w'}(S)|= \|\nabla_{\bar{w}} \sigma_v^w(S)\|_{\infty}\|w-w'\|_1\leq\varepsilon/n. \]
	Therefore, $|\sigma^w(S)-\sigma^{w'}(S)|\leq\sum_{v\in V}|\sigma_v^w(S)-\sigma_v^{w'}(S)|\leq\varepsilon$, which completes the proof.
\end{proof}

%\noindent\textbf{Remark.} In the above lemma, both $w$ and $w'$ need to satisfy $\sum_{u\in N(v)} w_{uv}\leq 1$ for all $v\in V$ and $\sum_{u\in N(v)} w'_{uv}\leq 1$ for all $v\in V$, respectively, since only in this case, the live edge graph is well-defined.

Finally, the performance of Algorithm \ref{algorithm: IMS-LT} is presented in the following theorem.

\begin{theorem}
	%\label{theorem: influence maximization under edge probabilities estimation assumption}
	Under Assumption \ref{assumption: estimate edge weights},
	for any $\varepsilon\in (0,1)$ and $k \in\mathbb{N}_+$, 
	suppose that the number of cascades $t\geq\frac{4096}{\varepsilon^2\gamma^3}\frac{D^2n^6}{k^2}\ln\frac{12n}{\delta}$.
	Let $A$ be a $\kappa$-approximation algorithm for influence maximization.
	Let $\ALG$ be the set returned by Algorithm \ref{algorithm: IMS-LT} and $\OPT$ be the optimal solution on the original graph. Under Assumption \ref{assumption: estimate edge weights}, we have
	\[ \Pr[\sigma(\ALG)\geq(\kappa-\varepsilon)\sigma(\OPT)]\geq1-\delta. \]
\end{theorem}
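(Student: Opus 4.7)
The plan is to mirror the proof of Theorem~\ref{theorem: influence maximization under edge probabilities estimation assumption} almost verbatim, with Lemma~\ref{lemma: estimate LT influence} and Corollary~\ref{corollary: estimate edge weights} playing the roles of Lemma~\ref{lemma: estimate influence} and Theorem~\ref{theorem: estimate edge probabilities}, respectively. The structure is the standard learning-and-then-optimization sandwich: bound the deviation of the learned weights, lift that to a deviation of the influence function, and then chain the inequalities through the $\kappa$-approximate optimum on the surrogate graph.

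First, I would invoke Corollary~\ref{corollary: estimate edge weights} with target accuracy $\varepsilon k /(2n^3)$, which requires the sample budget specified in the theorem. With probability at least $1-\delta$, the rescaled weights $w'$ satisfy the normalization condition on $G'=(V,E,w')$ and $|w'_{uv}-w_{uv}|\le \varepsilon k/(2n^3)$ for every pair $u,v\in V$. Condition on this event for the remainder. Summing over the at most $n^2$ ordered pairs gives $\|w-w'\|_1\le \varepsilon k/(2n)$, and since both $w$ and $w'$ obey the normalization condition, Lemma~\ref{lemma: estimate LT influence} yields
\[
|\sigma^{w}(S)-\sigma^{w'}(S)|\le \varepsilon k/2 \qquad \text{for every } S\subseteq V.
\]
Here the reason we bothered to rescale before running $A$ is precisely so that Lemma~\ref{lemma: estimate LT influence} applies to $w'$; without normalization the live-edge interpretation used in its proof breaks down.

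Next, using that $\ALG$ is a $\kappa$-approximation on $G'$ and that $\sigma^{w}(\OPT)\ge k\ge (1+\kappa)k/2$, I would run the by-now-standard chain
\begin{align*}
\sigma^{w}(\ALG) &\ge \sigma^{w'}(\ALG)-\varepsilon k/2 \ge \kappa\,\sigma^{w'}(\OPT)-\varepsilon k/2 \\
&\ge \kappa\bigl(\sigma^{w}(\OPT)-\varepsilon k/2\bigr)-\varepsilon k/2 \\
&= \kappa\,\sigma^{w}(\OPT)-(1+\kappa)\varepsilon k/2 \ge (\kappa-\varepsilon)\,\sigma^{w}(\OPT),
\end{align*}
which is exactly the claimed guarantee. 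A union bound is not even needed beyond the single failure event of Corollary~\ref{corollary: estimate edge weights}.

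Since the argument is essentially a transcription, there is no real obstacle; the only point that needs checking is the bookkeeping from per-edge accuracy $\varepsilon k /(2n^3)$ to the $L_1$-accuracy $\varepsilon k/(2n)$ that Lemma~\ref{lemma: estimate LT influence} consumes, and tracking the extra factor of $D^2$ picked up from Corollary~\ref{corollary: estimate edge weights} (which comes from the rescaling step in the network-inference guarantee). This accounts for the $D^2 n^6/k^2$ overhead in the sample-complexity bound stated in the theorem.
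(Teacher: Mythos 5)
Your proposal is correct and follows essentially the same route as the paper's own proof: invoke Corollary~\ref{corollary: estimate edge weights} with per-edge accuracy $\varepsilon k/(2n^3)$ to get both the normalization condition and the $L_1$ bound $\|w-w'\|_1\le \varepsilon k/(2n)$, apply Lemma~\ref{lemma: estimate LT influence}, and then run the standard approximation-preserving chain using $\sigma^{w}(\OPT)\ge k\ge(1+\kappa)k/2$. No gaps.
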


\begin{proof}
	By Corollary \ref{corollary: estimate edge weights}, with probability at least $1-\delta$, we have 1) $\sum_{u\in N(v)}w'_{uv}\leq 1$ for all $v\in V$, and 2) for any $u,v\in V$, $|w'_{uv}-w_{uv}|\leq\varepsilon k/(2n^3)$.
	Hence, $\|w-w'\|_1=\sum_{u,v\in V}|w_{uv}-w'_{uv}|\leq\varepsilon k/(2n)$.
	Applying this condition to Lemma~\ref{lemma: estimate LT influence}, we have that $|\sigma^w(S)-\sigma^{w'}(S)|\leq \varepsilon k/2$ for every seed set $S$.
	We thus have
	%	\wei{fixed some notation below. First, $\sigma_v$ should be replaced with $\sigma$; second, $\hat{\sigma}$ is not defined, and it is replaced with $\sigma^{\hat{p}}$.}
	\begin{align*}
		\sigma(\ALG) &\geq\sigma^{\hat{p}}(\ALG)-\varepsilon k/2\geq\kappa\cdot \sigma^{\hat{p}}(\OPT)-\varepsilon k/2 \\
		&\geq\kappa\cdot (\sigma(\OPT)-\varepsilon k/2)-\varepsilon k/2 \\
		&=\kappa\cdot\sigma(\OPT)-(1+\kappa)\varepsilon k/2\geq(\kappa-\varepsilon)\sigma(\OPT). 
	\end{align*}
	%	The first and third inequalities are due to Lemma \ref{lemma: estimate influence}.
	The second inequality holds since $\ALG$ is a $\kappa$-approximation on $\hat{G}$.
	The last inequality holds since $\sigma(\OPT)\geq k\geq(1+\kappa)k/2$.
\end{proof}

\section{Conclusion and Future Work}
\label{section: conclusion}

In this paper, we conduct a rigorous theoretical treatment to the influence maximization from samples (IMS) problem under both IC and LT models, and provide several end-to-end IMS algorithms with
constant approximation guarantee.
We also provide novel and efficient algorithms for network inference with weaker assumptions.

There are many future directions to extend and improve this work.
First, our IMS algorithms require a large number of samples (though polynomial) since we have to estimate edge probabilities to a very high accuracy.
It is very interesting to investigate how to improve the sample complexity by leveraging sparsity and different importance of edges in the networks.
Second, our samples contain activation sets at every step. 
One can further study how to do IMS when we only observe the final activation set.
Other directions include studying IMS for other stochastic diffusion models (for example, the cumulative activation model in \citep{ShanCLSZ19}), relaxing the independent seed node sampling assumption, 
and going beyond influence maximization to study other optimization tasks directly from data samples.

% Manual newpage inserted to improve layout of sample file - not
% needed in general before appendices/bibliography.

%\newpage

\appendix
\section{Comparing Assumptions}
%    \section{Comparing Assumptions}
We summarize the assumptions used in \citep{NetrapalliS12,NarasimhanPS15} below and show that they are strictly stronger than our assumptions.

\begin{assumption}[Assumptions in \citealt{NetrapalliS12}]
	For some parameters $\alpha,\beta\in(0,1)$,
	\begin{enumerate}
		\item $p_{uv}\geq\beta$ for all $(u,v)\in E$.
		\item (Correlation decay) $\sum_{u\in N^{in}(v)}p_{uv}<1-\alpha$ for all $v\in V$.
		\item $q_u d_v<1/2$ for all $u,v\in V$.
	\end{enumerate}
\end{assumption}

\begin{assumption}[Assumptions in \citealt{NarasimhanPS15}]
	For some parameters $\beta\geq\alpha\in(0,1/2)$ and $\gamma\in(0,1)$,
	\begin{enumerate}
		\item $p_{uv}\geq\beta$ for all $(u,v)\in E$.
		\item $1-\prod_{u\in N^{in}(v)}(1-p_{uv})\leq 1-\alpha$ for all $v\in V$.
		\item $\gamma\leq q_u\leq 1-\gamma$ for all $u\in V$.
	\end{enumerate}
\end{assumption}

\begin{lemma}
	$\ap(v)\leq 1-\prod_{u\in N^{in}(v)}(1-p_{uv})\leq \sum_{u\in N^{in}(v)} p_{uv}$.
\end{lemma}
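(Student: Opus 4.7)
The plan is to handle the two inequalities in the chain separately, each by a short and elementary argument.

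For the second inequality $1 - \prod_{u \in N^{in}(v)}(1-p_{uv}) \le \sum_{u \in N^{in}(v)} p_{uv}$, I would first establish the auxiliary bound $\prod_{i=1}^{n}(1-x_i) \ge 1 - \sum_{i=1}^{n}x_i$ for $x_1,\ldots,x_n \in [0,1]$ by induction on $n$. The base case $n=1$ is an equality; for the inductive step, multiplying the hypothesis by $(1-x_{n+1}) \ge 0$ and discarding the nonnegative cross-term $x_{n+1}\sum_{i \le n}x_i$ yields the $(n+1)$-case. Specializing to $x_u = p_{uv}$ and rearranging gives the claim.

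For the first inequality $\ap(v) \le 1 - \prod_{u \in N^{in}(v)}(1-p_{uv})$, the plan is to pass to the standard live-edge representation of the IC model, where one independently declares each edge $(u,v)$ ``live'' with probability $p_{uv}$. Under this coupling, $v \in S_1$ occurs exactly when either $v \in S_0$ or some $u \in N^{in}(v) \cap S_0$ has a live edge to $v$. Using the product-distribution assumption on $\mathcal{D}$ (so that all seed-inclusion events are mutually independent) together with the independence of the live-edge coins from each other and from the seed draws, I would express $\Pr[v \notin S_1]$ as a product over in-neighbors and compare it factor-by-factor with $\prod_{u}(1-p_{uv})$, invoking the pointwise inequality $1 - q_u p_{uv} \ge 1 - p_{uv}$ (which follows from $q_u \in [0,1]$).

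The main obstacle is the bookkeeping around $v$ itself being a seed: the natural expansion of $\Pr[v \notin S_1]$ comes out as $(1-q_v)\prod_{u \in N^{in}(v)}(1 - q_u p_{uv})$, which carries an extra $(1-q_v)$ factor outside the in-neighbor product. Handling this cleanly requires treating $v$'s own seed event on the same footing as the in-neighbors' seed events and folding it into the product structure, so that the comparison to $\prod_{u}(1-p_{uv})$ reduces uniformly to the pointwise bound $1 - q_u p_{uv} \ge 1 - p_{uv}$. Once this bookkeeping is done, the two inequalities chain together to yield the stated claim.
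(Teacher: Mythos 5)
Your treatment of the second inequality is exactly the paper's: the same auxiliary bound $\prod_{i}(1-x_i)\geq 1-\sum_i x_i$, proved by the same induction, specialized to $x_u=p_{uv}$. For the first inequality, the mechanism you identify --- expand $\Pr[v\notin S_1]$ as a product over independent events and compare factor-by-factor via $1-q_u p_{uv}\geq 1-p_{uv}$ --- is also the paper's mechanism.

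The gap is precisely at the step you flag as ``the main obstacle'' and then claim can be handled by folding $v$'s own seed event into the product. It cannot. The correct expansion is $\Pr[v\notin S_1]=(1-q_v)\prod_{u\in N^{in}(v)}(1-q_u p_{uv})$, and you need this to be at least $\prod_{u\in N^{in}(v)}(1-p_{uv})$; the in-neighbor factors compare the right way, but the extra factor $1-q_v\leq 1$ pushes the product \emph{down}, and there is no matching factor on the right-hand side to absorb it (by the paper's convention $v\notin N^{in}(v)$ and $p_{vv}=0$). No bookkeeping rescues this: with a single in-neighbor $u$, $q_v=q_u=1/2$ and $p_{uv}=0.1$, one gets $\ap(v)=1-0.5\cdot 0.95=0.525$ while $1-\prod_u(1-p_{uv})=0.1$, so the first inequality is simply false under the paper's literal definition of $\ap(v)$ (which includes the event $v\in S_0\subseteq S_1$). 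The paper's own one-line proof writes $\ap(v)=1-\prod_{u\in N^{in}(v)}(1-q_u p_{uv})$, silently dropping the $(1-q_v)$ factor; in effect it proves the bound for $\ap(v\cond\bar v)$ rather than $\ap(v)$. So your instinct that the $(1-q_v)$ factor is the crux is right, but the proposed resolution would fail; the statement needs either the conditional quantity $\ap(v\cond\bar v)$ or an added $q_v$ term on the right-hand side.
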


\begin{proof}
	The first inequality follows from $\ap(v)=1-\prod_{u\in N^{in}(v)}(1-q_up_{uv})\leq 1-\prod_{u\in N^{in}(v)}(1-p_{uv})$,
	since $q_u\leq 1$ for all $u\in V$.
	The second inequality follows from the claim below.
	\begin{claim}
		For any $x_1,\cdots,x_n\in[0,1]$, $\prod_{i=1}^{n}(1-x_i)\geq 1-\sum_{i=1}^{n}x_i$.
	\end{claim}
	{\em Proof of the Claim.\ }
	The claim holds trivially when $n=1$.
	Assume that the claim holds for any $k<n$.
	Then,
	\[ \prod_{i=1}^{n}(1-x_i)=\prod_{i=1}^{n-1}(1-x_i)(1-x_n)\geq(1-\sum_{i=1}^{n-1}x_i)(1-x_n)\geq 1-\sum_{i=1}^{n}x_i. \]
	The two inequalities both hold by induction.
\end{proof}

%% The Appendices part is started with the command \appendix;
%% appendix sections are then done as normal sections
% \appendix

% \section{Sample Appendix Section}
% \label{sec:sample:appendix}
% Lorem ipsum dolor sit amet, consectetur adipiscing elit, sed do eiusmod tempor section \ref{sec:sample1} incididunt ut labore et dolore magna aliqua. Ut enim ad minim veniam, quis nostrud exercitation ullamco laboris nisi ut aliquip ex ea commodo consequat. Duis aute irure dolor in reprehenderit in voluptate velit esse cillum dolore eu fugiat nulla pariatur. Excepteur sint occaecat cupidatat non proident, sunt in culpa qui officia deserunt mollit anim id est laborum.

%% If you have bibdatabase file and want bibtex to generate the
%% bibitems, please use
%%
 \bibliographystyle{elsarticle-harv} 
 \bibliography{OPSS-journal}

%% else use the following coding to input the bibitems directly in the
%% TeX file.

% \begin{thebibliography}{00}

% %% \bibitem{label}
% %% Text of bibliographic item

% \bibitem{}

% \end{thebibliography}
\end{document}